\declaretheoremstyle[
    spaceabove=10pt,
    spacebelow=10pt,
    headfont=\itshape,
    bodyfont=\normalfont,
    qed=$\diamond$
]{myremark}
\declaretheoremstyle[
    spaceabove=10pt,
    spacebelow=10pt,
    headfont=\bfseries,
    bodyfont=\normalfont,
    qed=$\triangle$
]{myexample}
\declaretheorem[name=Corollary]{corollary}
\declaretheorem[name=Lemma]{lemma}
\declaretheorem[style=myremark,name=Remark]{remark}
\declaretheorem[style=myexample,name=Example]{example}
\NewDocumentEnvironment{assumption}{m o}
  {
    
    \IfNoValueTF{#2}{\assumpx}{\assumpx[#2]}
  }
  {\endassumpx}
\newtheoremstyle{manual}
  {}
  {}
  {}
  {}
  {\bfseries}
  {.}
  {.5em}
  {\thmnote{#3}}
\theoremstyle{manual}
\newtheorem*{exampleinner}{}
\newenvironment{examplecont}[1]{%
  \ifhmode\par\fi\vspace{0pt}%
  \pushQED{\qed}%
  \begin{exampleinner}[Example~\ref{#1} (Continued)]%
  \ignorespaces
}{%
  \popQED
  \end{exampleinner}%
  \ignorespacesafterend
}
\newcommand{\ind}{\perp\!\!\!\perp} 
\newcommand{\kmax}{k\text{-}\!\max}
\title{Making Interpretable Discoveries from Unstructured Data: A High-Dimensional Multiple Hypothesis Testing Approach}
\date{First Draft: October 2025\\This Draft: July 2026}
\author{Jacob Carlson\\Harvard University\thanks{Email: \texttt{jacob\_carlson@g.harvard.edu}. I am grateful to Isaiah Andrews, Melissa Dell, Neil Shephard, Rahul Singh, Elie Tamer, Davide Viviano, and participants in the Harvard econometrics workshop for their feedback and suggestions. I thank Harvey Barnhard, Andrew Kao, and Karthik Tadepalli for early feedback and encouragement. There are no relevant financial relationships or other potential conflicts of interest to disclose that relate to the research described in this paper.}\\
}
\begin{document}

\maketitle
\begin{center}
\vspace{-2\topsep}
\href{https://jscarlson.github.io/assets/carlson_discovery.pdf}%
       {\fbox{Please click here for the latest version}}
\vspace{\topsep}
\end{center}

\begin{abstract}
Social scientists are increasingly turning to unstructured datasets to unlock new empirical insights, e.g., estimating descriptive statistics of or causal effects on quantitative measures derived from text, audio, or video data.  In many settings, unsupervised analysis is of primary interest, in that the researcher does not want to (or cannot) manually pre-specify all important aspects of the unstructured data to measure; they are interested in ``discovery.'' This paper proposes a general and flexible framework for pursuing such discovery from unstructured data in a statistically principled way. The framework leverages recent methods from the literature on AI interpretability to map unstructured data points to high-dimensional, sparse, and interpretable ``concept embeddings''; computes statistics from these concept embeddings for testing interpretable, concept-by-concept hypotheses; performs selective inference on these hypotheses using algorithms validated by new results in high-dimensional central limit theory, producing a selected set (``discoveries''); and both generates and evaluates human-interpretable natural language descriptions of these discoveries. The proposed framework has few researcher degrees of freedom, is robust to data snooping and other post-selection inference concerns, and facilitates fast and inexpensive sensitivity analysis and replication. Applications to recent descriptive and causal analyses of unstructured data in empirical economics are explored. 
\end{abstract}

\section{Introduction}

Empowered by recent developments in machine learning and AI, researchers in the social sciences are increasingly leveraging sources of unstructured data---such as text, audio, images, and videos---in quantitative analyses. In economics, interest in unstructured data sources is especially widespread, ranging from using speech recordings from FOMC meetings to better understand monetary policy \citep{gorodnichenko_voice_2023}; to using videos of start-up pitches to study entrepreneurship and investment \citep{hu_persuading_2025}; to using open-ended survey questions to probe the economic behavior and beliefs of individuals \citep{haaland_understanding_2024}; to using mug shots to study judicial decisionmaking \citep{ludwig_machine_2024}; to using visual art or written narratives to infer long-run living standards \citep{gorin_state_2025,lagakos_american_2025}; to using qualitative interview transcripts to better understand the impacts of RCT treatments \citep{bergman_creating_2024,krause_impact_2025}.

Unstructured data afford new opportunities for measuring social phenomena that would be otherwise unmeasurable. In some settings, theory or prior empirical research suggests important concepts that unstructured data newly facilitates measures of; in others, the promise of unstructured data is that it allows for the discovery of entirely novel concepts (and associated measures) with social or economic relevance (c.f., \cite{ludwig_machine_2024}). When modern AI/ML methods are brought to bear on these goals, the former typically falls under the heading of ``supervised learning,'' whereby the researcher makes predictions of their ex-ante known quantity of interest using the unstructured data as features, and the latter typically falls under the heading of ``unsupervised learning,'' which is less amenable to a simple prediction-based framework, and demands that AI/ML methods uncover latent structure in the unstructured data with minimal researcher input. Though new statistical and econometric frameworks have recently become available for interpretable and statistically principled analyses of unstructured data in the supervised learning setting (e.g., \cite{angelopoulos_prediction-powered_2023,ludwig_large_2024,carlson_unifying_2025}), it is an open question as to how one best performs interpretable and statistically principled analyses of unstructured data in unsupervised learning settings emphasizing discovery. The framework proposed in this paper provides one possible answer to this open question.

Consider three motivating examples for the proposed framework, one focused on descriptive analysis, and another two focused on causal inference. All examples consider text as the unstructured data modality of interest, though the proposed framework applies more generally to other data modalities.

\begin{example}[Information treatment RCT, e.g., \cite{bursztyn_justifying_2023}]\label{ex:rct}
A researcher is interested in running a randomized controlled trial (RCT) with an information treatment. A key outcome of the experiment is the participant's subjective reaction to the information treatment, and the researcher would like to discover any systematic differences in reactions across the treatment and control groups. As such, for all participants, the researcher collects text responses to open-ended questions eliciting reactions to the experimental intervention, and is interested in learning any relevant causal effects on the presence of concepts surfaced in these text responses. 
\end{example}

\begin{example}[Open-ended survey analysis, e.g., \cite{stantcheva_why_2024}]\label{ex:desc}
A researcher is interested in understanding the attitudes of some population of interest towards some particular (economic) policy. To learn about population attitudes without priming respondents to discuss particular topics, the researcher runs a survey on a representative sample from the population of interest that asks many open-ended questions about attitudes and beliefs, to which survey respondents reply in text. The researcher is interested in describing any frequently discussed concepts that are surfaced from this open-ended survey. 
\end{example}

\begin{example}[RCT on human-AI interaction, e.g., \cite{noy_experimental_2023}]\label{ex:hci}
A researcher is interested in understanding the causal effects of AI adoption on worker productivity and creativity. As such, the researcher runs an experiment where crowdsourced participants are asked to complete a small professional writing task, with one half randomly provided access to an instruction-tuned LLM (the treatment group). The researcher is interested in discovering any interpretable features of the writing that differ systematically across treatment and control.
\end{example}

To address these examples---among many others---this paper proposes a general framework for conducting interpretable and rigorous statistical inference on unstructured data per the following four steps:
\begin{enumerate}
    \item \textit{Creating concept embeddings}: The researcher starts out with an unstructured dataset they are interested in analyzing. Each unstructured data point in the dataset is mapped to a ``concept  embedding,'' which is a binary vector.\footnote{The name ``concept embedding'' is chosen to be consistent with other parts of the CS and AI interpretability literatures that make use of similar objects, e.g., \cite{bhalla_interpreting_2024,jiang_interpretable_2025}.} Each element of the vector is an indicator for the presence of a human-interpretable feature---or, informally, a ``concept''---contained in the unstructured data point. Though many choices of concept embedding are compatible with the framework, we recommend repurposing pretrained \textit{sparse dictionary learning} models (newly popular in the AI interpretability literature) as concept embedders. By design, these concept embedding vectors are \textit{high-dimensional} and are intended to catalog a vast number of human-interpretable concepts. For example, a concept embedding implemented with a sparse dictionary learning model can be thought of as an inventory of all concepts that a LLM ``needed to learn'' in order to excel at pretraining over a massive internet-scale corpus. 
    
    \begin{examplecont}{ex:rct}
    Each experimental participant's post-treatment text response is passed through a LLM equipped with pretrained sparse autoencoder (SAE), a popular implementation of a sparse dictionary learning model \citep{bricken2023monosemanticity}. The SAE's feature activations are pooled and thresholded at zero to serve as a concept embedding. There is one concept embedding of each text response (and one text response per experimental participant), each of which is of dimension on the order $p\approx 10^5$, though there are only on the order of $n\approx 10^2$ participants across both treatment and control groups. Each concept in the embedding can be given an English language description through a process called ``automatic interpretation'' (discussed in step 4), e.g., concept number 1 in the concept embedding may indicate ``the presence of sentences ending with exclamation points,'' concept number 2 may indicate ``the discussion of politics,'' and so on.\footnote{For more examples of feature descriptions for, e.g., the Gemma Scope family of SAEs \citep{lieberum_gemma_2024}, see:  \texttt{https://www.neuronpedia.org/gemma-scope}.}
     \end{examplecont}
    \item \textit{Formulating concept-level hypotheses}: The researcher, using their new dataset of concept embeddings, formulates $p$ concept-specific hypothesis tests (based on $p$ concept-specific parameters of interest) and associated test statistics---one for each concept cataloged in the concept embedding. The researcher has great flexibility in designing meaningful concept-level parameters, hypotheses, and test statistics at this stage of the framework, which may employ data sources beyond the concept embeddings themselves.
    \begin{examplecont}{ex:rct}
    The researcher computes a  difference in means across treatment and control groups for each concept indicator (a binary entry in the concept embedding) to test a total of $p$ null hypotheses of zero average treatment effect (ATE) on the probability of a given concept appearing in a text response. The $p$ parameters of interest are the $p$ concept-by-concept ATEs.
     \end{examplecont}
    \item \textit{High-dimensional multiple hypothesis testing}: All $p$ concept-specific hypotheses are tested using statistical procedures with high-dimensional selective error guarantees validated by novel results in high-dimensional multiple hypothesis testing and central limit theory. These results prove that the algorithms of \cite{romano_control_2007} for controlling the $k$ familywise error rate ($k$-FWER)---the probability of making $k$ or more false rejections---are valid in high-dimensions under mild regularity conditions, for small choices of $k$. Control of the $k$-FWER for $k>1$ is less conservative than simple FWER control---control over the probability of making \textit{any} false rejections\footnote{Note that FWER is equivalent to $k$-FWER when $k=1$.}---leading to more powerful tests that permit the researcher to make more ``discoveries'' (rejections of true nulls). Moreover, the researcher may invert (the single-step version of) these tests to form \textit{generalized simultaneous confidence intervals}: the probability that $k$ or more parameters fall outside the simultaneous random intervals is less than or equal to a desired level $
    \alpha$.
    \begin{examplecont}{ex:rct}An algorithm based on the work of \cite{romano_control_2007} is deployed to simultaneously test each of the $p$ null hypotheses of no average treatment effect, guaranteeing $k$-FWER control at level $\alpha$ (in large samples) for a researcher chosen value of $k$ (that is sufficiently small), even though $p \gg n$. Critical values produced by the first (single) step algorithm may also be used to form generalized simultaneous confidence intervals for the difference-in-means point estimates of the $p$ concept-specific ATE parameters.
     \end{examplecont}
    \item \textit{Local automatic interpretation}: The output of the framework so far is a selected set of rejected, concept-level null hypotheses (as well as corresponding generalized simultaneous confidence intervals and point estimates), a.k.a., ``discoveries.'' However---much like the ``topics'' learned by LDA models---the ``concepts'' encoded by sparse dictionary learning methods do not come with natural language descriptions at first: we know, e.g., the first entry of the concept embedding likely encodes a human-interpretable concept, and that we rejected the null hypothesis associated with it, but we do not yet have an English description for what that concept is. As such, in the final stage of the framework, the researcher generates natural language descriptions of the concepts in the concept embedding using a variant of a popular \textit{automatic interpretation} (or ``autointerp'') method that is tailored to the unstructured data distribution being explored.\footnote{Many pretrained sparse dictionary learning models come with ``off-the-shelf'' feature descriptions, but they are usually of poor quality, and benefit from refinement based on information in the dataset of interest; this is discussed further in Section \ref{sec_auto}.} Further, we propose a new statistical formalization of a workhorse autointerp evaluation method from \cite{paulo_automatically_2024} in order to  infer the quality of these autointerp descriptions specifically for the data distribution being analyzed. Because these procedures are statistically and conceptually tethered to the unstructured data distribution under inference, we call this final step \textit{local} automatic interpretation. 
    \begin{examplecont}{ex:rct}
    The researcher now has in hand some subset of the $p$ null hypotheses that have been rejected: a set of indices associated with particular elements of the concept embeddings. In order to provide a natural language description for each of the selected concepts for which there is evidence of a nonzero average treatment effect (which are otherwise just identified by their numeric index), a dedicated ``explainer LLM'' (which may be closed- or open-source) reads examples of texts in the researcher's dataset for which a given feature is present (or presents with high SAE activations) and generates an English interpretation for what concept is encoded. The credibility of this LLM-derived description of the concept is then evaluated with a specially designed classification task on a held-out evaluation sample of the researcher's unstructured dataset using newly proposed estimators.  
     \end{examplecont}
\end{enumerate}

The final output of the proposed framework is therefore: a set of concept-level discoveries (rejections of nulls) that come with generalized familywise error rate guarantees (or, point estimates for associated parameters of interest and generalized simultaneous confidence intervals); natural language descriptions of those discoveries; and scores measuring the quality of those descriptions. A schematic for the framework is depicted in Figure \ref{fig:schematic}.
\begin{examplecont}{ex:rct}
The researcher discovers that the null of no average treatment effect is rejected for concepts indexed $\{4, 101, 5030, ...\}$. Concept 4 is described by an explainer LLM as activating on ``mentions of politics'' and this description is estimated as having a high ``quality'' score, i.e., the information treatment credibly causes increased discussion of politics; and so on.  
 \end{examplecont}

\begin{figure}[h!]
    \centering
    \includegraphics[width=0.65\textwidth]{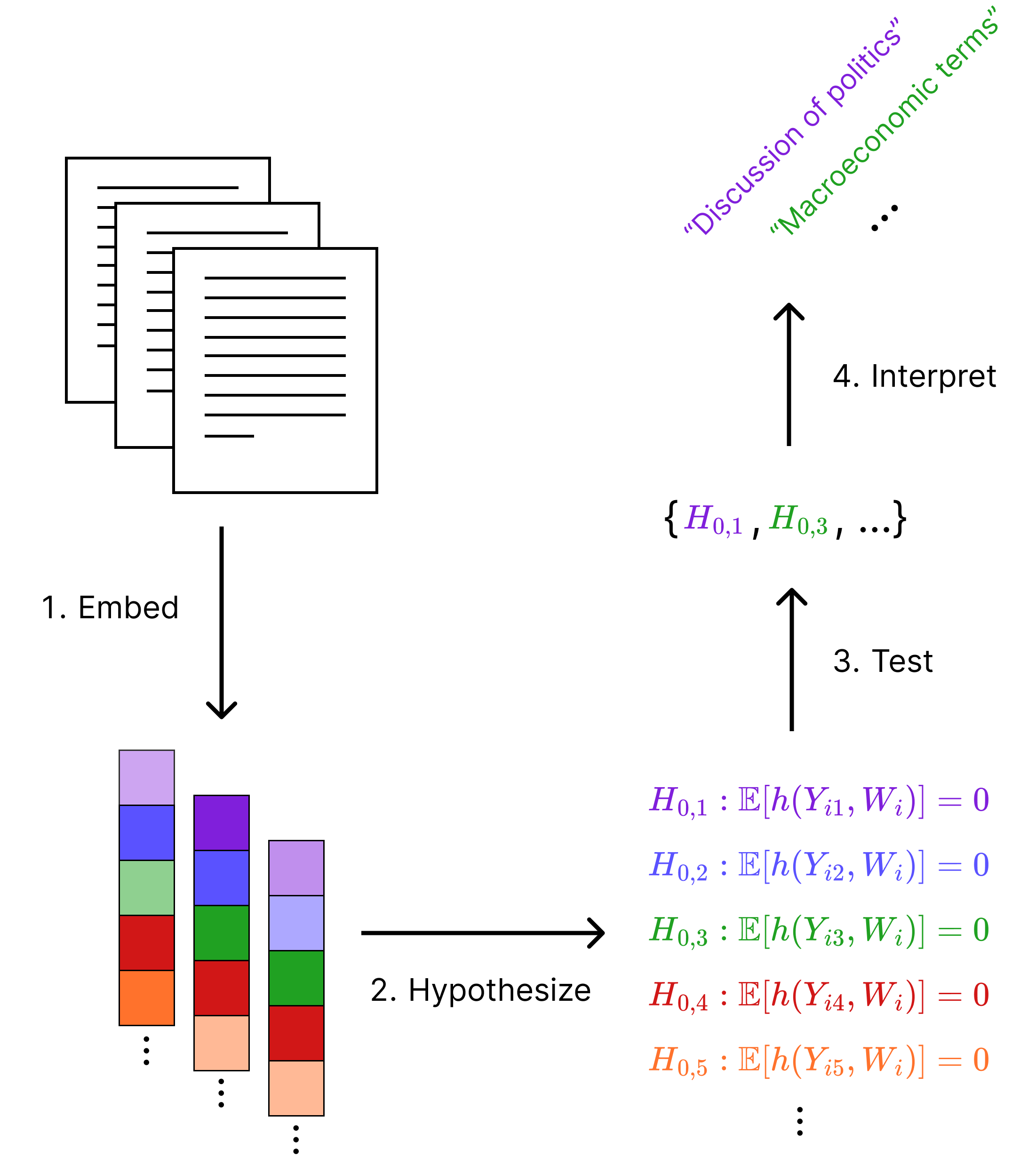}
    \caption{The proposed framework for discovery from unstructured data.}
    \label{fig:schematic}
\end{figure}

The status quo for many researchers interested in interpretable discovery from unstructured data involves an iterative and discretionary ``coding'' process, comprised of manually perusing the unstructured data to glean concepts or topics that appear salient (perhaps in collaboration with LLMs) and then documenting them in a ``codebook'' to operationalize quantitative measurement.\footnote{For just a few very recent examples of this approach in ``top five'' economics journals, see, e.g., \cite{andre_subjective_2022,geiecke_conversations_2024}.} (See, e.g., \cite{haaland_understanding_2024} for a more detailed characterization of this process in recent empirical economics research.) Though intuitive, such an approach to discovery suffers from several distinct scientific and statistical problems, all of which the proposed framework aims to address:
\begin{enumerate}
    \item \textit{The streetlight effect}: The ``streetlight effect'' describes the cognitive bias that occurs when researchers only think to measure that which is easiest or most obvious to measure (idiomatically speaking, the researcher is ``only searching for their lost keys where the light is shining.''\footnote{The same problem is sometimes cast as failing to measure or discover important ``unknown unknowns.''}) Even if a researcher is well-intentioned, they may simply fail to notice scientifically important patterns that appear in unstructured datasets when manually inspecting unstructured data. In the proposed framework, AI interpretability methods are repurposed to measure the presence of a vast number of human-interpretable concepts in an unstructured dataset---say, every concept or feature that a modern LLM needed to learn to excel at pretraining on an internet-scale corpus---reducing the possibility that there are any truly important measures of concepts ignored by the researcher (and thus, in a sense, shining a much more powerful streetlight on the data).
    \item \textit{Researcher degrees of freedom}: Because of the incentives researchers face, any largely opaque, ad hoc, or discretionary process of defining concepts for later measurement runs the risk of being contaminated by motivated data mining, e.g., only defining and reporting measures that support a preordained conclusion. As such, any procedure for conducting discovery from unstructured data with too many researcher degrees of freedom suffers from less credibility than one that ties the researcher's hands in some transparent way. The proposed framework is automatic in execution, and does not \textit{require} a ``human-in-the-loop'' to direct what is and is not measured: in other words, it has very few researcher degrees of freedom. The few choices a researcher must make in the proposed framework, such as a choice of concept embedding, can also easily be preregistered, further hampering the possibility that a researcher applies this framework with many different models to select on favorable results. Even without preregistration, cherry-picking across sparse dictionary learning models for the purposes of reverse engineering a desired result is unlikely to succeed given that, e.g., SAE features tend to exhibit some degree of universality across architectures (c.f., \cite{wang2025towards})---another reason we advocate their default adoption as concept embeddings.
    \item \textit{Data snooping and post-selection inference}: Even when data snooping is not motivated by some desired conclusion, both defining and taking measurements based on the same observed dataset realization leads to well known problems in \textit{post-selection inference}, which invalidates downstream statistical conclusions if handled improperly.\footnote{See, e.g., \cite{kuchibhotla_post-selection_2022} for more on the problems of and solutions to post-selection inference.} The proposed framework relies on using data-independent, ex-ante curation of a high-dimensional set of meaningful concept measurements---coupled with appropriate high-dimensional selective inference procedures---to achieve many of the scientific benefits of exploration without running into the post-selection inference problems introduced by data snooping.
    \item \textit{Replicability and sensitivity}: Even for thoughtful discretionary exploratory analyses of unstructured data, replication is often costly (or infeasible, without proper documentation), and assessing sensitivity to alternative analysis choices is burdensome. By contrast, the \textit{automaticity} of the proposed framework makes it easy to reproduce discoveries and their descriptions and analyze the sensitivity of results with respect to those few choices a researcher must make (e.g., choice of concept embedding, or explainer LLM, or the $k$ in $k$-FWER control). Further facilitating reproducibility and sensitivity analysis, the proposed framework is inexpensive both in terms of compute cost and time investment: all empirical examples in this paper were computed on Google Colab notebooks using a single A100 GPU, access to which only costs \$10 per month (and is free for students and academic faculty), and takes approximately one hour to run using this hardware.\footnote{As of April 2026.}
\end{enumerate}

Though the use of sample splitting in the process of discovery from unstructured data (as advocated for in, e.g., \cite{egami_how_2022,modarressi_causal_2025}) is an important tool for addressing problems stemming from post-selection inference, it does not directly address the other problems discussed above. Moreover, sample splitting reduces the statistical power available for rejecting nulls, which may already be precious in the small (especially experimental) datasets common to the social sciences.\footnote{Note that, in most applications concerning discovery from unstructured data that involve human discretion over choice of measurements, efficiency cannot be recovered from cross-fitting.} The proposed framework also incurs a reduction in power due to multiple hypothesis testing corrections, though such corrections would still be necessary under sample splitting if many hypotheses were transported across samples for the purposes of evaluation---as may often be of interest in settings emphasizing discovery.

A core intended contribution of this paper is its development of a general-purpose framework for making statistically and scientifically principled discoveries from unstructured data, offering solutions to problems with status quo practice. However, in order to operationalize this framework, we also make the following theoretical contributions, which themselves may be of independent interest:
\begin{enumerate}
    \item Extensions of two frontier econometric toolkits: extending the high-dimensional ``many approximate means'' framework of \cite{belloni_high-dimensional_2018} to the study of the $k$-th largest coordinate of sums of independent random vectors, and extending the application of the $k$-FWER controlling procedures of \cite{romano_control_2007} to the high-dimensional setting.
    \item A new conceptual and statistical formalization of a popular framework for autointerp scoring from \cite{paulo_automatically_2024}, including novel estimation procedures.
\end{enumerate}

The remainder of this paper is structured as follows: Section 2 discusses related literature; Section 3 describes the framework in detail; Section 4 applies the framework to three recent papers making use of unstructured data for discovery \citep{bursztyn_justifying_2023,stantcheva_why_2024,noy_experimental_2023}; and Section 5 concludes.

\section{Related Literature}\label{sec_lit}

The framework proposed in this paper is related to recent works from literatures spanning economics, statistics, and computer science.

\paragraph{Econometric methods for unstructured data.}
Motivated by the capabilities of modern AI/ML methods for learning from unstructured economic data (see, e.g., \cite{mullainathan_machine_2017,gentzkow_text_2019,ash_text_2023,dell_deep_2025}), new econometric and statistical frameworks have been developed to facilitate principled statistical inference on low-dimensional features (predictions) learned from unstructured datasets under supervision, e.g., \cite{angelopoulos_prediction-powered_2023,ludwig_large_2024,carlson_unifying_2025,rambachan_program_2024}. This most recent wave of econometric literature emphasizes nonparametric frameworks compatible with black-box AI models and that center debiasing methods, as opposed to model-based approaches, e.g.,  \cite{gentzkow_measuring_2019,battaglia_inference_2024}. Though the framework presented in this paper is concerned with principled inference on unstructured data without making parametric assumptions, it differs from this existing econometric literature in that its goal is primarily unsupervised discovery, as opposed to supervised detection, using AI/ML models. 

The proposed framework is most similar to existing works such as \cite{ludwig_machine-learning_2017,modarressi_causal_2025}, both of which provide formal statistical frameworks for discovery from high-dimensional economic data. Both of these works focus specifically on causal inference in RCTs, unlike the proposed framework, which generally treats any inference question that can be posed as a hypothesis test (including causal inference based on RCT data). Both \cite{ludwig_machine-learning_2017} and \cite{modarressi_causal_2025} employ the intuition of a ``reverse regression'' to formally test whether a distribution of (high-dimensional) outcomes is the same across treatment and control groups as a means of discovery. However, both papers note that such a coarse test is often insufficient for developing scientific insight into the causal effects of treatment, motivating methods for probing \textit{what} changed across the outcome distributions. \cite{modarressi_causal_2025} specifically explores this idea for (unstructured) text data, suggesting a framework that splits the researcher's sample: in one split, a human-LLM hybrid pipeline is used to hypothesize a low-dimensional set of  ``causal themes'' that differ across treatment and control, and in the other split a human researcher scores texts according to these themes and evaluates differences in means.\footnote{As such, for the setting of inference in RCTs, the framework of \cite{modarressi_causal_2025} differs from the proposed framework along several dimensions, including: encouraging human intervention to guide the measurement of themes/concepts (as opposed to minimizing the influence of a human-in-the-loop), using LLM reasoning as a low-dimensional measurement tool (as opposed to leveraging AI interpretability methods like SAEs for high-dimensional measurement), and invoking sample splitting for valid inference (instead of relying on many independently curated concepts and multiple hypothesis testing corrections). Given the amount of human intervention involved, \cite{modarressi_causal_2025} discuss concerns pertaining to the replicability of and researcher degrees of freedom in their framework in Section 8 of their paper.}

\paragraph{High-dimensional selective inference.} There is an extensive literature on multiple hypothesis testing (or ``selective inference'' more broadly) in statistics, biostatistics, and economics (see, e.g., \cite{romano_hypothesis_2010} for a review in econometrics). The methods in this literature span asymptotic and finite sample valid frameworks, and low- and high-dimensional settings. In particular, for the problem considered in this paper, the literature on asymptotically valid, high-dimensional selective inference is most relevant, as such methods permit making very few assumptions about the data (c.f., assuming independence or PRDS of p-values in the case of \cite{benjamini_controlling_1995}) and readily incorporate resampling methods that improve power (relative to guarding against worst-case dependence, c.f., Bonferroni corrections or the method of \cite{benjamini_control_2001}). 

The new results for high-dimensional central limit theory and multiple hypothesis testing appearing in this paper are derived by combining and extending recent work on high-dimensional central limit theory and Gaussian multiplier bootstraps for order statistics \citep{belloni_high-dimensional_2018,chernozhuokov_improved_2022,ding_gaussian_2025}. The formal results on high-dimensional central limit theory in this paper are best viewed as extensions of some core results of the ``many approximate means'' framework of \cite{belloni_high-dimensional_2018}, moving the framework from the study of the maximal order statistic to the study of the \textit{largest} order statistics by leveraging new results from \cite{ding_gaussian_2025}. 

These high-dimensional CLTs in turn support new high-dimensional ($p\gg n$) analysis of the $k$-FWER controlling step-down procedures of \cite{romano_control_2007,romano_formalized_2008}---which were originally studied in a low-dimensional (fixed $p$) setting---leading to theorems that show such procedures remain valid for asymptotic $k$-FWER control in high-dimensions under several important technical conditions (chiefly, that $k$ is fixed, or ``sufficiently small''). These results in asymptotic high-dimensional $k$-FWER control complement existing methods for asymptotic high-dimensional control of the FWER \citep{belloni_high-dimensional_2018} and asymptotic high-dimensional control of the FDR under weak dependence \citep{liu_phase_2014,belloni_high-dimensional_2018}.

\paragraph{Hypothesis generation.}
There is a growing literature in economics and computer science on hypothesis generation at a ``pre-scientific'' stage of empirical inquiry \citep{ludwig_machine_2024}. From the perspective of this literature, the proposed framework uses sparse dictionary learning models as a high-dimensional hypothesis generator, as similarly advocated for by \cite{movva_sparse_2025,peng_use_2025}.\footnote{The framework of \cite{movva_sparse_2025} and the framework proposed in this paper differ somewhat in how they approach hypothesis generation with SAEs. \cite{movva_sparse_2025} introduces a downstream quantity of interest and discovers (selects) concepts by how well they predict this quantity under hyperparameters specified by the researcher, whereas the proposed framework uses novel methods for high-dimensional multiple hypothesis testing to discover relevant concepts. \cite{movva_sparse_2025} also emphasizes \textit{learning} SAE features from dense embeddings of the researcher's dataset, whereas the proposed framework is designed to leverage pretrained, off-the-shelf SAEs that require no further training, and is thus compatible with datasets of only a few hundred data points (common in the social sciences).} However, unlike other works in this literature that suggest a second ``scientific'' stage of data collection and analysis to formally evaluate the fruits of statistically informal hypothesis generation, the proposed framework considers statistically principled hypothesis generation and hypothesis testing in ``one shot,'' given access to only one (potentially small) unstructured dataset.

\paragraph{AI interpretability.}
There is a large literature in computer science on machine learning interpretability (see, e.g., \cite{doshi-velez_towards_2017}). The machine learning interpretability methods leveraged by the present framework originate from a nascent though highly active literature known as ``mechanistic interpretability,'' which seeks to develop methods for interpreting LLM behavior via quantitative analyses of model internals (e.g., activations, weights). In particular, the sparse dictionary learning methods for LLM interpretability implemented in this framework are called ``sparse autoencoders'' (SAEs), and were developed in \cite{bricken2023monosemanticity,templeton2024scaling}.\footnote{SAEs are very wide autoencoders inserted at various layers of an LLM that are trained to intercept and reconstruct activations from model internals (such as the residual stream, MLPs or the like) under a sparsity-inducing penalization on a reconstruction loss. Under the ``linear representation hypothesis'' and ``superposition hypothesis'' (see, e.g., \cite{bricken2023monosemanticity} for more discussion), these sparse autoencoders are thought to act as an overcomplete basis of the space of concepts in text leveraged by a LLM to make next token predictions, encouraging learned autoencoder latents to have monosemantic interpretations.} \!\footnote{There has been much debate about the success of SAEs in pursuit of AI interpretability (e.g., \cite{leask_sparse_2025}), though most criticisms still support the notion that SAEs and other sparse dictionary learning methods have a comparative advantage in discovering as opposed to detecting concepts of interest in text \citep{peng_use_2025}---which is how they are applied in this framework.} Though SAEs are a promising default choice for creating concept embeddings \citep{peng_use_2025,jiang_interpretable_2025}, other sparse dictionary learning methods that have been proposed as competitors to SAEs (e.g., transcoders \citep{paulo_transcoders_2025}) may similarly serve as concept embedding models, and are similarly compatible with the statistical framework proposed in this paper. The proposed framework is also related to recent independent efforts in ``data-centric interpretability,'' such as \cite{jiang_interpretable_2025}, which leverages SAE activations to create interpretable concept embeddings of text in order to facilitate descriptive analysis of datasets relevant to LLM behavior.

SAEs and similar methods learn quantitative measures of interpretable features, but do not on their own generate natural language (e.g., English) descriptions for these learned features. As such, a related literature on ``automatic interpretability'' (or ``autointerp'') methods has become active, which seeks to coherently use LLMs to describe the features discovered by sparse dictionary learning methods at scale. Recent important papers in this literature include: \cite{bills2023language,shaham_multimodal_2024,paulo_automatically_2024,rajamanoharan_jumping_2024}. Specifically, the autointerp framework proposed in this paper adapts and statistically formalizes insights from the ``detection scoring'' method of \cite{paulo_automatically_2024}.

Though the empirical examples considered in this paper handle one of the most popular unstructured data types---text---for which sparse dictionary learning-based interpretability methods are the most mature, sparse dictionary learning techniques have been successfully applied to many other modalities, including audio and images \citep{abdulaal_x-ray_2024,bhalla_interpreting_2024,fry2024towards,daujotas2024case,pluth_sparse_2025}. Future work would seek to apply the proposed framework to unstructured datasets in these other data modalities.

\section{Framework}

We now consider each step of the proposed framework in detail. In what follows, we define $[p] := \{1, \dots, p\}$, and let $x \mapsto \log x$ be the natural logarithm. Further, for any vector $s \in \mathbb{R}^p$ and a subset of indices $K \subseteq[p]$, let $s_K \in \mathbb{R}^{|K|}$ denote the subvector indexed by $K$, such that $\left(s_K\right)_q=s_{j_q}$ for $j_q \in K$ and $j_1<j_2< \cdots<j_{|K|}$. We use $s \mapsto s_{[k]}$ as notation for the function that selects the $k$-th \textit{largest} coordinate of the vector $s$.

\subsection{Creating Concept Embeddings}\label{sec_featdict}

We consider that the researcher has access to a dataset of size $n$, $\{(W_i, Z_i)\}_{i=1}^n$, which is sampled i.i.d. from some population of interest represented by distribution $P$.\footnote{That is, the fundamental source of uncertainty in this framework is sampling uncertainty, driven by an underlying sampling experiment in which units are sampled i.i.d. from $P$, and each unit is associated with a measurement $(W_i, Z_i)$.} The $Z_i \in \mathcal{Z}$ are unstructured data points and the $W_i \in \mathcal{W}$ are any other observed covariates of interest. The space $\mathcal{Z}$ is typically high-dimensional and semantically shallow, e.g., if each $Z_i$ was a $224 \times 224$ pixel image, $\mathcal{Z}$ might then be the space of all $3 \times 224 \times 224$ arrays of RGB values, or if $Z_i$ was text that was truncated up to some maximum length, $\mathcal{Z}$ might be the space of all binary matrices of a certain dimensionality, which are concatenated one-hot encodings for each word, term, or token in the text with respect to a specific pre-defined vocabulary.

\begin{examplecont}{ex:rct}
The $Z_i$ is a text response elicited for experimental participant $i$; the $W_i \in \mathcal{W} = \{0,1\}$ is a treatment status indicator for experimental participant $i$. 
 \end{examplecont}

The first stage of the framework relies on the researcher having access to a function which creates semantically rich and human-interpretable concept embeddings from unstructured data points, which we may denote as $\texttt{Embed}: \mathcal{Z} \to \{0,1\}^p$.
The researcher has freedom to implement any high-dimensional concept embedding of interest, so long as it is constructed in a way that is statistically independent of the dataset $\{(W_i, Z_i)\}_{i=1}^n$.\footnote{For example, a researcher could consider a concept embedding implemented using LLM zero-shot classification of a diverse set of Wikipedia category-based topic labels, perhaps computed using the software of \cite{asirvatham_gpt_2026}.} However, as a practical, widely applicable, and scientifically beneficial default, we suggest computing concept embeddings using \textit{sparse dictionary learning} models: by passing an unstructured data point to a deep neural network (DNN) equipped with a sparse dictionary learning model (e.g., a LLM equipped with pretrained SAE when $Z_i$ are texts); recording the feature activations from the dictionary model; and then pooling and binarizing these feature activations to form a vector of indicators for each feature in the dictionary, indexed $j \in [p]$.

There are many possible ways for a researcher to design a concept embedding based on a sparse dictionary learning model, though an especially straightforward choice is to record the SAE activations at a single layer of the DNN, setting the $j$-th entry of the concept embedding equal to 1 if the $j$-th SAE feature activated on any token of the input text, and 0 otherwise:
\[
\mathtt{Embed}(z; l)_j := \begin{cases}
&1 \text{ if SAE feature $j$ at layer $l$ activates on any token of } z, \\
&0 \text{ otherwise.}
\end{cases}
\]
Concrete design choices for the SAE architecture and layer $l$ will be discussed in the empirical examples in Section \ref{sec_empirical}.

We will use the notation $Y_i := \mathtt{Embed}(Z_i) \in \{0,1\}^p$ (suppressing information about layer $l$ from the notation going forward) for the concept embedding of an unstructured data point $i$.\footnote{Note that the concept embedding can also be thought of as a \textit{high-dimensional} ``codebook function'' in the conceptual framework of \cite{egami_how_2022}.} Each ``concept indicator'' $Y_{ij} \in \{0,1\}$ has the natural interpretation that a particular concept $j$ appeared in the unstructured data point $i$.\footnote{SAE feature activations are typically positively valued scalars, where magnitude is thought to correspond to some notion of intensity of activation. Other works in mechanistic interpretability consider using these raw activation values directly, perhaps with max or average pooling across tokens. However, presently, magnitude of activation is not a well-understood or highly interpretable quantitative property of sparse dictionary learning methods, and as such the default $\texttt{Embed}$ function advocated for in the proposed framework does not incorporate this information. That said, leveraging feature activation intensity is empirically shown to be useful for deriving good autointerpretation generation strategies; see Section \ref{sec_auto} for more on this.}

\begin{examplecont}{ex:rct}
Each text response $Z_i$ is passed through an open-source LLM equipped with a pretrained SAE. The SAE activations for each token of $Z_i$ at the $l$-th layer of the LLM are max-pooled (i.e., the maximum token activation is used as the activation summary for the entire text) and then thresholded at zero, yielding a $p\approx 10^5$-dimensional binary vector $Y_i$ for each experimental participant $i$. Each $Y_{ij} \in \{0,1\}$ is a binary indicator of a particular concept being expressed in experimental participant $i$'s text response, e.g., $Y_{i1}$ could indicate whether or not experimental participant $i$'s text response contains phrases indicating uncertainty.
 \end{examplecont}

An $\texttt{Embed}$ function implemented using a sparse dictionary learning model can be thought of as effectively producing a new, semantically rich and interpretable dataset of concept embeddings and associated covariates $\{(Y_i,W_i)\}_{i=1}^n$ derived from the original unstructured dataset $\{(Z_i,W_i)\}_{i=1}^n$. The crux of the proposed framework is that simultaneous inference on parameters derived from each concept indicator $Y_{ij}$---or each $(Y_{ij}, W_i)$ pair---is desirable because:
\begin{enumerate}
    \item Each $Y_{ij}$ is engineered to indicate the presence of a \textit{human-interpretable} concept \citep{bricken2023monosemanticity,templeton2024scaling,movva_sparse_2025,wang_your_2026} and therefore parameters derived from $Y_{ij}$ for any given $j$ (and associated hypotheses) are likewise interpretable, e.g., $E_P[Y_{ij}]$ is the probability of human-interpretable concept $j$ appearing in an unstructured data observation in population $P$.
    In practice, features learned by SAEs might not capture entirely mutually exclusive concepts in unstructured data---leading to some redundancy in hypotheses---but are still comprehensive of some very large space of concepts, i.e., SAE features can still be thought of as jointly tiling manifolds that constitute the meaningful concepts in a relevant activation space of a LLM \citep{bhalla_sparse_2026}. 
    \item Collectively, the $p$ indicators in $Y_{i}$ for any given $i$ catalog a vast space of concepts: again, heuristically, they encode the set of all concepts that a DNN (e.g., a LLM) needed to ``understand'' to excel at pretraining (e.g., next token prediction) on a massive-scale corpus of unstructured data (e.g., internet-scale text corpora). As such, simultaneous inference on parameters derived from all of these concept indicators greatly reduces the possibility that a researcher misses out on discovering important ``unknown unknowns'' in their unstructured data; the fact that these concept measures were curated in a data-independent way further shields inference from the important scientific and statistical problems associated with \textit{data snooping}.
\end{enumerate}

Towards exploiting this wealth of interpretable information, we consider formulating formal hypothesis tests of parameters (estimands) of interest in the next step of the framework.

\subsection{Formulating Concept-Level Hypotheses}\label{sec_feathypo}

With an interpretable dataset of concept embeddings and associated covariates $\{(Y_i,W_i)\}_{i=1}^n$, we may consider defining $p$ ``one-sided'' concept-level hypotheses $\{\tilde H_{0,j}\}_{j \in [p]}$, where  
\[\tilde H_{0,j}: \theta_j(P) := E_P[X_{ij}] \leq 0, \quad X_{ij} = h(W_i, Y_{ij}),\]
for some measurable function $h$, or, alternatively, a set of ``two-sided'' hypotheses of interest $\{ H_{0,j}\}_{j \in [p]}$ as
\[ H_{0,j}: \theta_j(P) := E_P[X_{ij}] = 0.\] As shorthand, we will denote $X := \{X_{i}\}_{i=1}^n$. Letting the concept-level parameter of interest $\theta_j(P)$ be an expectation of some flexible function $h$ of a concept indicator and covariates accommodates many natural estimands of interest in the social sciences, though extensions to a wider class of parameters that are associated with only asymptotically linear estimators will be discussed in detail in Section \ref{sec_mam}. 

\begin{examplecont}{ex:rct}
To infer average treatment effects on concept indicators, for treatment indicator $W_i$, fixed probability of treatment $\pi$ (bounded away from zero and one), and potential outcomes $Y_{ij}(w)$ under SUTVA,
\[
H_{0,j}: E[Y_{ij}(1) - Y_{ij}(0)] = E[X_{ij}]=E\left[ \frac{W_{i} - \pi}{\pi(1-\pi)}Y_{ij}\right] = 0.
\]
In other words, we set $X_{ij} = h(W_i, Y_{ij}) :=  \frac{W_{i} - \pi}{\pi(1-\pi)}Y_{ij}$, a Horvitz–Thompson transformation for computing a difference in means.
 \end{examplecont}

We may form test statistics for such hypotheses as
\[
T_{n,j} := \frac{1}{\sqrt{n}} \sum_{i=1}^n X_{ij}, \quad T_{n} := \frac{1}{\sqrt{n}} \sum_{i=1}^n X_{i},
\]
and taking absolute values, as appropriate.

Though inference on all $p$ hypotheses communicates rich, interpretable information about our unstructured dataset, inference in this setting is also challenging because:
\begin{enumerate}
    \item The $p$ is large, so inference on all $j \in [p]$ parameters is inherently a \textit{multiple hypothesis testing} problem.
    \item Not only is $p$ large, but for most social science applications of interest $p \gg n$ (e.g., the experiment from Example \ref{ex:rct} contains a few hundred participants, but the concept embedding contains tens of thousands of concept indicators), so simultaneous inference on $Y_{ij}$ for each $j$ is a \textit{high-dimensional} multiple hypothesis testing problem. 
    \item Each estimator, test statistic, or p-value formed from the $Y_{ij}$ for each $j \in [p]$ for the purposes of inference is plausibly statistically \textit{dependent} on every other in a complicated way, ruling out multiple hypothesis testing approaches that assume independence or specific forms of dependence (e.g., PRDS \citep{benjamini_control_2001}).
    \item Each $Y_i$ is plausibly sparse, and the intent of the analysis is discovery, so the desired form of selective (familywise) error control is ideally \textit{not too conservative}.
\end{enumerate}

What is needed, then, for principled simultaneous inference on parameters based on all $Y_{ij}$ is a dependence-agnostic, high-dimensional selective inference procedure with control over a generalized selective error rate. To be as well-powered as possible, we want to focus on testing procedures that employ resampling methods for estimating the true covariance across test statistics, such that conservative protection against worst-case dependence is not required. To keep things as general as possible, we also want to allow for only asymptotically valid test statistics and p-values, as is common in much of econometric analysis. In the following section, theory and corresponding statistical procedures are developed to achieve exactly these aims.

\subsection{High-Dimensional Multiple Hypothesis Testing}\label{sec_mht}

We now state and prove the validity of statistical procedures that provide high-dimensional $k$-FWER control asymptotically, i.e., in the large sample limit, control of the probability of making $k$ or more rejections of true nulls under distribution $P$:
\[
k\text{-FWER}_P := P\{\text{reject at least } k \text { hypotheses } H_{0,j}: j \in I(P)\}
\]
for $I(P)$ the set of indices of true null hypotheses under distribution $P$. Naturally, $k$-FWER control for $k=1$ is control of the FWER, and larger choices of $k$ power making more discoveries.

\subsubsection{Background and Motivation}

The fundamental challenge of high-dimensional statistical inference, characterized by settings where $p \gg n$, is that classic statistical theory no longer applies: classical asymptotic statistics treats $p$ as fixed in the relevant asymptotic thought experiment as $n$ grows large, whereas in high-dimensional settings the $p$ is growing (much) faster than $n$ in the relevant asymptotic thought experiment. As such, many classic results fail in high-dimensional settings, and new high-dimensional statistical theory must be developed---a literature that has grown popular in response to increasing interest in the statistical analysis of ``big data.''\footnote{Concretely, challenges with developing statistical methods in high-dimensions include the lack of explicit limit distributions and complex dependence structures (see, e.g., \cite{chernozhukov_high-dimensional_2023} for more).} Multiple hypothesis testing (or selective inference, more broadly) in high-dimensions has received considerable attention in recent years, with methods now available that provide asymptotic high-dimensional control of the FWER under mild conditions \citep{belloni_high-dimensional_2018}, and asymptotic high-dimensional control of the FDR under weak dependence across test statistics \citep{liu_phase_2014,belloni_high-dimensional_2018}. Though valid in high-dimensions, these methods are quite conservative, or involve assumptions about the dependence structure of the data that are implausible in many settings (including the present one), respectively.

These existing high-dimensional multiple hypothesis testing results are supported by advances in high-dimensional central limit theory. Most existing high-dimensional CLT results naturally handle the case of controlling the FWER, which is facilitated by learning the quantiles of the distribution of the \textit{maximum} coordinate of a scaled sum of independent random vectors in high-dimensions, but are invalid for studying the behavior of workhorse methods for controlling the $k$-FWER, which require modeling the distribution of the \textit{$k$-th largest} coordinate of a scaled sum of independent random vectors in high-dimensions. Fortunately, recent work in high-dimensional central limit theory (e.g., \cite{ding_gaussian_2025,feng_high_2026}) has developed new theoretical tools suited to understanding the behavior of exactly these objects in high-dimensions, and associated bootstrap procedures. We apply such tools to study and ultimately validate the asymptotic $k$-FWER control of the \cite{romano_control_2007} procedures in high-dimensions; we then extend tools from \cite{ding_gaussian_2025} specifically, developing new high-dimensional CLTs that support more general high-dimensional multiple hypothesis testing procedures.

The following subsections state formal assumptions and results in high-dimensional central limit theory and multiple hypothesis testing. All formal results are stated as generally as possible, holding for data that is independent but not identically distributed (i.n.i.d.), which nests the setting of interest based on discovery from i.i.d. sampled unstructured data. Readers not interested in theoretical details and formal statements may skip to Section \ref{sec_auto}.

\subsubsection{Assumptions}\label{sec_assp}

Let $n \geq 3$ and $p \geq 3$. Let us further define
\[
S_{n,j} := \frac{1}{\sqrt{n}} \sum_{i=1}^n (X_{ij} - E[X_{ij}]) , \quad
S_n := \frac{1}{\sqrt{n}} \sum_{i=1}^n (X_i - E[X_i]).
\]
We will denote $\Sigma:=E\left[S_n S_n^\mathtt{T}\right]=n^{-1} \sum_{i=1}^n E[(X_i-E[X_i])(X_i-E[X_i])^\mathtt{T}]$, which simplifies to $\Sigma = E[(X_i-E[X_i])(X_i-E[X_i])^\mathtt{T}]$ under i.i.d. data.

Then we make the following assumptions.

\begin{assumption}{M}\label{ass_mom}
Let $b_1>0$ and $b_2>0$ be some constants such that $b_1 \leq b_2$. For all $i\in [n],\, j \in [p]$, assume there exists a sequence of constants $B_n>1$ (which may diverge) such that: (i) $E\left[\exp \left(\left|X_{i j}-E[X_{ij}]\right| / B_n\right)\right] \leq 2$; (ii) $b_1^2 \leq \frac{1}{n} \sum_{i=1}^n E[(X_{i j}-E[X_{ij}])^2]$; and (iii) $\frac{1}{n} \sum_{i=1}^n E[(X_{i j}-E[X_{ij}])^4] \leq B_n^2 b_2^2 $.
\end{assumption}

These are mild conditions on the tails and moments of the data $X_{ij}$, which are stated in a way that accommodates data that are i.n.i.d. Part (i) of Assumption \ref{ass_mom} simply requires that $X_{ij}$ be subexponential, or, equivalently, have an Orlicz norm in $\Psi_1$ bounded by $B_n$. The $B_n$ is indexed by $n$ to accommodate growing tail thickness within the subexponential regime as $p=p_n$ grows in the asymptotic thought experiment with i.n.i.d. data. 
Part (ii) of Assumption \ref{ass_mom} insists that the second (centered) moments of the data be bounded away from zero, appropriately stated for i.n.i.d. data. 
Part (iii) insists on bounded fourth (centered) moments in a similar fashion. Though these results are stated to handle i.n.i.d. data (in the interest of supporting applications beyond those considered in this paper), for i.i.d. data---the case considered in the applications of this framework---we may simply require for some fixed $B < \infty$ and fixed $b_1, b_2$ with $b_1 \leq b_2$: (i) $E\left[\exp \left(\left|X_{i j}-E[X_{ij}]\right| / B\right)\right] \leq 2$; (ii) $b_1^2 \leq  E[(X_{i j}-E[X_{ij}])^2]$;
and (iii) $ E[(X_{i j}-E[X_{ij}])^4] \leq B^2 b_2^2 $. Note that, in many use cases of interest based on concept indicators, the $X_{ij}$ will be both bounded (almost surely) and studentized, making these assumptions highly plausible.

\begin{remark}[Degenerate features]\label{rmk_degen}
Some fraction of concept indicators may be degenerate at zero for any given dataset (they are ``dead'' \citep{bricken2023monosemanticity}). Assumption \ref{ass_mom} foreshadows that the high-dimensional CLTs underlying the upcoming high-dimensional multiple hypothesis testing results are only valid for understanding the $k$-th largest coordinate of the (in-population) non-degenerate coordinates of $S_{n}$, which we may label $S_{n}^+$. Of course, degenerate concepts in distribution $P$ are not of interest for the purposes of discovery, and so modeling the distribution of $S_{n,[k]}^+$ is precisely of scientific interest, in alignment with Assumption \ref{ass_mom}. 

The researcher knows the in-sample degenerate coordinates, which is a superset of the in-population degenerate coordinates. We recommend dropping the in-sample degenerate coordinates to form a vector we may label $\check S_n$, and proceeding with analysis based on  $\check S_n$ (noting that, by contrast, in finite samples, entries of $S_{n}^+$ may be zero, even if in the large sample limit they would not be). Whenever the $S_{n,j}^+$ for all non-degenerate $j$ are strictly positive with probability 1, then $\check S_{n,[k]} = S_{n,[k]}^+$ with probability 1 as well, and looking at the data to drop degenerate coordinates is both useful and benign. This commonly occurs: for example, when $|S_n^+|$ is of interest in two-sided testing and $X_{ij}$ has no atoms or $X_{ij}$ is binary (such as when $X_{ij}=Y_{ij}$).\footnote{So long as, technically, $nE[Y_{ij}]$ is not \textit{exactly} an integer.} The Gaussian multiplier bootstrap quantity that will be discussed later on, under two-sided testing, will also satisfy this property under the data-conditional law. Otherwise, whenever $k$ or more coordinates of $S_{n}^+$ are strictly positive with very high probability, so too is $\check S_{n,[k]} = S_{n,[k]}^+$ with very high probability, and this will almost always be the case in conventional datasets of concept embeddings (that feature thousands of non-degenerate concepts).
\end{remark}

\begin{assumption}{R}\label{ass_rate}
Assume that $B_n^2 \log ^5(p n) = o(n)$.
\end{assumption}

This is the key rate condition needed for (uniform) Gaussian or bootstrap approximation error to go to zero asymptotically under the high-dimensional CLTs discussed in both the works of \cite{ding_gaussian_2025} and \cite{chernozhuokov_improved_2022}. Rewritten, it says that
\[
\frac{B_n^2 \log ^5(p n)}{n} = o(1)
\]
which permits $p$ growing very fast with $n$ in the asymptotic thought experiment. In fact, $p$ may be growing nearly exponentially in $n$, e.g., $p=e^{n^{1/6}}$ for fixed $B_n$. That this rate condition permits $p \gg n$ is an important positive result in high-dimensional central limit theory \citep{chernozhukov_central_2017,chernozhukov_high-dimensional_2023}.

\begin{remark}[State-of-the-art assumptions]
Both Assumptions \ref{ass_mom} and \ref{ass_rate} are the state-of-the-art in the high-dimensional central limit theory from which this framework draws \citep{chernozhuokov_improved_2022,ding_gaussian_2025}. Indeed, these conditions are the exact same as those used in both \cite{chernozhuokov_improved_2022} and \cite{ding_gaussian_2025}. Notably, these assumptions do not require non-degeneracy of $\Sigma$, another important positive result in this literature \citep{chernozhukov_high-dimensional_2023}. Under assumptions of non-degenerate $\Sigma$, improved rate conditions may be used in place of Assumption \ref{ass_rate}. 
\end{remark}


\subsubsection{High-dimensional $k$-FWER Control}\label{sec_control}

We will focus on the Gaussian multiplier bootstrap as our high-dimensional bootstrap method, which is both computationally light and theoretically suited to the setting at hand. We define the Gaussian multiplier bootstrap quantity
\[
S_n^B:=\frac{1}{\sqrt{n}} \sum_{i=1}^n \xi_i\left(X_i-\bar{X}_n\right),
\]
where $\xi_i \overset{iid}{\sim} N(0,1)$ and $\bar X_n := n^{-1} \sum_{i=1}^n X_i$. For the purposes of bootstrapping, we will denote the data-conditional probability measure $P^B(\cdot) := P(\cdot \mid X)$.


With these assumptions and notations in place, we can now state results that certify the application of the step-down algorithms of \cite{romano_control_2007} (Algorithms 2.1 and 2.2) for controlling the $k$-FWER in large samples in high-dimensions. Recall that Algorithm 2.1 of \cite{romano_control_2007} begins by assuming that the researcher can compute appropriate critical values given by $\hat{c}_{n, K}(1-\alpha, k)$ for any $K \subset \{1, ..., p\}$, which in the present case will be given by the $1-\alpha$ quantile of $S_{n,K,[k]}^B$ under $P^B$. The algorithm proceeds by first rejecting all $H_{0,j}$ such that $T_{n,j} > \hat{c}_{n, \{1,...,p\}}(1-\alpha, k)$, and ends if there are no rejections for any $j$, in which case all hypotheses are ``accepted'' (so far, this is equivalent to the entirety of a single-step procedure). If there are less than $k$ rejections, then the algorithm halts; otherwise, heuristically, the next step of the algorithm uses information about the previously rejected hypotheses to inform a choice of $K$ that yields a $\hat{c}_{n, K}(1-\alpha, k)$ that is then compared to the test statistics for the remaining hypotheses, leading to more rejections and another step (just like the previous) or halting if there are no more rejections. Algorithms 2.1 and 2.2 only differ on how they construct $K$ after the first step, with Algorithm 2.2 offering a more computationally streamlined approach. (For completeness, both algorithms are reproduced in full in Appendix Section \ref{sec_rw21}.)

\begin{restatable}[High-dimensional $k$-FWER control for small $k$, one-sided]{theorem}{Smallk}\label{thm_smallk}
Consider the method of Algorithm 2.1 in \cite{romano_control_2007} with test statistics $T_n$ of hypotheses $\{\tilde H_{0,j}\}_{j \in [p]}$ and critical values $\hat{c}_{n, K}(1-\alpha, k)$ given by the $1-\alpha$ quantile of $S_{n,K,[k]}^B$ under $P^B$. Assume that $k$ is fixed (i.e., not growing with $n,p$). Then under Assumptions \ref{ass_mom} and \ref{ass_rate}, this method ensures: 
\begin{enumerate}[label=(\roman*)]
    \item $\limsup_{n,p\to\infty}
     k\text{-FWER}_P 
    \leq \alpha$.
    \item If $\tilde H_{0,j}$ is false and $\theta_j(P) \gg B_n \sqrt{\log p /n}$, then the probability that the step-down method rejects $\tilde H_{0,j}$ tends to 1.
    \item If
    $\min_{j \notin I(P)} \theta_j(P) \gg B_n \sqrt{\log p / n}$,
    then conclusions (i) and (ii) also hold when Algorithm 2.1 is replaced by Algorithm 2.2 of \cite{romano_control_2007}.
\end{enumerate}
\end{restatable} 

This result is the natural high-dimensional analog of Theorem 3.1 (as well as Theorem 3.3) of \cite{romano_control_2007}. Theorems 3.1 and 3.3 of \cite{romano_control_2007}---as well as all formal results in that paper---are implicitly analyzed with fixed $p$ asymptotics. Theorem \ref{thm_smallk} says that so long as $k$ is fixed (small in the relevant asymptotic sense), Algorithm 2.1 of \cite{romano_control_2007} may be used to asymptotically achieve control of the $k$-FWER at any desired level $\alpha$ even in high-dimensional settings, where $p$ can grow much faster than $n$, and moreover that Algorithm 2.1 is consistent against local alternatives that are not shrinking towards zero too quickly\footnote{Under Assumption \ref{ass_rate}, note that $B_n \sqrt{\log p /n}=o(1)$.} (i.e., intuitively, in sufficiently large samples, we are able to appropriately reject a false null so long as $\theta_j(P)=E_P[X_{ij}]$ is not very close to zero). Result (i) of Theorem \ref{thm_smallk} resembles result (i) of Theorem 3.1 of \cite{romano_control_2007}, but where now $p$ is growing as fast as allowed by Assumption \ref{ass_rate}; result (ii) of resembles result (ii) of Theorem 3.1 of \cite{romano_control_2007}, but now $\theta_j(P) \gg B_n \sqrt{\log p /n}$ appears instead of $\theta_j(P) > 0$.\footnote{Formally, by $\theta_j(P) \gg B_n \sqrt{\log p /n}$ we mean that $\theta_j(P) /(B_n \sqrt{\log p /n}) \to\infty$.} Result (iii) says that the streamlined Algorithm 2.2 of \cite{romano_control_2007} also provides high-dimensional $k$-FWER control and is consistent against local alternatives when a natural (though perhaps not always granted) separation condition holds.

The proof of this result is available in Appendix Section \ref{sec_app}, and, as previously alluded to, relies on the insight that the low-dimensional $k$-FWER controlling algorithms of \cite{romano_control_2007} can be adapted to and analyzed in the high-dimensional setting using new results in high-dimensional central limit theory for order statistics \citep{chernozhuokov_improved_2022,ding_gaussian_2025}. In the following theorem, we state the sibling procedure for high-dimensional control of the $k$-FWER in two-sided testing settings; it is the high-dimensional analog of Theorem 3.2 of \cite{romano_control_2007}.

\begin{restatable}[High-dimensional $k$-FWER control for small $k$, two-sided]{theorem}{Smallktwo}\label{thm_smallk_two}
Consider the method of Algorithm 2.1 in \cite{romano_control_2007} with test statistics $|T_n|$ of hypotheses $\{ H_{0,j}\}_{j \in [p]}$ and critical values $\hat{c}_{n, K}(1-\alpha, k)$ given by the $1-\alpha$ quantile of $|S_{n,K}^B|_{[k]}$ under $P^B$. Assume that $k$ is fixed (i.e., not growing with $n,p$). Then under Assumptions \ref{ass_mom} and \ref{ass_rate}, this method ensures: 
\begin{enumerate}[label=(\roman*)]
    \item $\limsup_{n,p\to\infty}
     k\text{-FWER}_P 
    \leq \alpha$.
    \item If $ H_{0,j}$ is false and $|\theta_j(P)| \gg B_n \sqrt{\log p / n}$, then the probability that the step-down method rejects $H_{0,j}$ tends to 1.
    \item If
    $\min_{j \notin I(P)} |\theta_j(P)| \gg B_n \sqrt{\log p / n}$,
    then conclusions (i) and (ii) also hold when Algorithm 2.1 is replaced by Algorithm 2.2 of \cite{romano_control_2007}.
\end{enumerate}
\end{restatable}

\begin{remark}[Small $k$]
To comment explicitly on why the procedures of Theorems \ref{thm_smallk} and \ref{thm_smallk_two} are only valid for small $k$, note that the sup-norm bootstrap or Gaussian approximation error in the high-dimensional CLTs for the $k$-th largest coordinate introduced in \cite{ding_gaussian_2025} only goes to zero if
\[
\frac{k^8 B_n^2 \log ^5(p n)}{n} = o(1),
\]
meaning that $k$ must be fixed or must grow incredibly slowly: for all practical purposes, $k$ needs to be quite small. As such, the theoretical tools of \cite{ding_gaussian_2025} are unsuitable for making progress on FDP exceedance (FDX) control, as nontrivial FDX control at some $\gamma \in (0,1)$ (i.e., control of $P(\text{FDP} > \gamma)$) implies that $k$ is a non-negligible fraction of $p$ (i.e., $k$ grows nearly linearly with $p$).\footnote{More formally, let $F$ denote the number of false rejections and
$S$ the number of correct rejections, so that $\mathrm{FDP}>\gamma$ if and
only if $F > (\gamma/(1-\gamma))S$. Since $S \le p - |I(P)|$, the event $\{F \ge k^*\}$---where $k^*$
is the smallest integer strictly greater than
$(\gamma/(1-\gamma))(p-|I(P)|)$---implies $\{\mathrm{FDP}>\gamma\}$, so
$P(\mathrm{FDP}>\gamma) \ge P(F \ge k^*)$. Nontrivial FDX control at level
$\gamma$ therefore requires control of $P(F \ge k)$ at $k = k^*$, which grows
proportionally to $p$ whenever a nonvanishing fraction of the hypotheses are
false nulls; see also the proof of Theorem 4.1 in
\cite{romano_control_2007}.}
\end{remark}

\begin{remark}[Choice of $k$ and placebo tests]\label{rmk_placebo}
Though the high-dimensional central limit theory underlying Theorems \ref{thm_smallk} and \ref{thm_smallk_two} makes clear that $k$ must be ``small,'' it does not immediately suggest a particular (maximal) choice of $k$ for any finite sample. To make progress on choice of $k$, researchers may consider running tests of true global nulls in relevant settings. For example, in an experimental setting, researchers may generate placebo treatment assignments for units (independently of the data), and run the algorithm suggested by Theorem \ref{thm_smallk} at increasing $ k \in \{1, 2, ...\}$. The first $k$ in this sequence that yields many rejections of the null relative to that $k$---which the researcher knows to be false rejections by placebo design---suggests that $k$ has become so large that control of the $k$-FWER has begun to fail. Such a procedure is used in the empirical examples of Section \ref{sec_empirical} to choose $k$. 
\end{remark}

\subsubsection{Many Approximate Means and $k$-FWER Control}\label{sec_mam}

So far, we have validated the algorithms of \cite{romano_control_2007} in high-dimensions (with important caveats) using new theoretical tools for studying the $k$-th largest coordinate of $S_n$ from \cite{ding_gaussian_2025}. However, these results were only based on hypotheses of means: the parameters of interest were written exactly as $\theta_j(P)=E_P[X_{ij}]$ for some interpretable $X_{ij}$, and estimators $\hat\theta_j$ were simply sample means (with $T_{n,j} = \sqrt{n}\hat\theta_j$). Though this is a common setting, researchers often want to test hypotheses for parameters which have estimators (and test statistics) that are not exactly expressible as averages of the observed data, yet ``approximately'' behave like means in large samples. Important examples include t-stats,\footnote{Multiple hypothesis testing with studentized statistics is often important for the purposes of ``balance'': that all hypothesis tests considered are similarly powered, contributing to selective error control similarly (see, e.g., \cite{romano_balanced_2010} for a more formal characterization of balance in this setting). For example, when $X_{ij} = Y_{ij} \in \{0,1\}$, more frequently occurring concept indicators are higher variance features, and we are typically interested in powering discoveries beyond just these frequently occurring concepts.} or when the parameters of interest are linear regression coefficients estimated with OLS. Thus, we also want $k$-FWER control for hypothesis testing based on ``many approximate means'' (MAM) in the parlance of \cite{belloni_high-dimensional_2018}. To achieve this, we will extend the high-dimensional CLTs introduced in \cite{ding_gaussian_2025} (which are only valid for the ``exact'' means previously studied) to the setting of MAM, and thereby unlock high-dimensional $k$-FWER control for tests based on a wider class of estimators of parameters of practical interest.

To begin, note that many estimators $\hat\theta_j$ of parameters  $\theta_j(P)$ will have the asymptotically linear representation
\begin{equation}\label{eqn_alr}
\sqrt{n}(\hat\theta_j - \theta_j(P)) = \frac{1}{\sqrt{n}}\sum_{i=1}^n \psi_{ij} + R_{n,j}, \quad \sqrt{n}(\hat\theta - \theta(P)) = \frac{1}{\sqrt{n}}\sum_{i=1}^n \psi_i + R_n,
\end{equation}
where $\hat\theta :=(\hat\theta_1, \hat\theta_2, ...,\hat\theta_p)^\mathtt{T}$, $\theta(P):=(\theta_1(P),\theta_2(P),...,\theta_p(P))^\mathtt{T}$, $\psi_i := (\psi_{i1},\psi_{i2},...,\psi_{ip})^\mathtt{T}$ is the collection of influence functions $\psi_{ij}$ at observation $i$ for $\theta_j(P)$, and the $R_{n,j}$ are small remainder terms (``linearization errors''). This setup nests the previous hypothesis testing setup based on ``exact'' means, for which $\theta_j(P) = E_P[X_{ij}]$ and 
\[\psi_{ij}=X_{ij} - E_P[X_{ij}], \quad R_n = 0.
\]
Based on this observation, going forward we will simply relabel the data $X_{ij}$ as the influence function $\psi_{ij}$, such that
\[
S_{n} = \frac{1}{\sqrt{n}}\sum_{i=1}^n \psi_{i}, \quad \tilde S_n := \sqrt{n}(\hat\theta - \theta(P)) = S_n + R_n,
\]
recalling that, by definition, influence functions are mean zero, i.e., $E[\psi_{ij}]=0$, and thus $\Sigma=n^{-1} \sum_{i=1}^n E[\psi_i \psi_i^\mathtt{T}]$, and we treat Assumption \ref{ass_mom} as holding for data $\psi_{ij}$. With this new notation in place, we can now state a high-dimensional CLT that accommodates the case of small linearization errors $R_n$, the so-called setting of ``many approximate means.'' 

\begin{restatable}[High-dimensional CLT for the small $k$-max coordinate, MAM]{theorem}{Hdapprox}\label{lem_hdkapprox}
Let $\tilde S_n := S_n + R_n$, and assume that $
\| R_n \|_\infty = o_P(1/\sqrt{\log (pn)})$. Further assume that $k$ is fixed (i.e., does not grow with $n,p$). If Assumptions \ref{ass_mom} and \ref{ass_rate} hold, then as $n,p\to\infty$
\[
\sup _{t \in \mathbb{R}}\left|P\left( \tilde S_{n,[k]} \leq t\right)-P\left(N(0,\Sigma)_{[k]} \leq t\right)\right| \to 0.
\]
\end{restatable}

This theorem is both a natural generalization of Lemma A.6 of \cite{ding_gaussian_2025}, which proves the special case for $R_n=0$, and an extension of Theorem 2.1 of \cite{belloni_high-dimensional_2018}, which concerns uniform approximations over (hyper)rectangular events (like the max statistic falling below $t$) for ``many approximate means.''\footnote{Note that the $k$-th largest coordinate falling below some $t$ for $k=1$ is an event where $\tilde S_n$ falls in a (hyper)rectangular set, and for $k>1$ is a \textit{union} of rectangles.} This theorem indicates that, for appropriately small $k$, in the large sample limit, the distribution of the $k$-th largest coordinate of a scaled sum of independent random vectors is (uniformly) well-approximated by the distribution of the $k$-th largest coordinate of a Gaussian random vector with variance matrix $\Sigma$ \textit{whether or not} there is a small linearization error $R_n$. In Appendix Section \ref{sec_hdclts}, we state an analogous theorem for  the analogous (multiplier) bootstrap quantity under the relevant bootstrap law $P^B$.

Leveraging these high-dimensional CLTs for MAM, in Appendix Section \ref{sec_hdclts} we also state high-dimensional CLTs that explicitly allow for studentization, recognizing that studentization with estimated variances can be cast as analyzing a particular $\tilde S_n$. These results can in turn be used to prove the following high-dimensional $k$-FWER result; a two-sided version can similarly be stated, but is omitted for brevity.\footnote{It is also the case that we may safely only work on the event $\{\min _{j \leq p} \widehat{\Sigma}_{j j} \geq b_1^2 / 2\}$ in this theorem, so we incur no existence issues, as this event occurs with probability tending to 1 in the large sample limit given Assumption \ref{ass_mom}.}


\begin{restatable}[High-dimensional $k$-FWER control for small $k$, studentized, one-sided]{theorem}{Smallkstu}\label{thm_stu_kfwer}
Assume that $X_i \overset{\text{iid}}{\sim} P$. Define $\widehat{\Sigma}_{j j}:=n^{-1} \sum_{i=1}^n(X_{i j}-\bar{X}_{n, j})^2$ and $\widehat{\Lambda}:=\operatorname{diag}\{\widehat{\Sigma}_{11}, \ldots, \widehat{\Sigma}_{p p}\}$. Consider the method of Algorithm 2.1 in \cite{romano_control_2007} with test statistics $\hat\Lambda^{-1/2} T_n$ of hypotheses $\{\tilde H_{0,j}\}_{j \in [p]}$ and critical values $\hat{c}_{n, K}(1-\alpha, k)$ given by the $1-\alpha$ quantile of $(\hat\Lambda^{-1/2} S_{n}^B)_{K,[k]}$ under $P^B$. Assume that $k$ is fixed (i.e., not growing with $n,p$). Then under Assumptions \ref{ass_mom} and \ref{ass_rate}, so long as $B_n=O(1)$, this method ensures: 
\begin{enumerate}[label=(\roman*)]
    \item $\limsup_{n,p\to\infty}
     k\text{-FWER}_P 
    \leq \alpha$.
    \item If $\tilde H_{0,j}$ is false and $\theta_j(P) \gg \sqrt{\log p /n}$, then the probability that the step-down method rejects $\tilde H_{0,j}$ tends to 1.
    \item If
    $\min_{j \notin I(P)} \theta_j(P) \gg \sqrt{\log p / n}$,
    then conclusions (i) and (ii) also hold when Algorithm 2.1 is replaced by Algorithm 2.2 of \cite{romano_control_2007}.
\end{enumerate}
\end{restatable}

The evaluated influence functions $\psi_{ij}$ are typically not fully observed in the data, and instead must be estimated. This becomes important when bootstrapping, because the bootstrap must be based on quantities that we can actually compute with the observed data. The following theorem states that, when the influence functions are sufficiently well-estimated with observed data, the high-dimensional multiplier bootstrap remains effective.\footnote{Note that, even in the case of exact means, the influence function is not directly observed, because $E_P[X_{ij}]$ is not directly observed; nonetheless, the results proven earlier permit $E_P[X_{ij}]$ to be approximated with a sample average, as part of $S_n^B$. Theorem \ref{thm_hdbootinfluence} provides a more abstracted approximation condition to generalize this plug-in result.}

\begin{assumption}{A}\label{ass_approx}
Assume that 
$\max_{j \in[p]} \left(n^{-1}\sum_{i=1}^n(\hat{\psi}_{i j}-\psi_{i j})^2\right)^{1/2}=o_P(1 / \log(pn)).$ 
\end{assumption}

\begin{restatable}[High-dimensional bootstrap for the small $k$-max coordinate with estimated influence function]{theorem}{Hdbootinfluence}\label{thm_hdbootinfluence}
Define
$$\tilde S_n^B := \frac{1}{\sqrt{n}}\sum_{i=1}^n \xi_i\hat\psi_i.$$ 
Further assume that $k$ is fixed (i.e., does not grow with $n,p$). If Assumptions \ref{ass_mom}, \ref{ass_rate}, and \ref{ass_approx} hold, then as $n,p\to\infty$
\[
\sup _{t \in \mathbb{R}}\left|P^B\left( \tilde S_{n,[k]}^B \leq t \right)-P\left(N(0,\Sigma)_{[k]} \leq t\right)\right| \xrightarrow[]{P} 0.
\]
\end{restatable}

This theorem is the natural analog to Theorem 2.2 of \cite{belloni_high-dimensional_2018}. It uses the same sufficient condition for the asymptotic approximation quality of the influence function---Assumption \ref{ass_approx}---as \cite{belloni_high-dimensional_2018} (equation 2.3), which is derived from an implication of the Borell-Sudakov-Tsirelson inequality.

The following result then leverages Theorem \ref{thm_hdbootinfluence}, among others, to permit high-dimensional multiple hypothesis testing with $k$-FWER control in the MAM setting when influence functions must be estimated. Again, a two-sided result is omitted simply for brevity, and would introduce no new technical issues.

\begin{restatable}[High-dimensional $k$-FWER control for small $k$, MAM, one-sided]{theorem}{Smallkaif}\label{thm_aif_kfwer}
Consider the method of Algorithm 2.1 in \cite{romano_control_2007} with test statistics $ \sqrt{n}\hat\theta_j$ of null hypotheses $\tilde H_{0,j}: \theta_j(P) \leq 0$ for all $j \in [p]$, and critical values $\hat{c}_{n, K}(1-\alpha, k)$ given by the $1-\alpha$ quantile of $\tilde S_{n,K,[k]}^B$ under $P^B$. Assume that $k$ is fixed (i.e., not growing with $n,p$) and that for all $j \in [p]$ the asymptotically linear representation of equation (\ref{eqn_alr}) holds with $ \|R_{n}\|_\infty = o_P(1/\sqrt{\log (pn)})$. Then under Assumptions \ref{ass_mom}, \ref{ass_rate}, and \ref{ass_approx}, this method ensures: 
\begin{enumerate}[label=(\roman*)]
    \item $\limsup_{n,p\to\infty}
     k\text{-FWER}_P 
    \leq \alpha$.
    \item If $\tilde H_{0,j}$ is false and $\theta_j(P) \gg B_n\sqrt{\log p /n}$, then the probability that the step-down method rejects $\tilde H_{0,j}$ tends to 1.
    \item If
    $\min_{j \notin I(P)} \theta_j(P) \gg B_n \sqrt{\log p / n}$,
    then conclusions (i) and (ii) also hold when Algorithm 2.1 is replaced by Algorithm 2.2 of \cite{romano_control_2007}.
\end{enumerate}
\end{restatable} 
Theorem \ref{thm_aif_kfwer} embeds a practical bootstrapping procedure for estimators of parameters that are only asymptotically linear: derive the influence functions for the parameters of interest, estimate them if they are not directly observed, and then implement a multiplier bootstrap based on the data-conditional law of $\tilde S_n^B$. 

An important application of these results is the empirical setting where each $\theta_j(P)$ is a linear regression coefficient based on a separate regression of a concept indicator on some regressor(s) of interest, estimated with a corresponding OLS coefficient $\hat\theta_j$. Here, $\sqrt{n}(\hat\theta_j -\theta_j(P))$ will have only an asymptotically linear representation, and the data-evaluated influence functions must be estimated. In the following example, we walk through the derivation of the bootstrapping procedure for testing many regression coefficients.

\begin{example}[Bootstrap for many linear regression coefficients]\label{rmk_manylinear}
Consider the dataset given by $\{(Y_i,T_i,D_i)\}_{i=1}^n$, drawn i.i.d. from distribution $P$, where
$Y_i=(Y_{i1},\ldots,Y_{ip})^{\mathtt T}\in\mathbb R^p$ are concept indicators, $T_i\in\mathbb R$ is a treatment of interest, and $D_i\in\mathbb R^d$ are controls for some small (fixed) $d$ that includes a constant. The researcher is interested in making discoveries based on the coefficient on $T$ from the linear regression of $Y_j$ on $(T, D)$, which we denote $\theta_j(P)$, and which is formally defined via the optimization problem
\[
(\theta_j(P),\vartheta_j(P))  := \arg\min_{b,g} E_P[(Y_j - T b - D^\mathtt{T} g)^2].
\]

By the Frisch-Waugh-Lovell theorem, we can write
\[
H_{0,j}: \theta_j(P) = \Omega(P)^{-1} E_P[\widetilde T\,\widetilde Y_j]=0
\]
where $L_P[\cdot\mid\cdot]$ denotes the population linear projection operator, and by defining $\widetilde T := T - L_P[T\mid D]$ and $\widetilde Y_j := Y_j - L_P[Y_j\mid D]$, letting $\Omega(P) := E_P[\widetilde T^2]$ for shorthand. Defining the population residual $\widetilde U_j := \widetilde Y_j - \theta_j(P) \widetilde T$, the influence function for $\theta_j(P)$ at observation $i$ is given by
\[
\psi_{ij} := \Omega(P)^{-1}\widetilde T_i \widetilde U_{ij}.
\]

For the expectation under the empirical distribution $E_n[\cdot]$ (i.e., the sample average), we can define feasible sample analogs as
\[
\begin{gathered}
\widehat{\gamma}_T:=E_n[D_i D_i^{\mathtt{T}}]^{-1}E_n[ D_i T_i],\qquad \widehat{\gamma}_{Y_j}:=E_n[D_i D_i^{\mathtt{T}}]^{-1}E_n[D_i Y_{i j}],\\
\widehat{\widetilde{T}}_i:=T_i-D_i^{\mathtt{T}} \widehat{\gamma}_T, \qquad
\widehat{\widetilde{Y}}_{ij}:=Y_{ij}-D_i^{\mathtt{T}} \widehat{\gamma}_{Y_j},\\
\widehat\Omega:=E_n[\widehat{\widetilde T}_i^2], \qquad
\widehat\theta_j:=\widehat\Omega^{-1}E_n[\widehat{\widetilde T}_i\,\widehat{\widetilde Y}_{ij}],\qquad
\widehat{\widetilde U}_{ij}:=\widehat{\widetilde Y}_{ij}-\widehat\theta_j\widehat{\widetilde T}_i, \qquad
\widehat\psi_{ij}:=\widehat\Omega^{-1}\widehat{\widetilde T}_i \widehat{\widetilde U}_{ij},
\end{gathered}
\]
noting that $\hat\theta_j$ is just the usual OLS coefficient. We may now write out the asymptotically linear representation of each of the $j\in [p]$ regression coefficients as
\[
\sqrt{n}(\hat\theta_j - \theta_j(P)) = \frac{1}{\sqrt{n}}\sum_{i=1}^n \psi_{ij} + R_{n,j}.
\]
The appropriate law for facilitating $k$-FWER control via Theorem \ref{thm_aif_kfwer} is then given by
\[
\tilde S_n^B := \frac{1}{\sqrt{n}}\sum_{i=1}^n \xi_i\hat\psi_i = \widehat\Omega^{-1}\frac{1}{\sqrt{n}}\sum_{i=1}^n \xi_i \widehat{\widetilde T}_i \widehat{\widetilde U}_{i}
\]
under $P^B$, where $\widehat{\widetilde U}_{i}$ stacks all $\widehat{\widetilde U}_{ij}$ as a vector. That $\hat\psi_i$ is a sufficiently good approximation of $\psi_i$ and that $R_n$ is sufficiently small (in large samples) is proven in Propositions \ref{thm_hdreg} and \ref{thm_hdlin}, as stated in Appendix Section \ref{sec_manylinprop}, under certain sufficient conditions (e.g., all data is subgaussian tailed).
\end{example}

\subsection{Local Automatic Interpretation}\label{sec_auto}

By applying the statistical procedures discussed in Section \ref{sec_mht}, the researcher obtains a selected set of rejected null hypotheses, or ``discoveries.'' Concretely, these discoveries are recorded as set of coordinates $\hat J \subseteq [p]$ indexing the rejected nulls. 

Though the machine learning theory that motivates sparse dictionary learning supposes that each hypothesis $\hat \jmath \in \hat J$ has a human-interpretable meaning, sparse dictionary learning methods themselves do not generate \textit{natural language descriptions} of what each human-interpretable meaning may be---analogous to the way in which ``topics'' discovered by LDA topic models do not come with natural language descriptions \citep{blei2003latent}. We now discuss a set of procedures for both automatically generating and evaluating the quality of natural language feature descriptions, emphasizing the importance of statistical inference with respect to the actual data distribution of interest, $P$.

\begin{remark}[Why do we need automatic interpretation?]
Human judgment alone could be deployed as a means of giving natural language descriptions to concept indicators, e.g., the researcher could inspect many of the text examples that activate (highly) for a given SAE feature in order to arrive at a plausible English language description for the concept signaled by that feature. However, such a process suffers from (1) problems related to motivated reasoning and data snooping (e.g., the researcher could simply come up with a concept description that is reverse engineered from a conclusion they wanted to find in the data); and (2) problems related to the scale of interpretation and associated interpretation effort/cost (e.g., there may be many hundreds of concepts discovered, all requiring interpretation; interpreting all $p$ concepts would require even more labor). Automatic interpretation pipelines solve both of these problems. They automate the process of creating and scoring natural language descriptions of concept indicators by leveraging a dedicated ``explainer LLM'' to reason about interpretations inexpensively at scale.\footnote{The terminology ``explainer LLM'' is chosen to be consistent with related works in the AI interpretability literature, e.g., \cite{jiang_interpretable_2025}.} Of course, the researcher can look directly at the highly activating unstructured data points to develop their own interpretations, simultaneously.
\end{remark}

\subsubsection{Local Automatic Generation}

We adopt the canonical autointerp generation strategy pioneered in \cite{bricken2023monosemanticity} for this framework: we pass text samples from the researcher's unstructured dataset that activate highly on a particular feature to an explainer LLM---annotating these exemplar texts with information about where and how strongly the feature activates---and ask for an English description of the possible meaning of the feature.\footnote{This exact generation strategy, which only looks at ``top activating'' examples, is also used in and argued for in, e.g., \cite{choi2024automatic}.} 

By drawing on text samples from the researcher's unstructured dataset of interest, this generation strategy is \textit{local}---it is tailored to the distribution $P$. Though off-the-shelf feature descriptions are sometimes available for pretrained sparse dictionary learning models, these descriptions were necessarily learned on a different distribution of data (e.g., some large and opaque swath of the internet). As such, these descriptions are liable to be of lower quality than those based on the researcher's own dataset. The specific prompts implementing this strategy for the empirical applications considered later are detailed in Appendix Section \ref{app_autointerp}.

\begin{examplecont}{ex:rct}
The null hypothesis of no average treatment effect is rejected for concept 4. Using the raw SAE feature activations (necessary to compute in the process of evaluating the \texttt{Embed} function), the researcher is able to rank text responses from experimental participants by largest activations for concept 4 on any token (``max pooling''). The researcher takes the top 10 highest activating text examples and annotates each of them to indicate the highest activating token(s) in each response. These exemplar responses are then passed to an explainer LLM that reasons about and then outputs the concise English phrase ``discussion of politics'' to describe concept 4. 
 \end{examplecont}

\subsubsection{Local Automatic Evaluation}\label{sec_autoeval}

Autointerp pipelines ideally not only specify how to generate descriptions for concept indicators, but also \textit{evaluate} the quality of these descriptions. We now consider a new statistical formalization of a mainstream autointerp evaluation technique called ``detection scoring.'' 

The fundamental principle of detection scoring is that ``an interpretation should serve as a binary classifier distinguishing activating from nonactivating contexts'' \citep{paulo_automatically_2024}. In other words: description of a concept can be treated like a classification problem, one where the natural language description itself (harnessed in a LLM prompt) is the classifier, the input is a specific text example, and the label to be predicted is derived from the ground truth of the presence or absence of a SAE feature's activation in that text example (which is always known to the researcher using their \texttt{Embed} function). Under this framing, binary classifier performance metrics like accuracy, precision, and recall characterize the quality of the description.

To formalize this autointerp strategy, consider a ``held-out'' unstructured dataset of size $m$ drawn independently from the same distribution of interest $P$. (This held-out dataset may be created, for example, by partitioning an original unstructured dataset drawn i.i.d. from $P$ into two disjoint datasets via indices chosen at random, independently of all the data.) Let us denote the set of indices of unstructured data points in the main sample as $\mathcal{I}^\textrm{estim}$, and those in the held-out dataset as $\mathcal{I}^\textrm{eval}$.\footnote{In empirical settings featuring a randomized experiment, for sake of preserving power, it may be useful for researchers to create the held-out dataset split from the larger treatment arm, if one exists.}

Using the data points with indices in $\mathcal{I}^\textrm{estim}$, we extract exemplar unstructured data points to be used to create an autointerp explanation of concept $j$, which we denote $\hat\eta_j$. For all $i \in \mathcal{I}^\textrm{eval}$, we define the random variable $\hat Y_{ij} \in \{0,1\}$ as
\[
\hat Y_{ij} := \texttt{CLS}(Z_i, \hat \eta_j),
\]
where the function $\texttt{CLS}(\cdot)$ is a fixed choice of LLM and classification prompt fed both an unstructured data point $Z_i$ and the learned autointerp explanation $\hat\eta_j$. The $\hat Y_{ij}$ is the classifier output, a prediction of whether or not the unstructured data point $Z_i$ contains a concept with description $\hat \eta_j$.\footnote{Note that, even if a sample splitting implementation of this autointerp evaluation strategy is pursued, cross-fitting is inappropriate, because there is little reason to think, given the sample sizes available to researchers, that $\hat\eta_j$ would be stable across folds. As such, when sample splitting is employed, an uneven split is recommended, e.g., only 15\% of the data is used for the held-out evaluation sample.}

For the purposes of evaluating autointerp descriptions, we define the canonical ``detection score'' estimand as
\[
\theta_j^\textrm{acc}(\hat\eta_j;P) := E_P\left[S_{ij} \mid \hat\eta_j \right] = P(\{Y_{ij} = \hat Y_{ij}\} \mid \hat\eta_j) = P\left(\{Y_{ij} = \texttt{CLS}(Z_i, \hat\eta_j)\} \mid \hat\eta_j\right)
\]
where
$S_{ij} := \mathbf{1}\{Y_{ij} = \hat Y_{ij}\}$, i.e., $\theta_j^\textrm{acc}(\hat\eta_j;P)$ is the accuracy of the classifier. Conditioning on $\hat\eta_j$ in the estimand of interest should be thought of as evaluating a fixed autointerp description. Importantly, as the notation emphasizes, the $\theta_j^\textrm{acc}(\hat\eta_j; P)$ tells us exactly the quality of the description $\hat\eta_j$ for the population of interest $P$, unlike other autointerp evaluation methods for which the underlying distribution that implicitly defines the evaluation score is highly opaque (c.f., simulation scoring \citep{bills2023language}). In this way, evaluation is also \textit{local}.

We can easily construct a plug-in estimator of the autointerp detection score estimand; its unbiasedness, consistency, and asymptotic normality under the $\hat\eta_j$-conditional law are established in the following proposition.

\begin{restatable}[Conditional inference on detection score]{proposition}{Ascoreest}\label{prop_ascore}
Let $m:=|\mathcal{I}^\textrm{eval}|$. Define the detection score estimator for concept indicator $j$ as
\[
\hat \theta_j^\textrm{acc} := \frac{1}{m} \sum_{i \in \mathcal{I}^\textrm{eval}} S_{ij} ,\quad S_{ij}= \mathbf{1}\{Y_{ij} = \hat Y_{ij}\} = \mathbf{1}\{Y_{ij} = \operatorname{\texttt{CLS}}(Z_i, \hat \eta_j)\}.
\]
Assume there exists a constant $\delta \in (0,1/2)$ such that $P(S_{ij}=1
\mid \hat \eta_j) \in [\delta, 1-\delta]$ almost surely.
Then we have that $E_P\left[\hat \theta_j^\textrm{acc} \mid \hat\eta_j \right]=\theta_j^\textrm{acc}(\hat\eta_j;P)$ almost surely, and that as $m \to \infty$
\[
\sup_{t \in \mathbb{R}} \left| P\left(\sqrt{m}( \hat \theta_j^\textrm{acc} - \theta_j^\textrm{acc}(\hat\eta_j;P)) \leq t \mid \hat\eta_j \right) - P(N(0, \operatorname{Var}(S_{ij} \mid \hat\eta_j)) \leq t \mid \hat\eta_j) \right| \xrightarrow[]{a.s.} 0.
\]
\end{restatable}
It is immediate from Proposition \ref{prop_ascore} then that the empirical variance estimator $\hat V_j^\text{acc}:= m^{-1}\sum_{i \in \mathcal{I}^\text{eval}} (S_{ij}-m^{-1}\sum_i S_{i j} )^2$ permits construction of valid large sample frequentist confidence intervals for $\hat\theta_j^\text{acc}$ under the $\hat\eta_j$-conditional law of the data.

\begin{examplecont}{ex:rct}
The null hypothesis of no average treatment effect is rejected for concept 4, and the autointerp generation strategy implemented by the researcher describes concept 4 as activating on ``discussion of politics.'' To evaluate the quality of this description, the researcher runs a LLM-based classifier over a held-out dataset of experimental participant text responses. The LLM classifier is given the prompt ``Does the following text example indicate discussion of politics? Return 1 or 0 for yes or no:'' followed by a text response from the held-out dataset. The researcher then uses the output of the LLM classifier to compute the detection score, which measures the accuracy of this autointerp description-based classifier at predicting the presence of concept 4 in the held-out dataset.
 \end{examplecont}

\begin{remark}[Oracle detection scores]
Towards better understanding the information communicated to the researcher by the detection score, it is useful to consider the function $\texttt{CLS}^*(\cdot)$, which we can define implicitly as
\[
\bar\theta_j^\textrm{acc}(\hat\eta_j;P):=\sup_{f \in \mathcal{F}} P\left(\{Y_{ij} = f(Z_i, \hat\eta_j)\} \mid \hat\eta_j\right) = P\left(\{Y_{ij} = \texttt{CLS}^*(Z_i, \hat\eta_j)\} \mid \hat\eta_j\right),
\]
where $\mathcal{F}$ is the space of all viable classification prompts and LLMs able to execute them. That is, $\texttt{CLS}^*(\cdot)$ is the ``oracle'' detection scoring model setup, which when evaluated produces the highest interpretative accuracy score possible under a given description $\hat\eta_j$ under distribution $P$. Arguably, $\bar\theta_j^\textrm{acc}(\hat\eta_j;P)$ is more informative to the researcher than $\theta_j^\textrm{acc}(\hat\eta_j;P)$ about the quality of $\hat\eta_j$, as $\bar\theta_j^\textrm{acc}(\hat\eta_j;P)$ purely quantifies the quality of $\hat\eta_j$, without the wedge introduced by classification failures stemming from $\texttt{CLS}(\cdot)$ itself. Because any feasible choice of $\texttt{CLS}(\cdot)$ yields
\[
P\left(\{Y_{ij} = \texttt{CLS}(Z_i, \hat\eta_j)\} \mid \hat\eta_j\right) \leq P\left(\{Y_{ij} = \texttt{CLS}^*(Z_i, \hat\eta_j)\} \mid \hat\eta_j\right)
\]
one can reasonably interpret $\hat \theta_j^\textrm{acc}$ as a conservative (under)estimate of the more informative quantity $\bar\theta_j^\textrm{acc}(\hat\eta_j;P)$. 
\end{remark}

We may also define additional detection score estimands with respect to either the precision or recall of the autointerp classifier:
\begin{align*}
    \theta_j^\textrm{prec}(\hat\eta_j) &:=P\left(Y_{ij}=1 \mid \hat Y_{ij}=1,\hat\eta_j\right) = \frac{E[\mathbf{1}\{Y_{ij}=1\}\mathbf{1}\{\hat Y_{ij}=1\}\mid \hat\eta_j]}{E[\mathbf{1}\{\hat Y_{ij}=1\}\mid \hat\eta_j]},
    \\
    \theta_j^\textrm{rec}(\hat\eta_j) &:=P\left(\hat Y_{ij}=1 \mid Y_{ij}=1,\hat\eta_j \right) = \frac{E[\mathbf{1}\{Y_{ij}=1\}\mathbf{1}\{\hat Y_{ij}=1\}\mid \hat\eta_j]}{E[\mathbf{1}\{Y_{ij}=1\}\mid \hat\eta_j]}.
\end{align*}
As above, these estimands may be consistently estimated with appropriate asymptotically valid confidence intervals under the $\hat\eta_j$-conditional law using the natural plug-in estimators and the delta method. That said, for the purposes of the empirical applications in the following section, we focus on estimating the canonical detection score based on accuracy, following the common practice in the relevant literature \citep{paulo_automatically_2024}.

Further, recall that we are not typically interested in autointerp evaluation for a fixed $j$, but instead for a particular data-driven choice $\hat\jmath \in \hat J$ suggested by a high-dimensional selective inference procedure.\footnote{Autointerpretation typically incurs non-negligible compute costs, and, as such, researchers will often only be interested in forming meaningful autointerpretations for the subset of discovered concepts.} However, so long as discovery is also performed using the main estimation sample indexed by $\mathcal{I}^\textrm{estim}$, the researcher may treat each $\hat \jmath$ as fixed for the purposes of evaluation inference. Formally, for all $j \in [p]$,
\begin{align*}
     E_P\left[S_{i\hat \jmath} \mid \hat\eta_{\hat \jmath}, \hat \jmath = j \right] = E_P\left[S_{ij} \mid \hat\eta_{j}, \hat \jmath = j \right] =E_P\left[S_{ij} \mid \hat\eta_{j}\right]
\end{align*}
where the last equality follows because, under the proposed independently held-out data scheme,
\[
[\mathbf{1}\{Y_{ij} = \texttt{CLS}(Z_i, \hat\eta_j)\} \ind \hat \jmath] \mid \hat\eta_j.
\]

\begin{remark}[Inference on the best interpretation]
If a researcher is interested in inference on the best performing autointerp description across multiple generation prompts or strategies---say among a set $\{\hat\eta_{j,1}, \hat\eta_{j,2}, \dots\}$ for any $j$ of interest---then the evaluation performance should be estimated with appropriate ``inference on winners'' corrections such as those discussed in \cite{andrews_inference_2024}. 
\end{remark}

\begin{remark}[Evaluation of human-generated descriptions]
It is worth noting that this autointerp evaluation strategy can also serve as a general-purpose description evaluation technology, e.g., for carefully crafted human descriptions of concept indicators in settings where there is a small number of discoveries. The scientific benefit of evaluation in such a setting is also clear: even if there is concern that human-made descriptions are reverse-engineered to support a presupposed conclusion, any description (human made or machine made, motivated or not) that is not supported by the data will be surfaced through the evaluation scores. In this way, description evaluation can be seen as even more critical when automatic methods are not employed by the researcher to generate descriptions.
\end{remark}

\subsection{Pseudocode and Code}\label{sec_pseudo}

In Appendix Section \ref{sec_pseudocode}, we state a simple version of the entire framework for interpretable discovery from unstructured data as a single pseudocode algorithm. We focus on an implementation with text data, sample splitting, two-sided testing, single-step hypothesis testing (i.e., just the first step of Algorithm 2.1 of \cite{romano_control_2007}), and autointerp evaluation via accuracy-based detection scoring. Full implementation code is also available: see the link in Appendix Section \ref{sec_replication}.

\section{Empirical Applications}\label{sec_empirical}

As an illustration of the framework proposed in this paper, we now reanalyze three recent works in empirical economics related to discovery from unstructured data, and show how new, principled, and interpretable discoveries may be made using the same exact data at low cost.

\subsection{Reanalyzing \cite{bursztyn_justifying_2023}}\label{sec_dissent}

The inspiration for Example \ref{ex:rct}, \cite{bursztyn_justifying_2023} study how the provision of ``social cover'' affects willingness to publicly dissent to socially stigmatized causes, and the perception of this dissent. As a key application of their theory, \cite{bursztyn_justifying_2023} run an  information treatment experiment (Experiment 2) in which participants are told they have been matched with another (fictional) respondent that chose to join a campaign to \textit{oppose defunding the police} (a plausible expression of dissent in liberal American politics at the time the experiment was conducted), and then show the participant a tweet that the matched respondent is said to have publicly posted. This tweet has been randomized to either provide or not provide some ``social cover'' for the fictional matched respondent's plausibly contentious statement, namely whether or not the tweet indicated that the matched respondent joined the campaign before reading an evidence-based article in support of it (the no social cover condition) or after (the social cover condition).\footnote{See Online Appendix Figure B.3 of \cite{bursztyn_justifying_2023} for a schematic of the experimental design.}

A key outcome collected by this experiment is the participant's open-ended text response to the question ``Why do you think your matched respondent chose to join the campaign to oppose defunding the police?'' This open-ended text response is meant to capture the causal effect of social cover on the perception of dissent. This is a setting where discovery is of interest: ideally, to form a holistic understanding of how social cover affects perception of dissent, we do not want to manually pre-specify several aspects of perception we are interested in detecting, and want to instead discover \textit{any} interpretable systematic differences that exist in the text data across treatment and control groups. Towards this goal, \cite{bursztyn_justifying_2023} compute a Pearson's $\chi^2$ statistic for all phrases of up to three words per \cite{gentzkow_what_2010}, which they use as an index to rank the phrases that are most differentially expressed in each condition's open-ended responses.\footnote{Specifically, it is a Pearson's $\chi^2$ statistic for a null hypothesis that the propensity to use a given phrase is equal across conditions, per \cite{gentzkow_what_2010}. We interpret these statistics as simply forming an index, however, because the results of these hypothesis tests are not reported, and no multiple hypothesis testing corrections are implemented.} The interpretability of the results of this analysis is, naturally, hindered by the coarseness of the featurization of the text as three word phrases, as well as the fact that there are no obvious estimands or inferential guarantees. The only qualitative conclusion gleaned by \cite{bursztyn_justifying_2023} from this quantitative exercise is: ``we find that respondents in the Cover condition are more likely to use phrases related to the article or the associated evidence—for example, `article,' `read,' `convincing,' or `increase in crime.'\,''\footnote{The Online Appendix Table B.11 of \cite{bursztyn_justifying_2023} contains the top ten characteristic phrases in each condition based on the $\chi^2$ index.}

Can we make more (and more interpretable) discoveries with the framework proposed in this paper? To investigate, we implement a concept embedding based on the Temporal Sparse Autoencoder (T-SAE) \citep{bhalla2026temporal}.
We implement an autointerp pipeline detailed in Section \ref{app_autointerp} using as explainer LLMs Anthropic's Claude Sonnet (High) for generation and Google's lightweight Gemma 4 31B for evaluation.\footnote{Gemma 4 is open-source, and as such evaluations of generated descriptions remain readily reproducible, conditional on generated descriptions.} A small held-out evaluation dataset is created by sample splitting, using just 15\% of the original dataset in \cite{bursztyn_justifying_2023}. 

The hypotheses of interest are, for all $j \in [p]$, 
\[
H_{0,j}: \theta_j(P) = E[Y_{ij}(1) - Y_{ij}(0)]=  E\left[\frac{W_{i} - \pi}{\pi(1-\pi)}Y_{ij}\right] = 0,
\]
for $\pi = 0.5$, and where the event $\{W_i = 1\}$ indicates the social cover condition. This yields a dataset where $n = 879$ and $p = 4697$ (non-degenerate features---see Remark \ref{rmk_degen}). 

We may first consider whether or not, as anticipated, simply controlling the FWER would be too conservative for the purposes of meaningful discovery in this setting. To explore this, we set $k=1$ and $\alpha=0.05$ (equivalently, we implement the high-dimensional FWER controlling procedure of \cite{belloni_high-dimensional_2018}), and we obtain just 5 rejections of nulls: a 14.8 percentage point causal effect on the presence of concept 1786; a 14.8 p.p. causal effect on the presence of concept 8565; a 8.9 p.p. causal effect on concept 3869; a 12.5 p.p. causal effect on concept 4335; and a 11.1 p.p. causal effect on concept 1503. The automatic interpretations of these concepts corroborate the original qualitative findings of \cite{bursztyn_justifying_2023}: concept 1786 is described as activating on mentions of ``references to the Washington Post as a cited source''; concept 8565 on ``the word `article' referring to a cited source'';  concept 3869 on ``reading or encountering an article with information''; concept 4335 on ``references to articles or editorials as persuasive sources''; and concept 1503 on ``verbs and phrases describing what an article says or implies.'' However, we only discover 5 causal effects out of nearly 5000, and not much new insight relative to the original analysis---even if the discovery and interpretation of these concepts was transparent, principled, and automatic.

\begin{figure}[ht]
\centering
\includegraphics[width=1.0\textwidth]{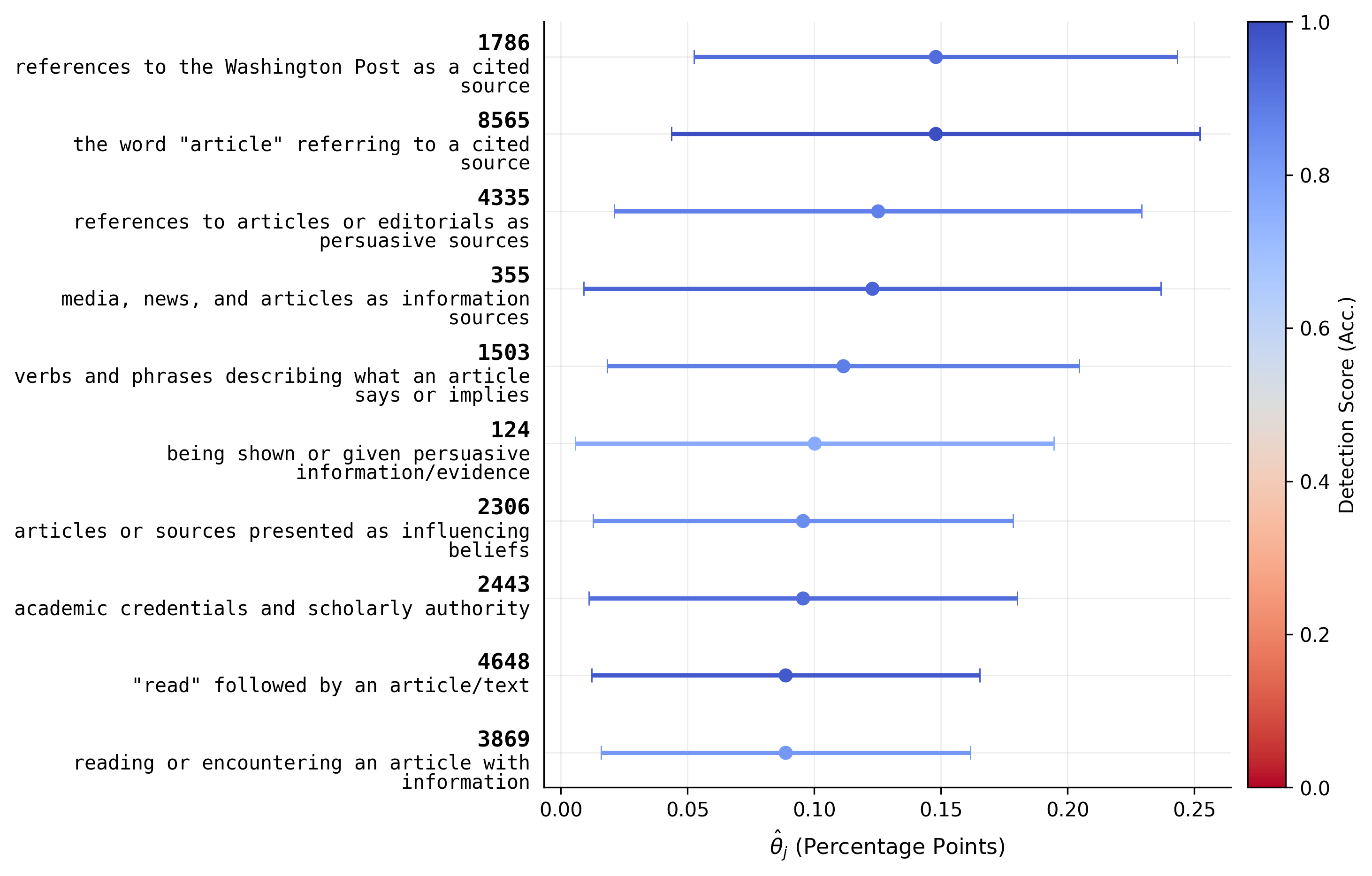}
\caption{Well-Interpreted Discoveries for Experiment 2 of \cite{bursztyn_justifying_2023}, Rank 1-10.}
\label{fig:dissent1}
\end{figure}

We now apply the discovery framework setting $k=5$, still with $\alpha = 0.05$, per the method of Theorem \ref{thm_stu_kfwer}.\footnote{See Remark \ref{rmk_placebo} for how this choice of $k$ was made. In practice, placebo tests reveal that the $k$-FWER \textit{in this particular example} seems reliable even at fairly large $k$, i.e., $k$ can be over 1700 before encountering more rejections than should be permitted.} Even with control of $k$-FWER at just $k=5$, we obtain 15 rejections of the null, or 15 discoveries. By large sample guarantee of Theorem \ref{thm_stu_kfwer}, the probability of making 5 or more false rejections with this procedure is at most 5\% across repeated samples, and so, informally speaking, in the frequentist sense, we would have high confidence that \textit{at least} 11 of these 15 discoveries are genuine.\footnote{To get a better sense for which discoveries are plausibly true, informally, one might consider re-running the algorithm of Theorem \ref{thm_stu_kfwer} at increasing values of $k$, and noting at which $k$ each hypothesis enters the selected set.}

In Figure \ref{fig:dissent1}, we display the 10 largest (by $|\hat\theta_j|$) of the \textit{well-interpreted} discoveries, which we define as discoveries with a point estimate for the detection accuracy score greater than 0.75---a benchmark derived from the median Claude detection score reported in \cite{paulo_automatically_2024} for autointerp on large-scale, open-source text corpora. Such a plot can be thought of as a concise summary of the outputs of the entirety of the discovery framework. The remaining well-interpreted discoveries can be viewed in Appendix Section \ref{sec_addlplots}.

In this plot, the previous concepts discovered with (1-)FWER control appear, as well as others absent from both that analysis and the original analysis in \cite{bursztyn_justifying_2023}. 
Though there is some conceptual redundancy in the discoveries---which we may have anticipated, based on earlier discussion related to, e.g., \cite{bhalla_sparse_2026}---in totality these discoveries recover many findings in \cite{bursztyn_justifying_2023} that otherwise required full additional experiments to learn. 
For example, concept 2443 responds to ``academic credentials and scholarly authority'' and increases by nearly 10 percentage points in the treatment group, indicating that---exactly as \cite{bursztyn_justifying_2023} posit, and run a separate experiment to further probe---the social cover mechanism plausibly relies on the perceived credibility of the article cited. Concept 7927 (the 13th largest well-interpreted discovery, viewable in Appendix Figure \ref{fig:dissent2}) responds to ``being persuaded or convinced by arguments,'' and its probability of activation is increased by just over 7 percentage points in the treatment group. This is direct validation of the authors' manipulation leading to ``the (implicit) justification that they joined the campaign because they were persuaded by the article’s claims''---exactly as intended, but something that \cite{bursztyn_justifying_2023} otherwise needed to run Auxiliary Experiment 4 to confirm.

Importantly, all of this analysis was performed automatically, in a handful of minutes, on a single A100 GPU in Google Colab (accessed for free, as a student or university affiliate); the API calls to Claude cost only cents, and could also be replaced with open-source model generations for free. Though this analysis was not preregistered, it is possible to cheaply and automatically replicate these results, and assess their sensitivity to different autointerpretation strategies, concept embedding implementations, and testing hyperparameters. Notably, the space of possible concepts was not pruned in any way prior to this analysis, nor were they weighted in some way towards topics related to articles. The analysis was conducted without human intervention, with no room for data snooping.

\subsection{Reanalyzing \cite{stantcheva_why_2024}}

The inspiration for Example \ref{ex:desc}, \cite{stantcheva_why_2024} conducts surveys with representative samples of the United States population investigating attitudes towards inflation. These surveys include open-ended text responses to a variety of questions, with the goal of discovering attitudes and opinions that the researcher should not (for concern about priming respondents) or could not (for lack of imagination) manually pre-specify.\footnote{See, e.g., \cite{haaland_understanding_2024} for more discussion of the benefits of open-ended survey questions for understanding economic behavior.}


An important open-ended prompt that \cite{stantcheva_why_2024} solicit an answer to is ``High inflation is caused by...'' This  prompt is specifically highlighted by the Brookings Institution paper summary of \cite{stantcheva_why_2024}, signaling its policy relevance.\footnote{See \url{https://www.brookings.edu/articles/why-do-we-dislike-inflation/}.} In the original analysis, \cite{stantcheva_why_2024} codes open-ended responses with a discretionary keyword-based method introduced in \cite{ferrario_eliciting_2022}.\footnote{The method suggests that the researcher create a list of topics and associated keywords that denote their presence via anything from ``manual to semi-supervised or unsupervised'' means.} The result of applying this analytic technique to the open-ended responses to the above prompt yields Figure 3 in \cite{stantcheva_why_2024}, which finds, among other things, that mentions of ``Biden and the administration,'' ``Greed,'' ``Monetary policy,'' ``Fiscal policy,'' ``War and foreign policy,'' ``Demand vs supply,'' and ``Supply-side mechanisms (other than input prices)'' appear in more than 5\% of all responses. Discussion of ``input prices,'' ``energy prices,'' ``demand-side mechanisms,'' ``people earning higher incomes,'' ``government debt,'' and ``COVID-19'' appear in more than 2\% of responses.

Do we make the same discoveries automatically if we apply the proposed framework instead? Do we make more and new interpretable discoveries? To find out, we use the same concept embedding and explainer LLM setup as in Section \ref{sec_dissent}, but reduce the space of SAE features by half, filtering out all features that, in the corpus on which the SAE was trained, had greater than median empirical token activation frequency. This choice of dimensionality reduction is meant to screen out features that activate in many texts across domains, e.g., features related to common grammatical features of text, which was handled implicitly in the previous causal analysis by virtue of differencing. (Ideally, such a dimensionality reduction choice would be preregistered, to prevent cherry-picking by filtering, but could also easily be analyzed for sensitivity.) Again we split the sample randomly to form a held-out evaluation dataset consisting of 15\% of the original dataset analyzed in \cite{stantcheva_why_2024}.

We now test hypotheses $H_{0,j}: \theta_j(P) =  E[Y_{ij}] = P(Y_{ij}=1)= 0$; naturally, in this setting where probabilities are being tested, of actual scientific interest is the dual of this testing procedure, i.e., the point estimates for the probability of discussion along with generalized simultaneous confidence intervals. This yields a dataset with $n=428$ and $p=1173$ non-degenerate features. We now apply 5-FWER control with $\alpha = 0.05$ based on the result in Theorem \ref{thm_stu_kfwer}, and we make 186 interpretable discoveries. In Figure \ref{fig:infl1}, as in Figure \ref{fig:dissent1}, we plot the 10 largest well-interpreted discoveries; in Figure \ref{fig:infl2} we plot the next 10 largest. The remaining well-interpreted discoveries can be similarly viewed in Appendix Section \ref{sec_addlplots}. 

Notably, among the full 186 well-interpreted discoveries made, we recover many of the features coded from the original analysis in \cite{stantcheva_why_2024} without any human discretion or intervention: concept 2493 is ``blaming government or Biden for inflation''; concept 2189 is ```Greed' as a concept or word''; concept 3551 is ``Federal Reserve and monetary policy concepts''; concept 3029 is ``government fiscal mismanagement or overspending''; concept 1009 is ``excess demand outpacing supply'';  concept 3121 is ```supply chain' and physical goods/commodities''; concept 2105 is ``fuel, oil, and gas prices''; concept 453 is ```consumer' and `demand' in economic contexts''; concept 3672 is ``wages, labor costs, and worker pay increases''; concept 949 is ``national debt and government borrowing''; and concept 96 is ``global and international scale references.''\footnote{Concept 1138 corresponds to ``economic crisis, recession, and COVID-related economic causes''; it has a detection score point estimate of 0.723 and so just misses the cutoff for ``well-interpreted.''}

Many additional discovered concepts add nuance to these broad categories, e.g., for greed, concept 3498 is ``corporate greed and capitalist wealth,'' or for blaming the Biden administration, concept 697 is ``misplaced priorities or focus directed elsewhere.'' Furthermore, many of these well-interpreted discoveries contribute to some entirely new insights. Concept 221 is described as ``first-person expressions of uncertainty or ignorance'' and appears in over \textit{one-third} of the responses---a strong signal that is not coded in the original analysis, and what one might call an important ``unknown unknown.'' Discovered concepts 20, 925, 323, 574, 802, and 3904, among others, are all characterized by discussion of uncertainty or hedging language, further reinforcing the interpretation that many Americans are simply unsure of (or confused about) the causes of inflation, as opposed to subscribers of a particular pet theory. Concept 415 is ``commas and conjunctions separating listed causes or factors,'' which, among others, indicates that many surveyed have multiple factors in mind.


\begin{figure}[ht]
\centering
\includegraphics[width=1.0\textwidth]{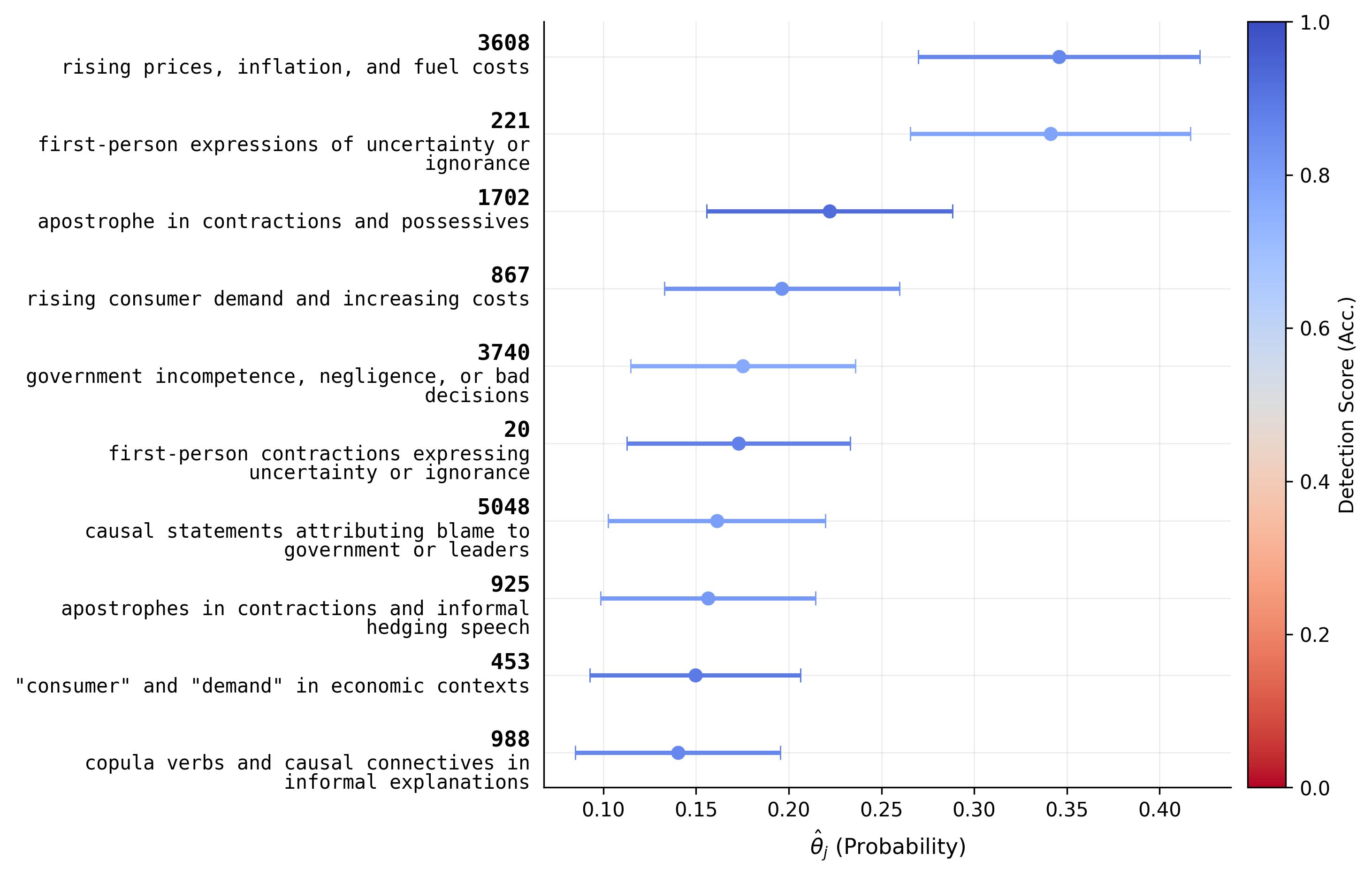}
\caption{Well-Interpreted Discoveries for Responses to Prompt ``High inflation is caused by...'' in \cite{stantcheva_why_2024}, Rank 1-10.}
\label{fig:infl1}
\end{figure}

\begin{figure}[ht]
\centering
\includegraphics[width=1.0\textwidth]{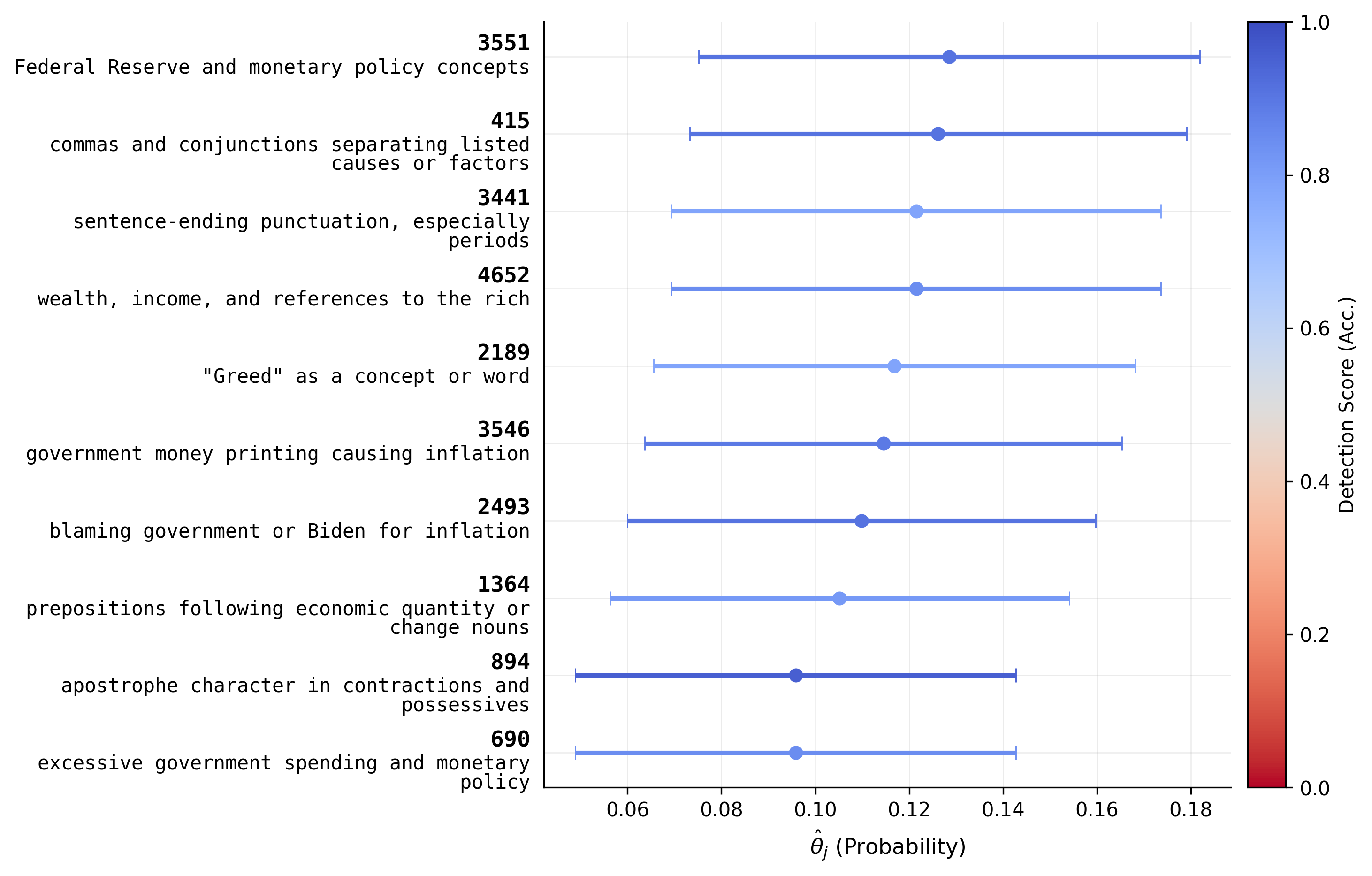}
\caption{Well-Interpreted Discoveries for Responses to Prompt ``High inflation is caused by...'' in \cite{stantcheva_why_2024}, Rank 11-20.}
\label{fig:infl2}
\end{figure}

Once again, this procedure was inexpensive and fast to implement, and could be quickly replicated and analyzed for sensitivity by any other researcher. No text preprocessing was required. Beyond filtering out greater than median activating concepts, no other choices were made to delimit the space of possible discoveries, or focus them in some way on the space of economic- or inflation-relevant concepts; the exact same explainer LLMs and SAE that yielded the results of Section \ref{sec_dissent} yielded the results in this section.

\subsection{Reanalyzing \cite{noy_experimental_2023}}

The inspiration for Example \ref{ex:hci}, \cite{noy_experimental_2023} are interested in studying the causal effects of generative AI (GenAI) assistance on task productivity. Specifically, they run a RCT where midlevel professional writing tasks are assigned to hundreds of college-educated professionals, and access to ChatGPT (GPT 3.5) is randomly provided to half of the participants. In practice, in the ChatGPT-exposed treatment arm, there is 80\% adoption of the AI tool for completing the writing assignment. The headline findings of the original experiment (Figure 1) are that average time spent on the writing task decreases by 40\%, and a human-assessed quality ``grade'' (scored from 1-7) increases by 18\%.

Perceived quality of the responses increased, but why? Can we discover interpretable causal effects from using GenAI on a writing task based on the written outputs themselves? To investigate this, we apply the discovery framework to a subset of the experimental sample consisting of HR professionals and managers, all of whom were given the same writing task: drafting a company-wide email on a sensitive topic. Again, we use the same concept embedding and explainer LLM setup as in Section \ref{sec_dissent}, with no empirical frequency filtering, and a held-out evaluation sample consisting of 20\% of the original (email task) dataset. Task outputs are truncated to maximum length of 1500 characters. 

The hypotheses tested are of the form discussed in Example \ref{rmk_manylinear},
\[
H_{0,j}: \theta_j(P) = E_P[\widetilde T_i^2]^{-1} E_P[\widetilde T_i\,\widetilde Y_{ij}]=0,
\]
where $\widetilde Y_{ij}$ is an indicator for concept $j$ in the concept embedding and $\widetilde T_i$ is the treatment indicator, both residualized with the same controls used in the participant-level regression specification that produced the primary results of the original study (i.e., the specification S.1 in the supplemental appendix of \cite{noy_experimental_2023}). This yields a dataset with $n=138$ and $p=5435$ non-degenerate features. We again apply 5-FWER control with $\alpha = 0.05$, now based on the result in Theorem \ref{thm_aif_kfwer}, and we make 128 interpretable discoveries. In Figure \ref{fig:chatgpt1}, we again plot the 10 largest well-interpreted discoveries, and in Figure \ref{fig:chatgpt2} the next 10 largest, with the remaining well-interpreted discoveries viewable in Appendix Section \ref{sec_addlplots}. 

The discoveries made suggest that ChatGPT assistance has large causal effects on the inclusion and exclusion of specific phrases and words. Concepts 4179, 1060, and 3320 pick up on GenAI causing emails to use the specific phrase ``I hope this email finds you well.'' The causal effect on concept 5073 indicates the removal of ``Good...'' as an email salutation. A large causal effect on concept 3644 indicates the words ``finds'' and ``found'' go up in treatment; similarly for the phrase ``want to'' (concept 2861) and ``we believe'' (concept 4234) and ``concerned'' (5208) and the transitional phrases ``in addition'' or ``additionally'' (concept 4465). Broadly, these findings corroborate a growing literature in human-computer interaction documenting that AI assistance on tasks homogenizes outputs, narrowing creative diversity \citep{jo_incentives_2026}.

Concepts 4299, 2689, 9037, 7200, and 1811 all increase significantly in the treatment group, and map to usage of language indicating understanding, (re)assurance, and collaboration---that is, positively valenced, empathetic language. Other discoveries, such as the negative causal effects on concepts 2147, 908, 6462, 4029, and 2431 suggest that ChatGPT exposure reduces rhetorical/linguistic/grammatical informality, e.g., reducing sentence-ending exclamation points and informal use of words like ``so,'' ``get,'' and ``go.'' Collectively, these insights hint at possible mechanisms for the perceived increase in email quality.

\begin{figure}[ht]
\centering
\includegraphics[width=1.0\textwidth]{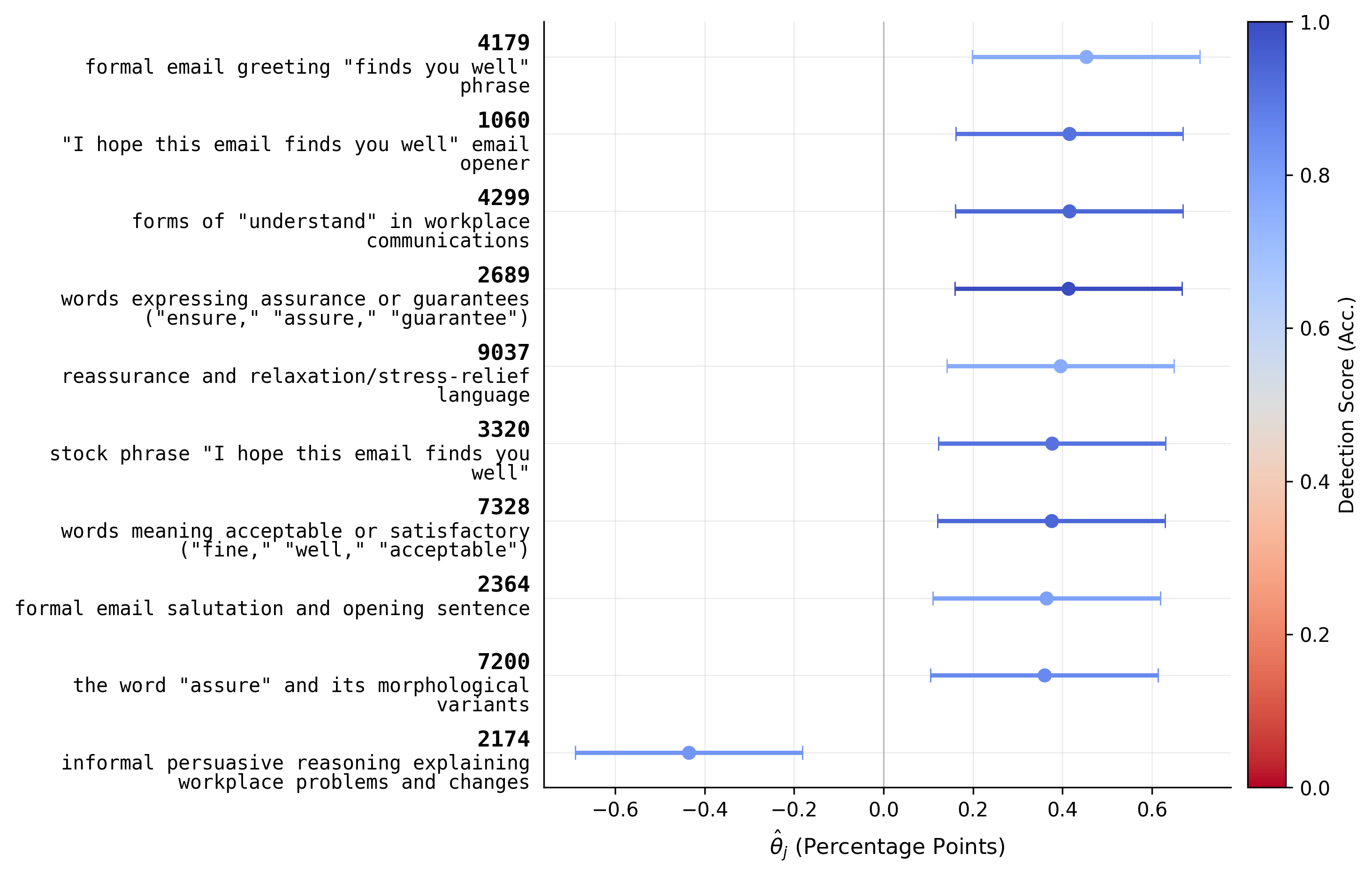}
\caption{Well-Interpreted Discoveries for \cite{noy_experimental_2023}, Rank 1-10.}
\label{fig:chatgpt1}
\end{figure}

\begin{figure}[ht]
\centering
\includegraphics[width=1.0\textwidth]{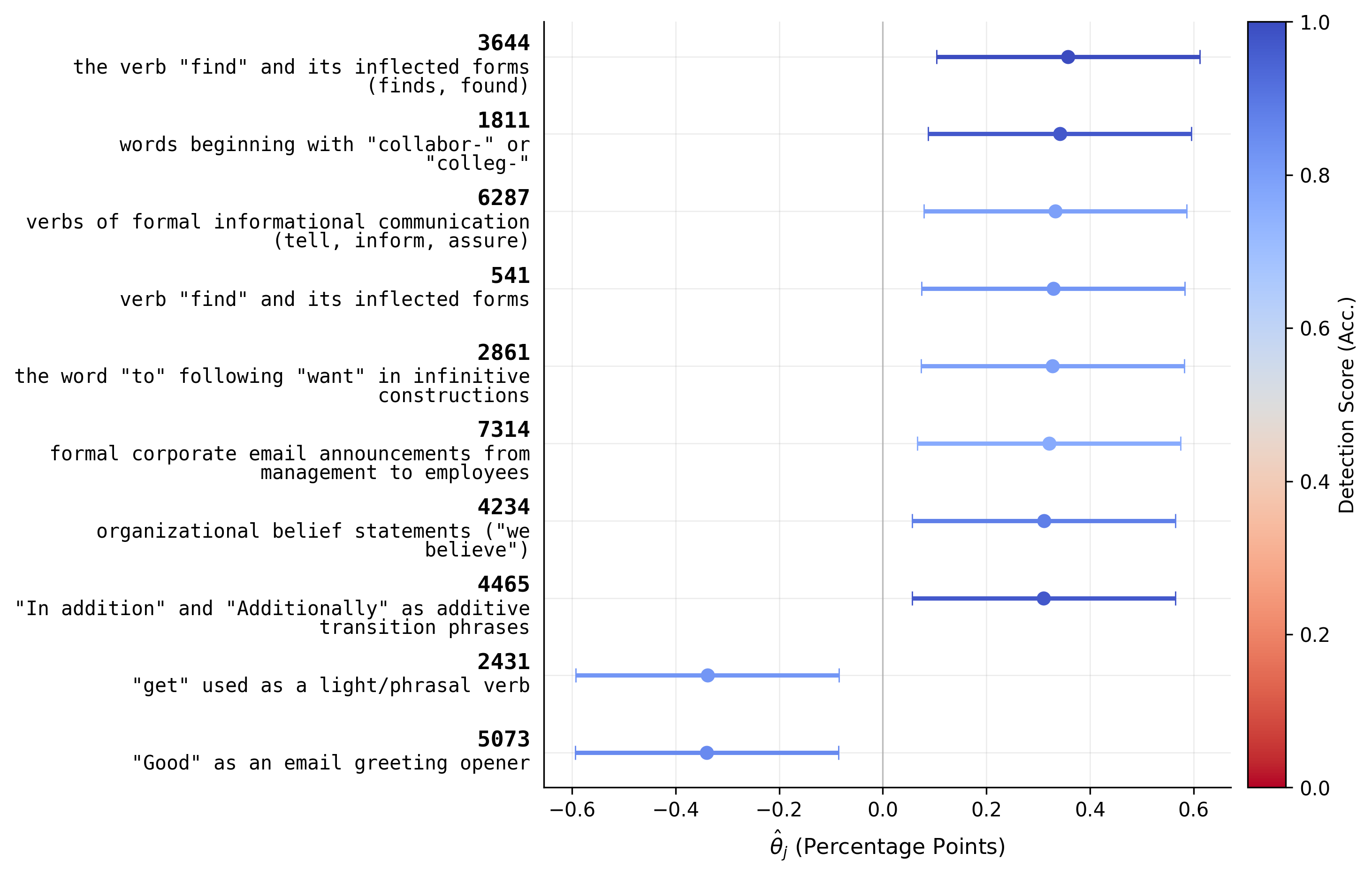}
\caption{Well-Interpreted Discoveries for \cite{noy_experimental_2023}, Rank 11-20.}
\label{fig:chatgpt2}
\end{figure}


\section{Conclusion}

Literatures in empirical economics and econometrics have increasingly suggested the importance of open-ended discovery from high-dimensional or unstructured data (e.g., \cite{ludwig_machine-learning_2017,ludwig_machine_2024,chernozhukov_fisherschultz_2025}). The framework proposed in this paper shows how newly validated statistical procedures for high-dimensional multiple hypothesis testing, when combined with the latest innovations in interpretability methods for AI models, can facilitate open-ended, interpretable discovery at scale with higher practicality and higher fidelity than previously possible.

It is worth noting that problems surrounding post-selection inference, researcher degrees of freedom, and motivated data mining have been taken very seriously in other literatures, such as the analysis of RCTs in economics, where these concerns motivated the creation of preanalysis plans (PAPs) for experiments. One analogy for the proposed framework, then, is that it provides similar value to a PAP for the setting of discovery from unstructured data. Just like a PAP, the proposed framework can be accompanied by contingent, complementary analyses of the unstructured dataset, so long as these analyses are evaluated by readers with the appropriate skepticism \citep{banerjee_praise_2020}. However, unlike writing a comprehensive PAP for an experiment, the proposed framework places very little burden on the researcher: the same default template for analysis (the concepts measured by the SAE) can be reused across many diverse empirical settings; in fact, the credibility of analysis is bolstered by not deviating from the default choice of SAE.\footnote{When the unstructured data being analyzed literally comes from an experiment, preregistration of the proposed framework in an actual PAP is also easily declared (it simply involves stipulating a particular choice of LLM and SAE under the proposed concept embedding).}

Though useful for experiments, it is also important to note that a comparative advantage of the proposed framework may lie in its application to observational datasets. In experimental settings (especially when experiments are cheap to run), it may not be necessary to pursue statistically principled discovery by carefully leveraging two stages of analysis, following the blueprint of \cite{ludwig_machine_2024}: the first stage can perform scientifically informative but not necessarily statistically valid discovery of important concepts (e.g., in a pilot experiment), and the second stage can implement a more statistically powerful experiment that formally assesses a few important concepts surfaced from the first stage. However, in settings featuring observational data, or even certain experimental settings, the collection of a new, independent dataset to be used solely for formal evaluation is often infeasible (e.g., a historical setting for which only one sample exists, or a very expensive large-scale experiment), making statistically valid exploration  within a \textit{single} dataset crucial---something accomplished by the proposed framework. That said, even when viable, two stage procedures for discovery and testing suffer from some of the same scientific problems as sample splitting on a single dataset, owing to discretion over what concepts are ported into the second stage for measurement. The proposed framework therefore still adds value to readers concerned with cherry-picking and motivated data mining across stages. Furthermore, the question of \textit{how} to structure open-ended discovery in the first stage---even if not statistically grounded---is an open one, and the proposed framework provides support for the use of \textit{pretrained} SAEs to discover concepts when coupled with appropriate evaluation methods (complementing work by \cite{movva_sparse_2025,wang_your_2026} on the value of \textit{learning} SAEs for concept discovery).

Having a well-defined and intuitive quantitative method for evaluating the quality of concept descriptions is important in the context of the proposed framework, as it is these descriptions on which the scientific conclusions of discovery ultimately rest. By formulating autointerp evaluation as a statistical inference problem, the importance of evaluation with respect to the population of interest $P$ becomes salient (``locality''); the need for assembling a held-out dataset becomes clear; the role for confidence intervals becomes legible by appeal to sampling uncertainty (c.f., \cite{miller_adding_2024}); and we further unlock useful connections to the literature on post-selection inference, such that we may discipline settings evaluating a multiplicity of interpretations (c.f., \cite{andrews_inference_2024}).

It is important to reiterate that the proposed framework is compatible with many possible high-dimensional measurement technologies of concepts. The use of sparse dictionary learning models, and SAEs in particular, is viewed as a powerful \textit{default} recommendation for creating high-dimensional concept embeddings: it is highly expressive and general as a measurement technology, compute-inexpensive, and its default adoption further limits scope for motivated data mining. However, alternative choices of high-dimensional concept embedding may be more appealing depending on the use case of interest---ranging from collections of data-independently curated zero-shot classifiers to alternative interpretability models like Predictive Concept Decoders \citep{huang2025predictive}.  An important future direction of work might also involve engineering a new ``social science default'' concept embedding that is tailored to a large space of highly social science relevant concepts.\footnote{We thank Jenny S. Wang for this suggestion.}


The proposed framework is most naturally viewed as one tool of many in the empirical researcher's toolkit for making discoveries from unstructured data. Using this framework alongside others that researchers may already be implementing is complementary, and would only serve to deepen insights into possible interpretations of inference on unstructured data.

\clearpage

\bibliography{ref.bib}

\clearpage

\section{Appendix}

\subsection{Replication}\label{sec_replication}

The results from the empirical applications section of this paper can be replicated using the open-source replication data from \cite{bursztyn_justifying_2023}, \cite{stantcheva_why_2024}, and \cite{noy_experimental_2023}, in addition to the GitHub repository:  \url{https://github.com/jscarlson/hdmht-discovery}. 

\subsection{Additional Plots}\label{sec_addlplots}

\begin{figure}[H]
\centering
\includegraphics[width=0.9\textwidth]{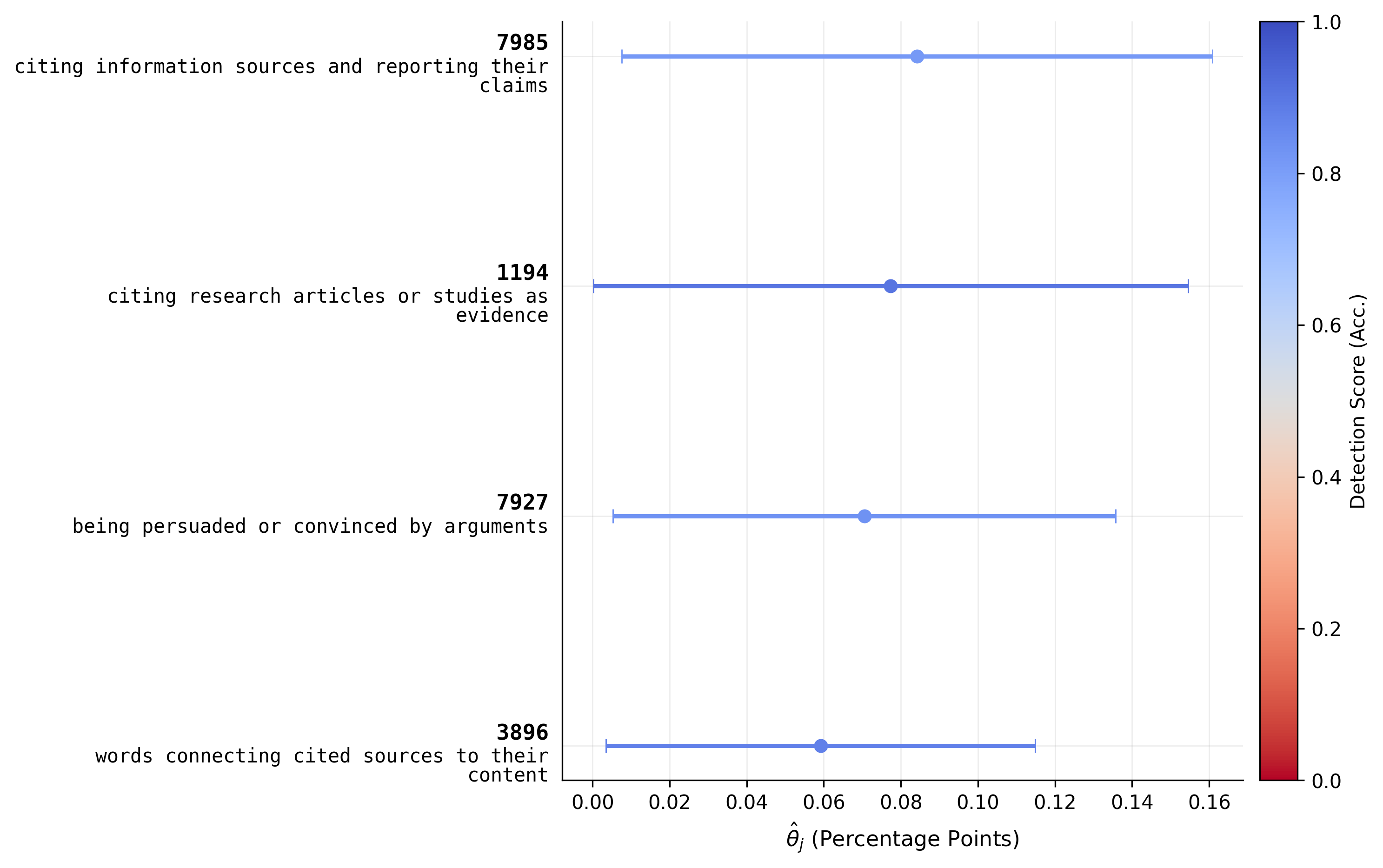}
\caption{Well-Interpreted Discoveries for Experiment 2 of \cite{bursztyn_justifying_2023}, Rank 11-.}
\label{fig:dissent2}
\end{figure}

\begin{figure}[H]
\centering
\includegraphics[width=0.9\textwidth]{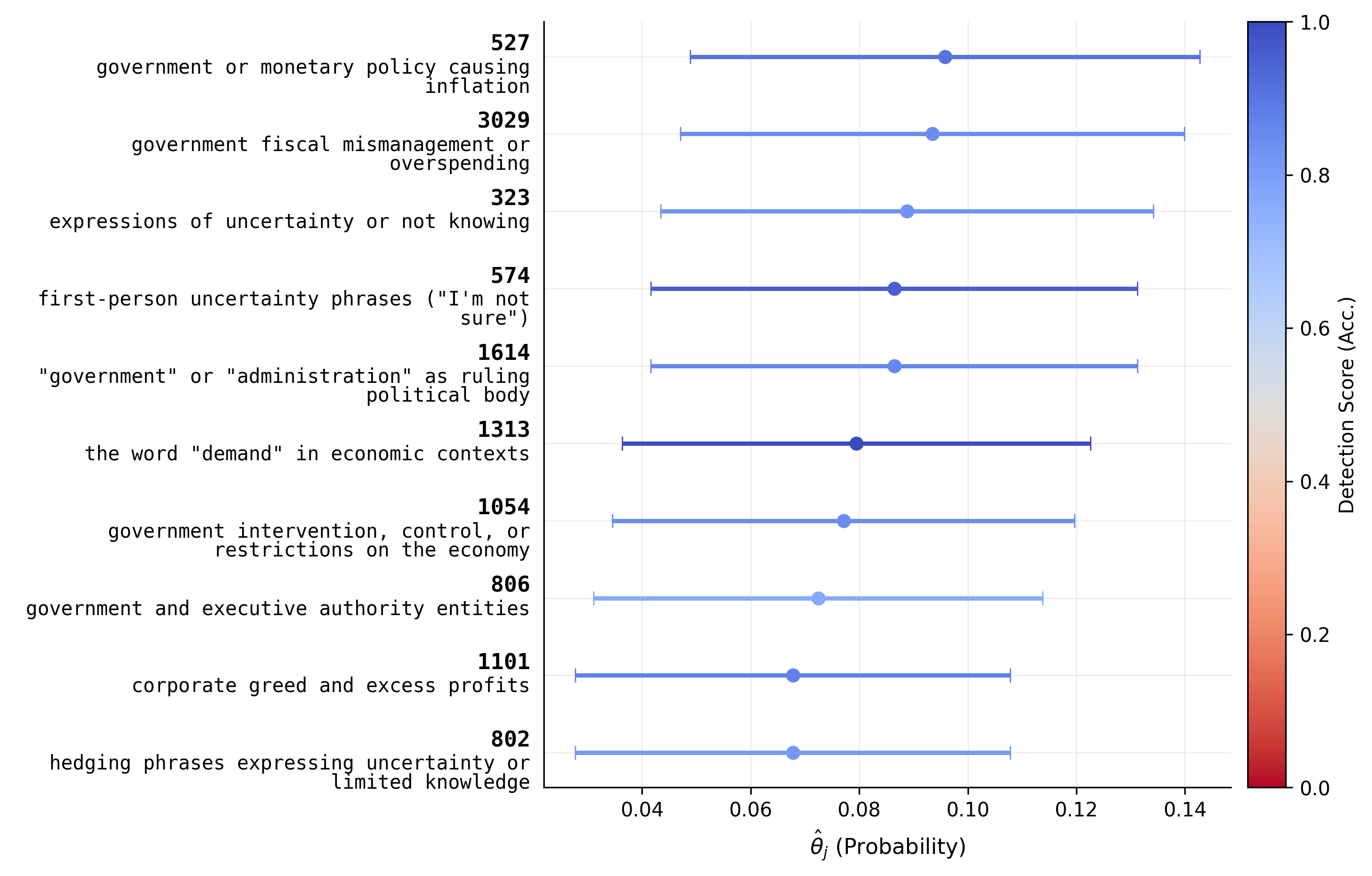}
\caption{Well-Interpreted Discoveries for Responses to Prompt ``High inflation is caused by...'' in \cite{stantcheva_why_2024}, Rank 21-30.}
\label{fig:infl3}
\end{figure}

\begin{figure}[H]
\centering
\includegraphics[width=0.9\textwidth]{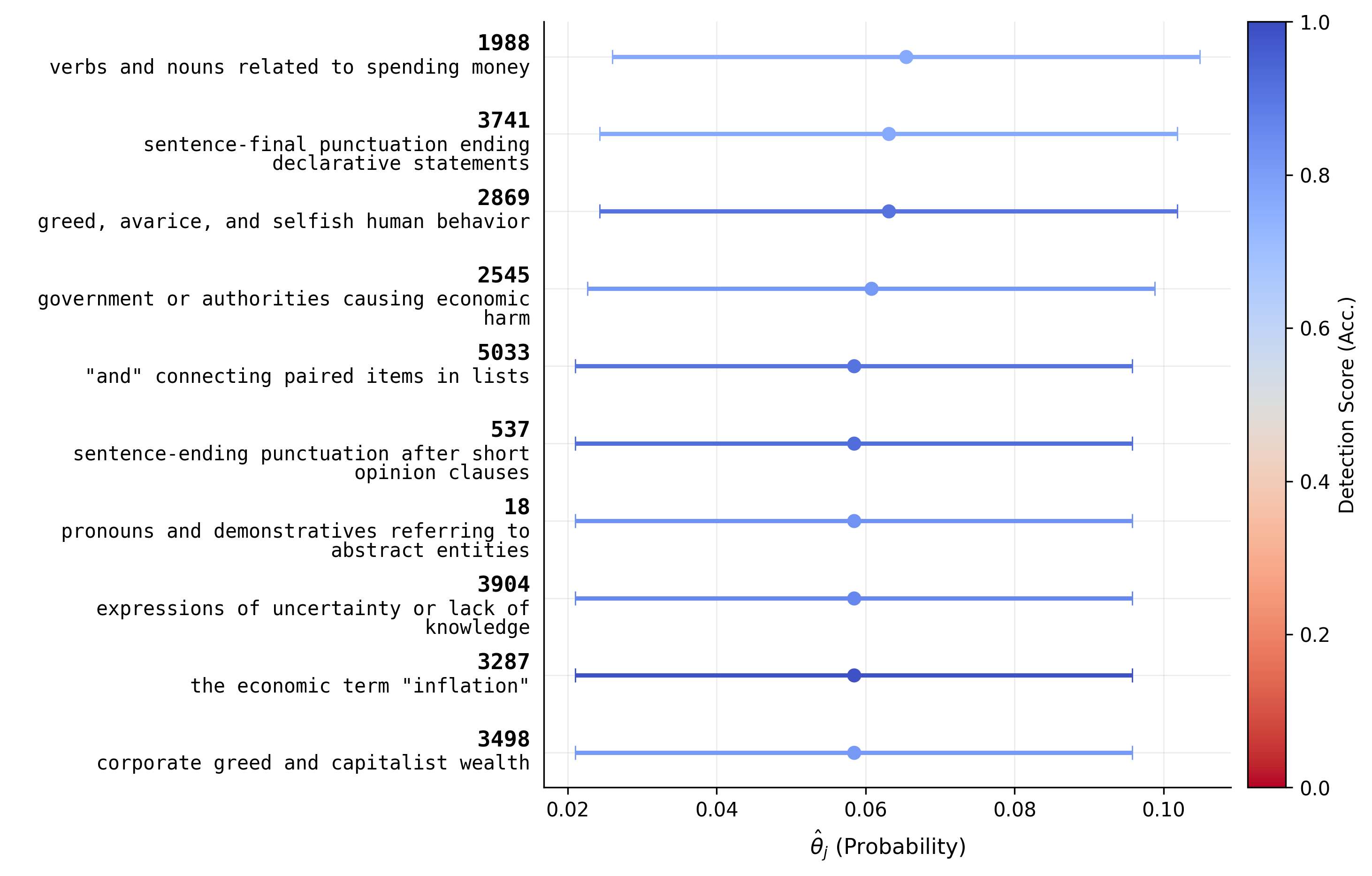}
\caption{Well-Interpreted Discoveries for Responses to Prompt ``High inflation is caused by...'' in \cite{stantcheva_why_2024}, Rank 31-40.}
\label{fig:infl4}
\end{figure}

\begin{figure}[H]
\centering
\includegraphics[width=0.9\textwidth]{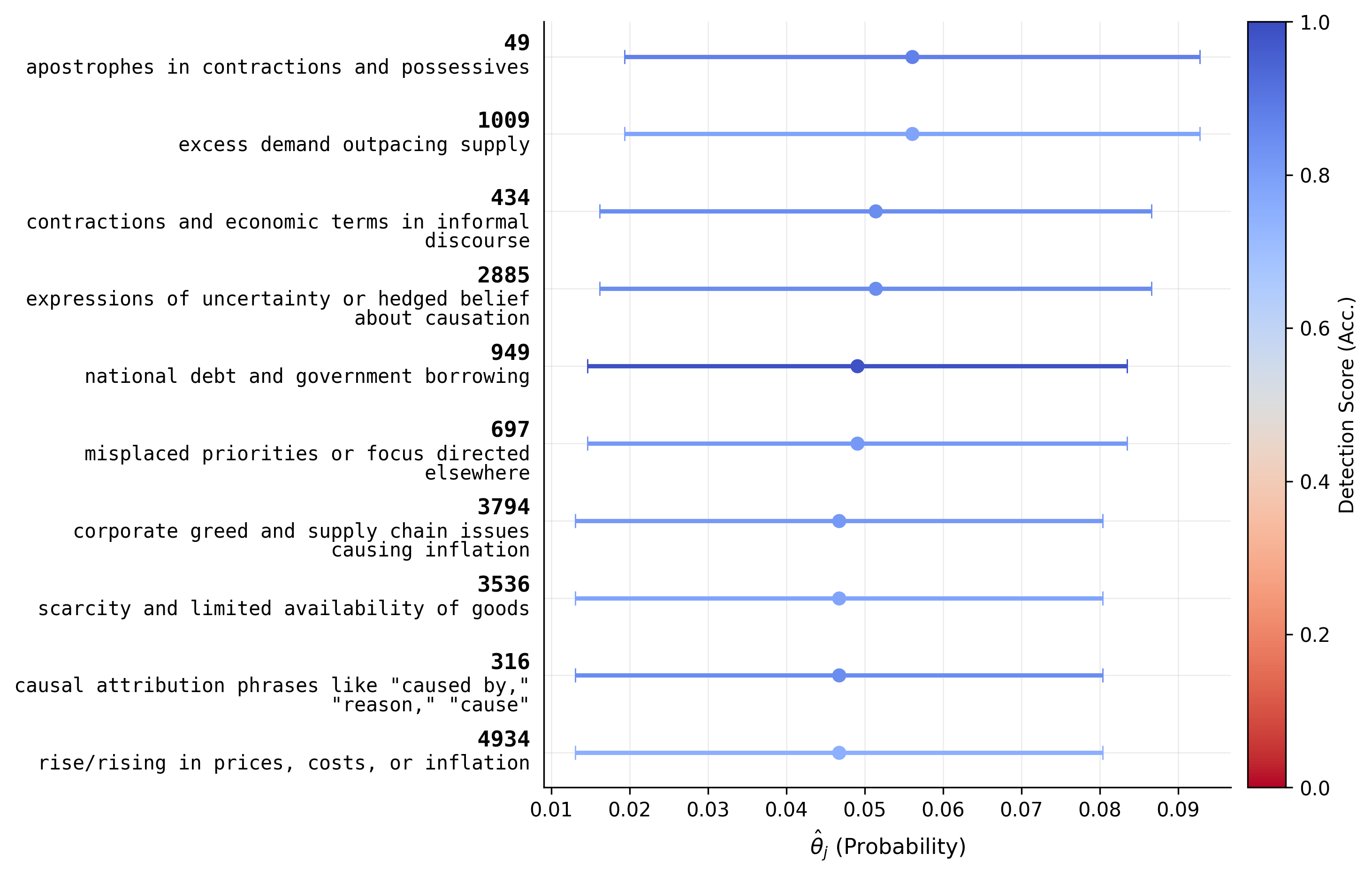}
\caption{Well-Interpreted Discoveries for Responses to Prompt ``High inflation is caused by...'' in \cite{stantcheva_why_2024}, Rank 41-50.}
\label{fig:infl5}
\end{figure}

\begin{figure}[H]
\centering
\includegraphics[width=0.9\textwidth]{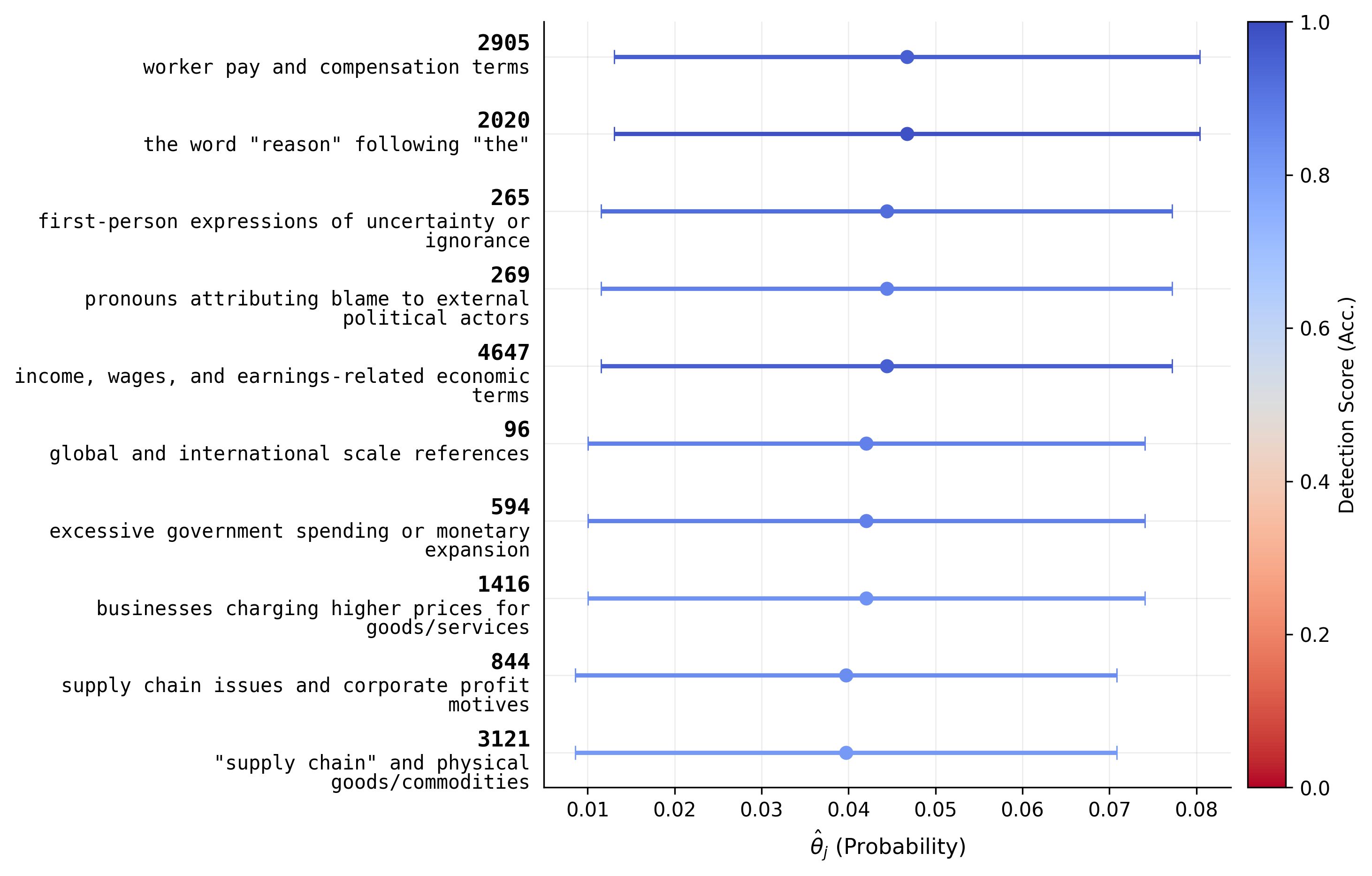}
\caption{Well-Interpreted Discoveries for Responses to Prompt ``High inflation is caused by...'' in \cite{stantcheva_why_2024}, Rank 51-60.}
\label{fig:infl6}
\end{figure}

\begin{figure}[H]
\centering
\includegraphics[width=0.9\textwidth]{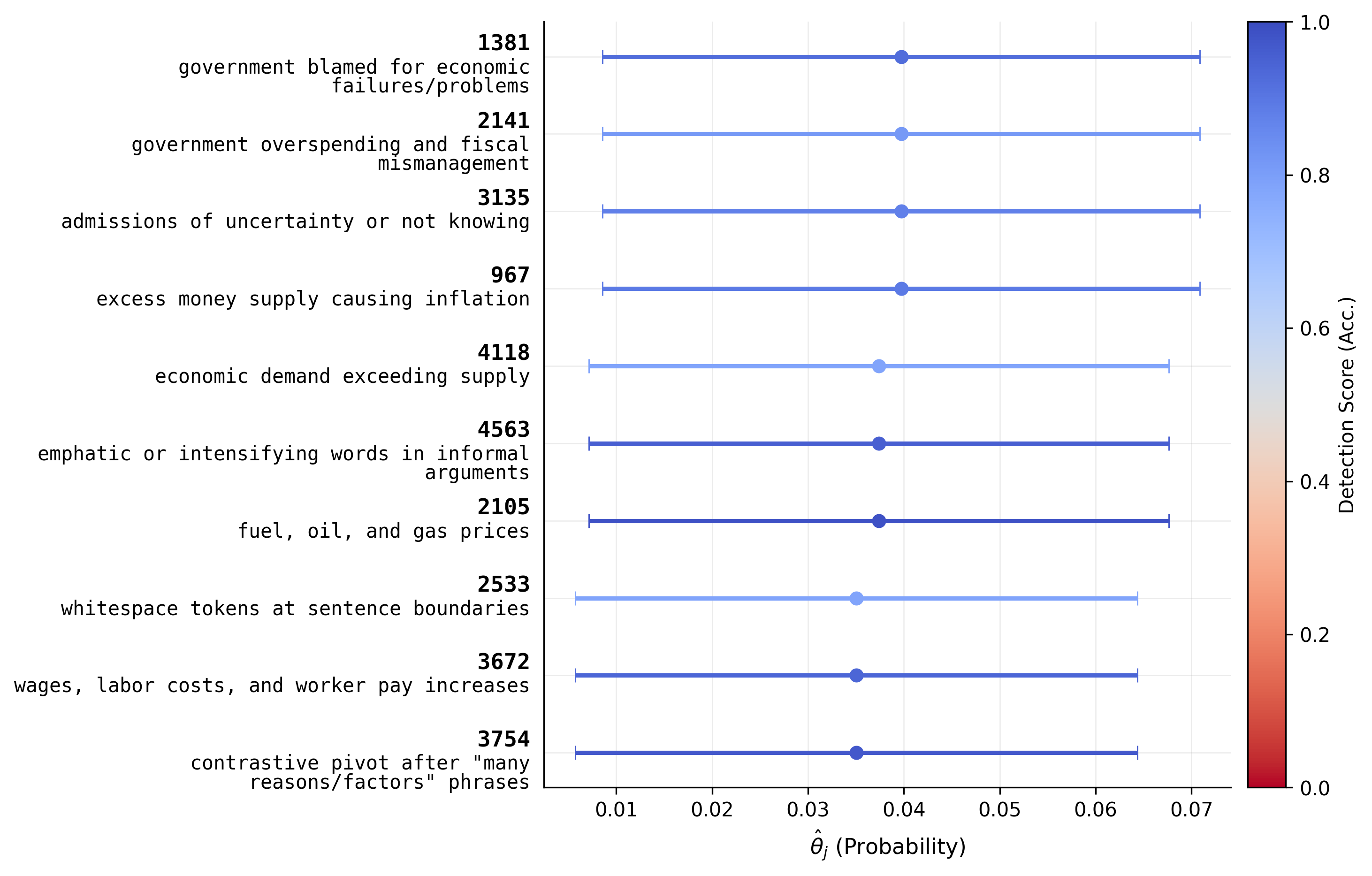}
\caption{Well-Interpreted Discoveries for Responses to Prompt ``High inflation is caused by...'' in \cite{stantcheva_why_2024}, Rank 61-70.}
\label{fig:infl7}
\end{figure}

\begin{figure}[H]
\centering
\includegraphics[width=0.9\textwidth]{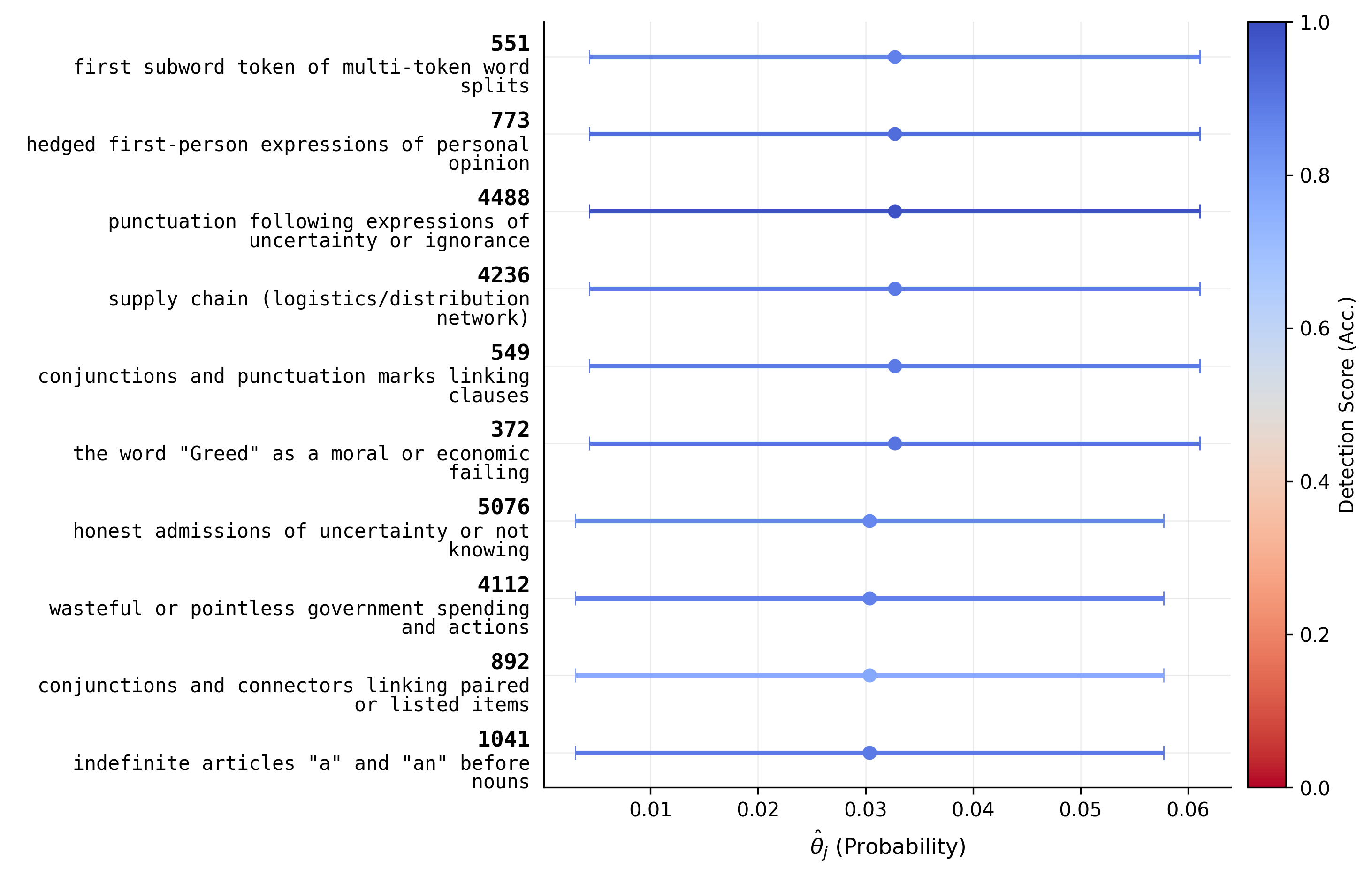}
\caption{Well-Interpreted Discoveries for Responses to Prompt ``High inflation is caused by...'' in \cite{stantcheva_why_2024}, Rank 71-80.}
\label{fig:infl8}
\end{figure}

\begin{figure}[H]
\centering
\includegraphics[width=0.9\textwidth]{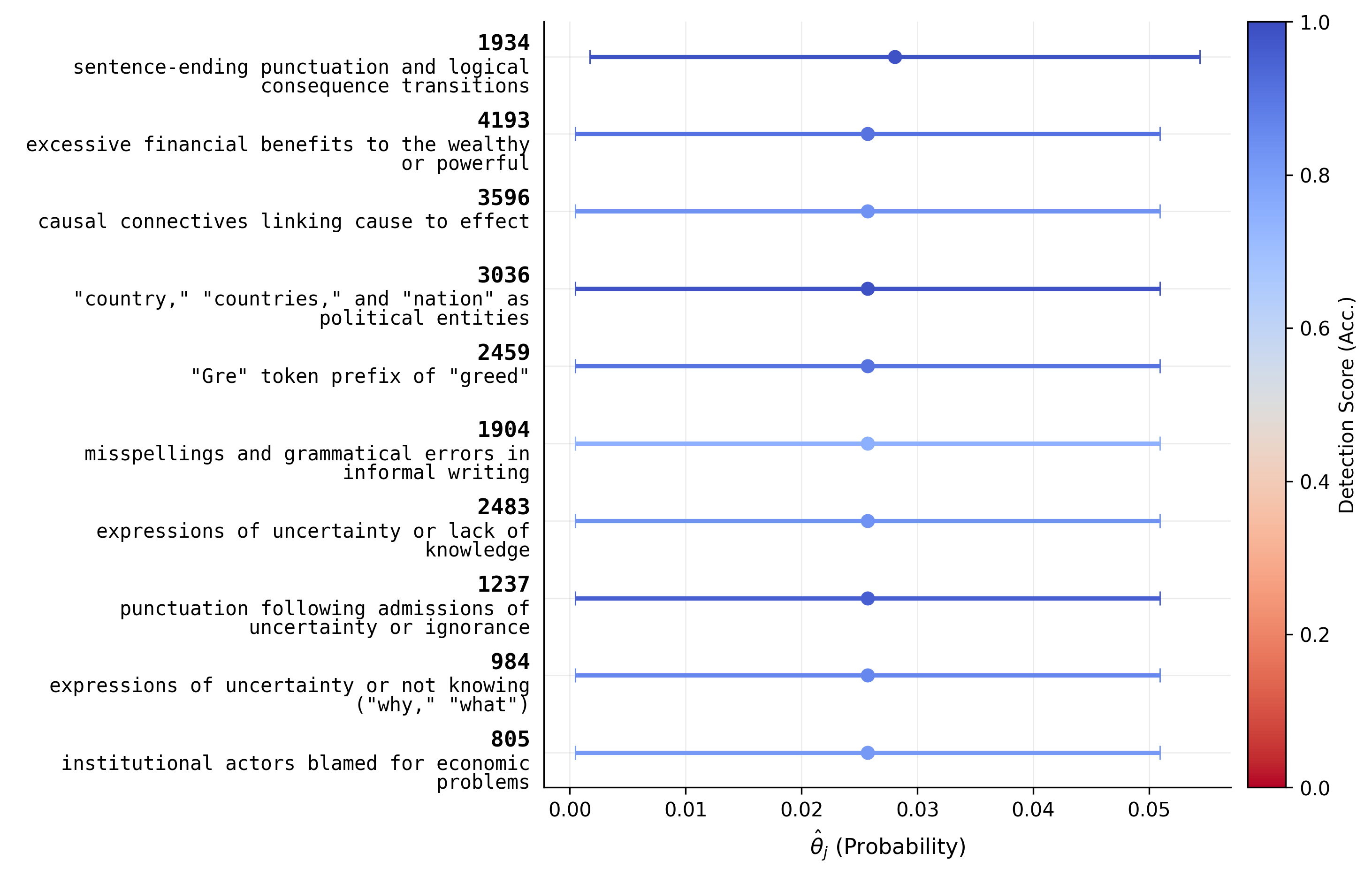}
\caption{Well-Interpreted Discoveries for Responses to Prompt ``High inflation is caused by...'' in \cite{stantcheva_why_2024}, Rank 81-90.}
\label{fig:infl9}
\end{figure}

\begin{figure}[H]
\centering
\includegraphics[width=0.9\textwidth]{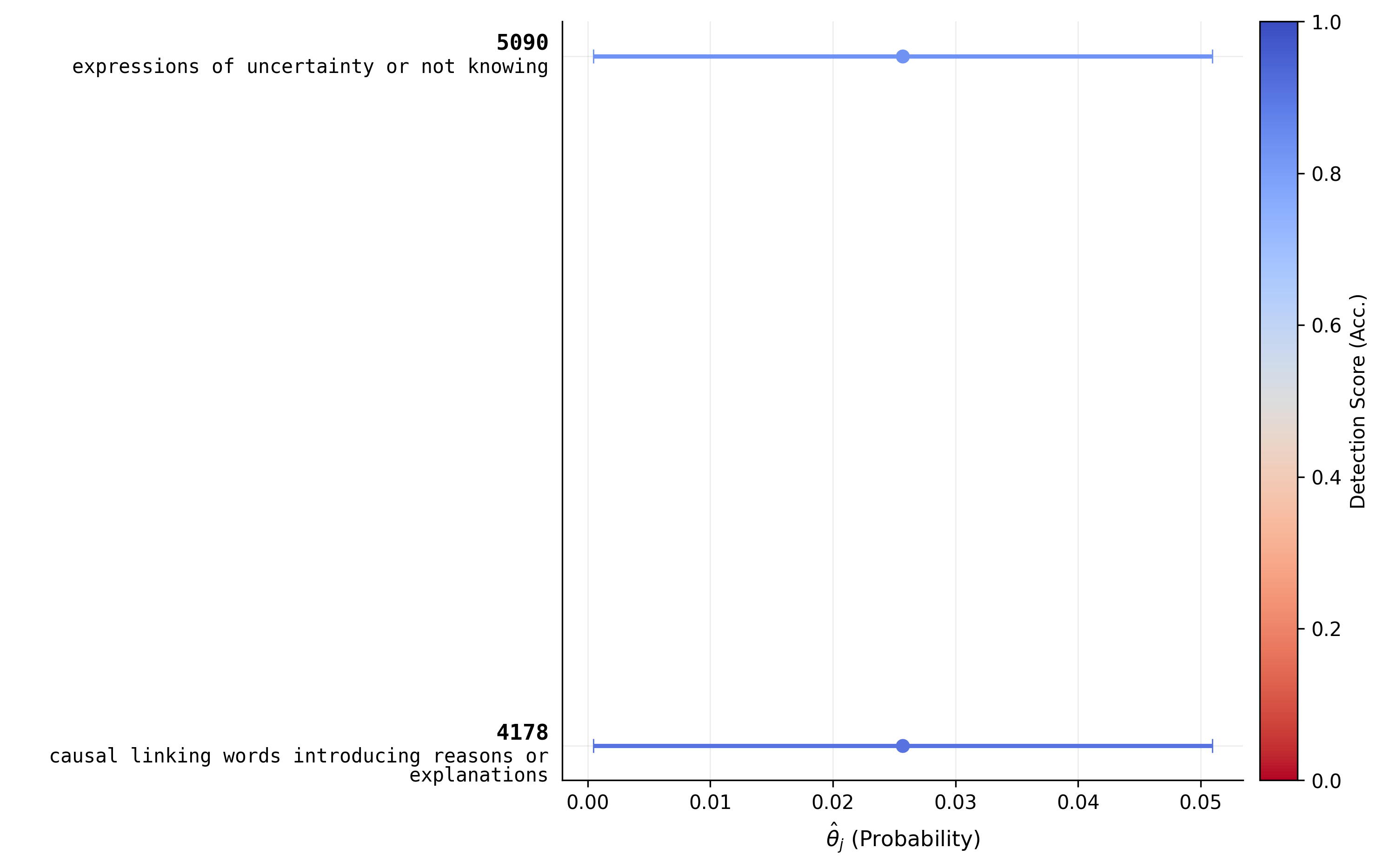}
\caption{Well-Interpreted Discoveries for Responses to Prompt ``High inflation is caused by...'' in \cite{stantcheva_why_2024}, Rank 91-.}
\label{fig:infl10}
\end{figure}

\begin{figure}[H]
\centering
\includegraphics[width=0.9\textwidth]{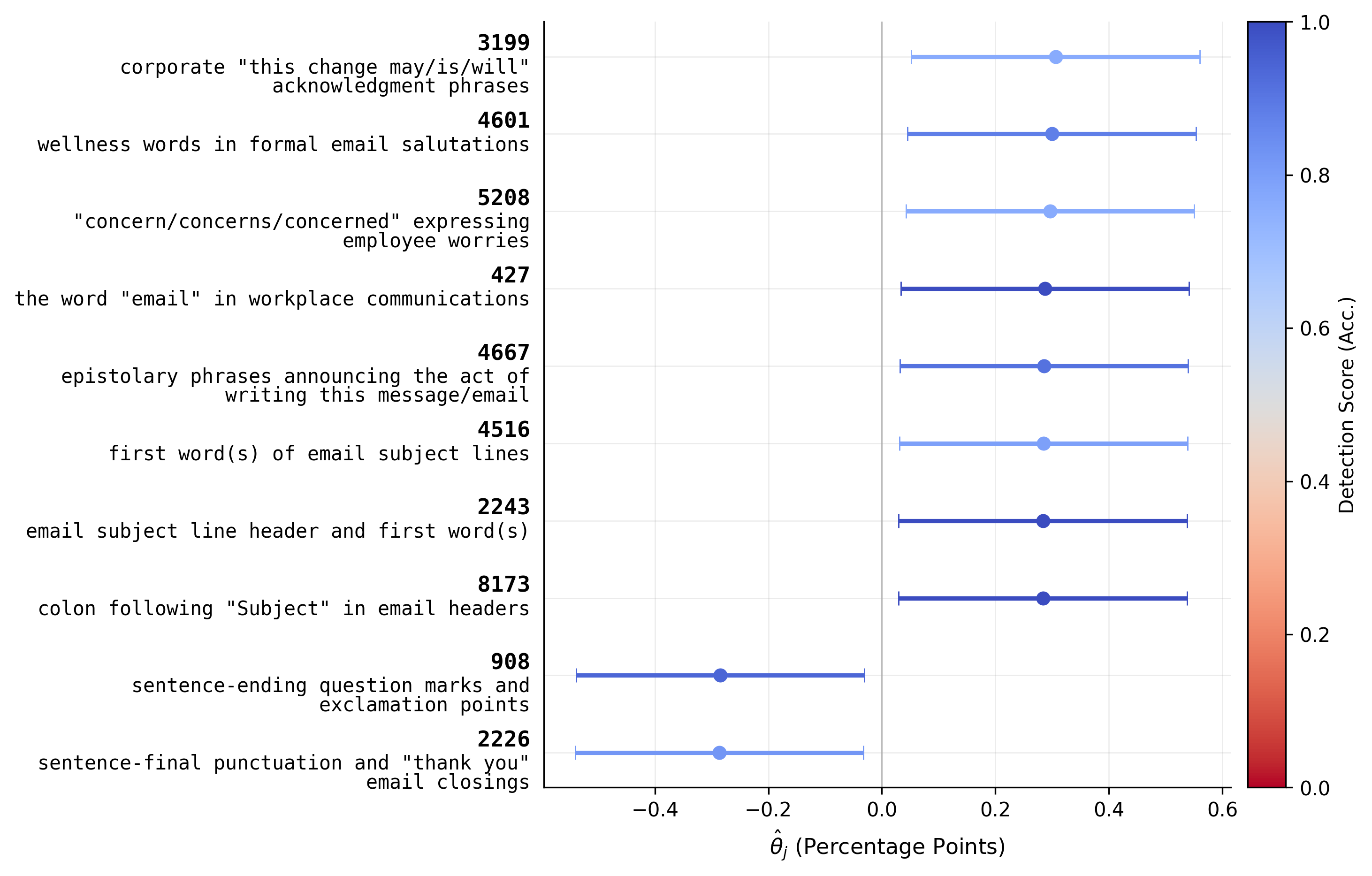}
\caption{Well-Interpreted Discoveries for \cite{noy_experimental_2023}, Rank 21-30.}
\label{fig:chatgpt3}
\end{figure}

\begin{figure}[H]
\centering
\includegraphics[width=0.9\textwidth]{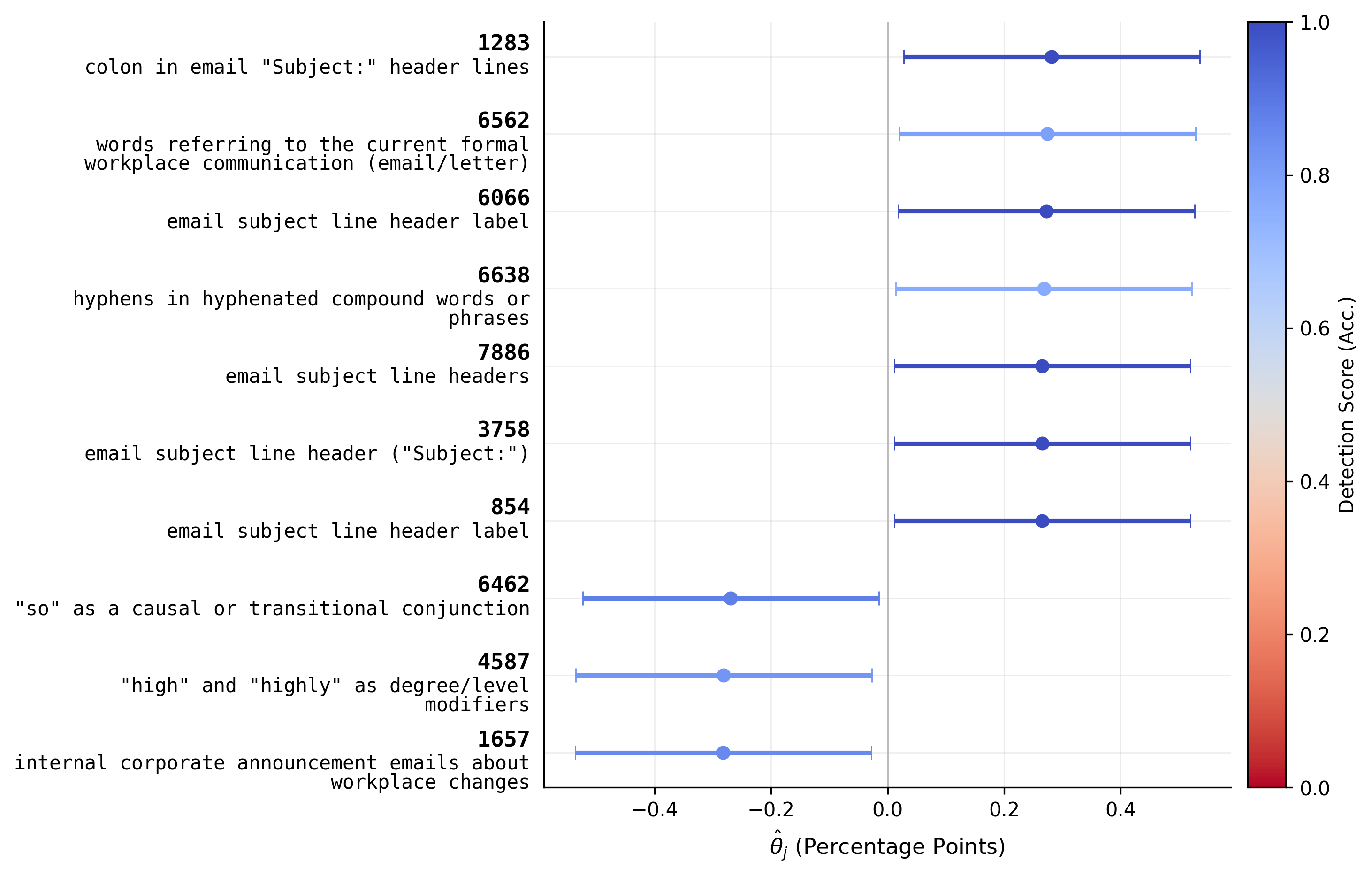}
\caption{Well-Interpreted Discoveries for \cite{noy_experimental_2023}, Rank 31-40.}
\label{fig:chatgpt4}
\end{figure}

\begin{figure}[H]
\centering
\includegraphics[width=0.9\textwidth]{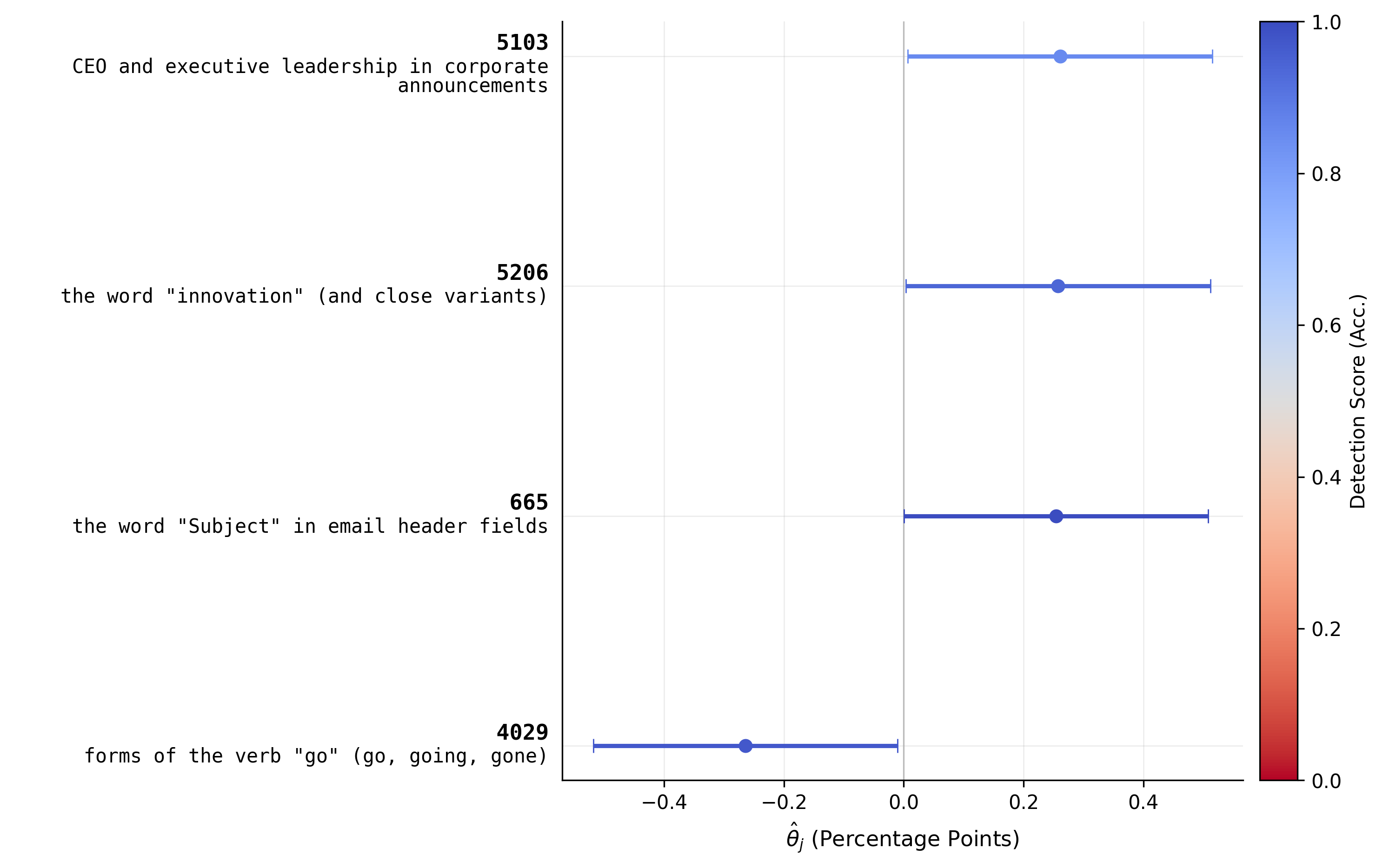}
\caption{Well-Interpreted Discoveries for \cite{noy_experimental_2023}, Rank 41-50.}
\label{fig:chatgpt5}
\end{figure}


\subsection{Implementation Details }\label{app_autointerp}

\subsubsection{Concept Embedding}

We use the T-SAE \citep{bhalla2026temporal} as the basis for a concept embedding, preferring it to alternatives due to its ``temporal loss'' modification helping to separate out semantic from syntactic features in the learning process---leading to greater interpretability in theory and in practice. Because off-the-shelf feature descriptions are available for the T-SAE, we use them to inform a simple filtering procedure that generates the final concept embedding: if multiple features have the exact same off-the-shelf description up to whitespace and casing, they are max-pooled for the purposes of indicating the presence of the (putatively same) underlying concept. This filtering thereby reduces the dimensionality of the concept embedding.

\subsubsection{Local Autointerp}

Building autointerp pipelines for scalable description of features produced from sparse dictionary learning methods is an active area of AI/ML research. Most best practices rely on leveraging LLMs themselves to interpret the features and later score these interpretations, per the pioneering work of \cite{bills2023language}. 

Heuristically, most of these pipelines operate by collecting text samples on which a given feature activates the most in a particular corpus (as well as perhaps sampling additional text samples across the empirical distribution of activation strengths) and weaving information about these activations---alongside the activating text---into prompts that LLMs are asked to interpret. Though autointerp descriptions are available for the T-SAE and some other SAEs, these descriptions are learned based on the distribution of the SAE's pretraining data (the Pile \citep{gao_pile_2021} for the T-SAE), and as such benefit from being refined through a \textit{local} autointerpretation strategy that uses text samples from the unstructured dataset distribution $P$.

For the purposes of generation, we take cues from the methods of \cite{paulo_automatically_2024,bills2023language,karvonen_saebench_2025} to arrive at a prompt of the form:
\begin{quote}
\begin{verbatim}
{"role": "system", "content": """
You are a meticulous AI researcher studying the internal
representations of language models. Your task is to produce short,
precise descriptions of what a neuron responds to. Descriptions must
be specific enough that a reader could accurately predict which new
texts would activate the neuron.
"""},
{"role": "user", "content": f"""
A particular neuron most activated on the following examples when a
corpus of texts was passed through a language model. Tokens that
triggered the neuron are wrapped in << ... >> and annotated with
their activation strength as <<token(v)>>, where v is an integer
from 1 (weakly activating) to 10 (the neuron's strongest response).
Higher-numbered tokens are the most informative; weight them
accordingly, and note that some examples activate the neuron more
strongly overall than others. Unmarked tokens did not activate the
neuron but provide context.

Write a single short phrase (ideally 3-8 words) describing the
specific concept, topic, or linguistic pattern this neuron responds
to, e.g., `news articles and formal reports' or `verbs and nouns
related to actions'. This pattern/description should be abstracted
away from the broad topic of the corpus itself, which will be
consistent across all of the examples. Delimit your answer as
[[description here]].

The << >>( ) glyphs are annotations, not part of the text; base
your description on which tokens they mark and how strongly.

Examples:
{numbered_texts}
"""}
\end{verbatim}
\end{quote}
where $\texttt{numbered\_texts}$ is an enumerated concatenation of $L$ exemplar texts---modified as the prompt suggests---associated with the tokens that activated most highly on a given feature in the main sample of the researcher's unstructured dataset. We use $L=10$ in the empirical examples of this paper, and Anthropic's Claude Sonnet (high reasoning effort, June 2026), to perform interpretations. In practice, we employ a floor of 25\% of an exemplar text's maximum token activation strength for the feature for annotation. Activation strength annotations \texttt{v} are determined by  $$\frac{\text{activation strength} \times 10}{\max(\text{activation strength across $L$ exemplars})} $$ rounded to the nearest integer. For an exact implementation, see code at \url{https://github.com/jscarlson/hdmht-discovery}.

For the purposes of evaluation, we again adapt the detection scoring prompting strategy of \cite{paulo_automatically_2024}, using prompts of the form:
\begin{quote}
\begin{verbatim}
{"role": "system", "content": """
You are an intelligent and meticulous linguistics researcher.
You will be provided a certain latent attribute of text, such as
``male pronouns'' or ``text with negative sentiment''.
You will then be given a text example. Your task
is to determine if the example possesses the latent attribute.
Return 1 if the text possesses the latent attribute,
and return 0 otherwise. Return only this number.
"""
},
{"role": "user", "content": f"""
LATENT ATTRIBUTE: {description}
TEXT EXAMPLE: {text}
"""
}
\end{verbatim}
\end{quote}
where \texttt{description} is an autointerp description being evaluated and \texttt{text} is a text sample being analyzed for the presence of the relevant feature. We use the open-source model Gemma 4 31B to perform evaluations, for the sake of reproducibility and cost effectiveness.

\subsection{Pseudocode Algorithm}\label{sec_pseudocode}

\begin{algorithm}[H]
\caption{Interpretable Discovery from Text Data with $k$-FWER Control (Simplified)}\label{algo1}
\label{alg:framework}
{\footnotesize
\begin{algorithmic}[1]
\Require Data $\{(W_i,Z_i)\}_{i=1}^{n+m}$ drawn i.i.d. from distribution $P$; dictionary function $\mathtt{Embed}:\mathcal{Z}\to\{0,1\}^{\tilde p}$; map $h:\mathcal{W}\times\{0,1\}\to\mathbb{R}$; level $\alpha$; error control $k$; explainer LLM $\mathcal{E}$; eval dataset size $m$; bootstrap iterations $\mathfrak{B}$.
\Require $k \ll \tilde p$.

\Statex \textbf{Split sample}
\State Randomly split $[n+m]$ into $\mathcal{I}^{\mathrm{estim}}$ and $\mathcal{I}^{\mathrm{eval}}$ with $|\mathcal{I}^{\mathrm{estim}}|= n$ and $|\mathcal{I}^{\mathrm{eval}}|= m$.
\Statex \textbf{Create concept embeddings}
\For{$i=1,\dots,n+m$} \State $Y_i \gets \mathtt{Embed}(Z_i)\in\{0,1\}^{\tilde p}$. \EndFor
\Statex \textbf{Drop degenerate features}
\State Initialize $\mathcal{J} \gets \emptyset$.
\For{$j=1,\dots,\tilde p$} \State If $\exists\, i \text{ such that } Y_{ij}\neq 0$, $\mathcal{J} \gets \mathcal{J} \cup \{j\}$. \EndFor
\State $ p \gets |\mathcal{J}|$.

\Statex \textbf{Compute concept-level estimates and t-stats}
\For{$j\in \mathcal{J}$}
  \State For $i\in\mathcal{I}^{\mathrm{estim}}:X_{ij} \gets h(W_i,Y_{ij})$.
  \State $\hat\theta_j \gets \frac{1}{n}\sum_{i\in\mathcal{I}^{\mathrm{estim}}} X_{ij}; \hat\Sigma_{jj} \gets \frac{1}{n}\sum_{i\in\mathcal{I}^{\mathrm{estim}}}(X_{ij}-\hat\theta_j)^2$.
  \State $T_{n,j} \gets \sqrt{n}\,\hat\theta_j/\sqrt{\hat\Sigma_{jj}}$.
\EndFor
\State $\hat\Lambda \gets \mathrm{diag}(\hat\Sigma_{11},\dots,\hat\Sigma_{pp})$.

\Statex \textbf{High-dimensional selective inference}
\State Critical value $\hat{c}_{n}(1-\alpha, k)\gets 1-\alpha \text{ quantile of }\left|\hat{\Lambda}^{-1 / 2} S_n^B\right|_{[k]}$ under bootstrap law, approx. using $\mathfrak{B}$ multiplier bootstrap iterations with $\{X_i \}_{i \in \mathcal{I}^\textrm{estim}}$.
\State Selected set $\hat J \gets \{j : |T_{n,j}|>\hat{c}_{n}(1-\alpha, k)\}\subseteq[p]$.
\State For $j\in[p]$: gen. sim. CI $\mathcal{C}_j  \gets \left[\hat\theta_j-\hat{c}_{n}(1-\alpha, k)\sqrt{\frac{\hat\Sigma_{jj}}{n}}, \hat\theta_j+\hat{c}_{n}(1-\alpha, k)\sqrt{\frac{\hat\Sigma_{jj}}{n}}\right] \in \mathbb{R}^2$.

\Statex \textbf{Autointerpretation}
\State Initialize $\texttt{CLS}$ using $\mathcal{E}$ and fixed classification prompt.
\For{$\hat \jmath\in\hat J$}
  \State Generate descriptions $\hat\eta_{\hat \jmath} \gets \mathcal{E}(\{Z_i \}_{i \in \mathcal{I}^\textrm{estim}})$ (e.g., implemented per App. Sec. \ref{app_autointerp}).
  \For{$i\in\mathcal{I}^{\mathrm{eval}}$}
     \State $\hat Y_{i \hat \jmath} \gets \mathtt{CLS}(Z_i,\hat\eta_{\hat \jmath})\in\{0,1\}; S_{i\hat \jmath} \gets \mathbf{1}\{Y_{i \hat \jmath}=\hat Y_{i \hat \jmath}\}$.
  \EndFor
  \State Detection score $\hat\theta_{\hat \jmath}^{\mathrm{acc}} \gets \frac{1}{m}\sum_{i\in\mathcal{I}^{\mathrm{eval}}} S_{i\hat \jmath}$.
  \State Optionally: form CI for $\theta_{\hat \jmath}^{\mathrm{acc}}(\hat\eta_{\hat \jmath})$ using plug-in estimators under $(\hat\eta_{\hat\jmath}, \hat \jmath)$-conditional law.
\EndFor

\Statex \textbf{Output:} $\{(j,\hat\theta_{j},\mathcal{C}_{j}):j\in [p]\}$ and $\{(\hat \jmath,\hat\eta_{\hat \jmath},\hat\theta_{\hat \jmath}^{\mathrm{acc}}):\hat \jmath\in\hat J\}$.
\end{algorithmic}}
\end{algorithm}

\subsection{\cite{romano_control_2007} Algorithms}\label{sec_rw21}

We reproduce Algorithms 2.1 and 2.2 in \cite{romano_control_2007} in full here for completeness, using the notation of this paper.

\begin{algorithm}[H]
\caption{\cite{romano_control_2007} Algorithm 2.1}\label{algo2.1}
\begin{algorithmic}[1]
    \Statex 1. Let $A_1=\{1, \ldots, p\}$. If $\max \left(T_{n, j}: j \in A_1\right) \leq \hat{c}_{n, A_1}(1-\alpha, k)$, then accept all hypotheses and stop; otherwise, reject any $H_{0,j}$ for which $T_{n, j}>\hat{c}_{n, A_1}(1- \alpha, k)$ and continue.
    
    \Statex 2. Let $R_2$ be the indices $j$ of hypotheses $H_{0,j}$ previously rejected, and let $A_2$ be the indices of the remaining hypotheses. If $\left|R_2\right|<k$, then stop. Otherwise, let
    $$
    \hat{d}_{n, A_2}(1-\alpha, k)=\max _{I \subset R_2,|I|=k-1}\left\{\hat{c}_{n, K}(1-\alpha, k): K=A_2 \cup I\right\} .
    $$
    Then, reject any $H_{0,j}$ with $j \in A_2$ satisfying $T_{n, j}>\hat{d}_{n, A_2}(1-\alpha, k)$. If there are no further rejections, stop.
    
    \Statex $\quad\vdots$

    \Statex $\ell$. Let $R_\ell$ be the indices $j$ of hypotheses $H_{0,j}$ previously rejected, and let $A_\ell$ be the indices of the remaining hypotheses. Let

    $$
    \hat{d}_{n, A_\ell}(1-\alpha, k)=\max _{I \subset R_\ell,|I|=k-1}\left\{\hat{c}_{n, K}(1-\alpha, k): K=A_\ell \cup I\right\} .
    $$

    \Statex Then, reject any $H_{0,j}$ with $j \in A_\ell$ satisfying $T_{n, j}>\hat{d}_{n, A_\ell}(1-\alpha, k)$. If there are no further rejections, stop.
    
    \Statex $\quad\vdots$
    
    \Statex And so on.
\end{algorithmic}   
\end{algorithm}

\begin{algorithm}[H]
\caption{\cite{romano_control_2007} Algorithm 2.2}\label{algo2.2}
\begin{algorithmic}[1]
    \Statex The algorithm is analogous to Algorithm 2.1. The only difference is that in any step $\ell>1$ the critical value
    $$
    \hat{d}_{n, A_\ell}(1-\alpha, k)=\max _{I \subset R_\ell,|I|=k-1}\left\{\hat{c}_{n, K}(1-\alpha, k): K=A_\ell \cup I\right\}
    $$
    is replaced by the critical value
    $$
    \begin{aligned}
    & \tilde{d}_{n, A_\ell}(1-\alpha, k)=\hat{c}_{n, K}(1-\alpha, k), \\
    & \quad \text { where } K=\left\{r_{\left(\left|R_\ell\right|-k+2\right)}, r_{\left(\left|R_\ell\right|-k+1\right)}, \ldots, r_{(p)}\right\} .
    \end{aligned}
    $$
\end{algorithmic}   
\end{algorithm}

\subsection{Additional Results}\label{sec_hdclts}

\begin{restatable}[High-dimensional bootstrap for the small $k$-max coordinate of approximate means]{theorem}{Hdapproxboot}\label{lem_hdkapproxboot}
Let $\tilde S_n^B := S_n^B + R_n$, and assume that $
\| R_n \|_\infty = o_P(1/\sqrt{\log (pn)})$. Further assume that $k$ is fixed (i.e., does not grow with $n,p$). If Assumptions \ref{ass_mom} and \ref{ass_rate} hold, then as $n,p\to\infty$
\[
\sup _{t \in \mathbb{R}}\left|P^B\left( \tilde S_{n,[k]}^B \leq t \right)-P\left(N(0,\Sigma)_{[k]} \leq t\right)\right| \xrightarrow[]{P} 0.
\]
\end{restatable}

Define the diagonal matrix of asymptotic variances $\Lambda := \text{diag}(\Sigma)$, as well as the corresponding correlation matrix $\Sigma_0 := \Lambda^{-1/2} \Sigma \Lambda^{-1/2}$. Let plug-in estimates be $\widehat{\Sigma}_{j j}:=n^{-1} \sum_{i=1}^n\left(X_{i j}-\bar{X}_{n, j}\right)^2$ and $\widehat{\Lambda}:=\operatorname{diag}\left\{\widehat{\Sigma}_{11}, \ldots, \widehat{\Sigma}_{p p}\right\}$.

\begin{restatable}[High-dimensional CLT for the  small $k$-max studentized coordinate]{theorem}{Corhd}\label{cor_hdkapprox}
Assume $X_i \overset{\text{iid}}{\sim} P$. If Assumptions \ref{ass_mom} and \ref{ass_rate} hold, $B_n = O(1)$, and $k$ is fixed (i.e., does not grow with $n,p$), then, by application of Theorem \ref{lem_hdkapprox}, as $n,p\to\infty$
\[
\sup _{t \in \mathbb{R}}\left|P\left( \left(\hat\Lambda^{-1/2}S_n\right)_{[k]} \leq t\right)-P\left(N(0,\Sigma_0)_{[k]} \leq t\right)\right| \to 0.
\]
\end{restatable} 

\begin{restatable}[High-dimensional bootstrap for the  small $k$-max studentized coordinate]{theorem}{Corhdboot}\label{cor_hdkapproxboot}
Assume $X_i \overset{\text{iid}}{\sim} P$. If Assumptions \ref{ass_mom} and \ref{ass_rate} hold, $B_n = O(1)$, and $k$ is fixed (i.e., does not grow with $n,p$), then, by application of Theorem \ref{lem_hdkapproxboot}, as $n,p\to\infty$
\[
\sup _{t \in \mathbb{R}}\left|P^B\left( \left(\hat\Lambda^{-1/2}S_n^B\right)_{[k]} \leq t \right)-P\left(N(0,\Sigma_0)_{[k]} \leq t\right)\right| \xrightarrow[]{P} 0.
\]
\end{restatable} 

\begin{corollary}[Generalized simultaneous confidence intervals]\label{cor_gsci}
Under the conditions of the two-sided analog of Theorem \ref{thm_stu_kfwer},
let $\hat c := \hat{c}_{n,[p]}(1-\alpha,k)$ be the single-step critical
value, i.e., the $1-\alpha$ quantile of
$|\hat\Lambda^{-1/2}S_n^B|_{[k]}$ under $P^B$, and define, for $j\in[p]$,
\[
\mathcal{C}_j := \left[\hat\theta_j - \hat c\,\sqrt{\hat\Sigma_{jj}/n},\;
\hat\theta_j + \hat c\,\sqrt{\hat\Sigma_{jj}/n}\right].
\]
Then
\[
\liminf_{n,p\to\infty}
P\Big(\big|\{j\in[p]:\theta_j(P)\notin \mathcal{C}_j\}\big| < k\Big)\ \ge\ 1-\alpha.
\]
\end{corollary}
\begin{proof}
The event that $k$ or more parameters fall outside their intervals is exactly
the event $\big(|\hat\Lambda^{-1/2}S_n|\big)_{[k]} > \hat c$, where
$S_{n,j}=\sqrt{n}(\hat\theta_j-\theta_j(P))$. The proof of (the two-sided
analog of) Theorem \ref{thm_stu_kfwer}, applied with $K=[p]$, establishes
$\limsup_{n,p\to\infty} P\big((|\hat\Lambda^{-1/2}S_n|)_{[k]} > \hat
c_{n,[p]}(1-\alpha,k)\big) \le \alpha$, which is the claim.
\end{proof}

\subsection{Many Linear Regressions}\label{sec_manylinprop}

We now state two propositions associated with Example \ref{rmk_manylinear}.

\begin{restatable}[Influence function approximation for many linear regressions]{proposition}{Hdreg}\label{thm_hdreg}

Let $\{(Y_i,T_i,D_i)\}_{i=1}^n$ be drawn i.i.d. from distribution $P$, where
$Y_i=(Y_{i1},\ldots,Y_{ip})^{\mathtt T}\in\mathbb R^p$, $T_i\in\mathbb R$, and $D_i\in\mathbb R^d$ with fixed $d$, and $D_i$ includes a constant.
Let $L_P[\cdot\mid\cdot]$ denote the linear projection operator.
Define population quantities
\[
\begin{gathered}
\widetilde T := T - L_P[T\mid D], \qquad
\widetilde Y_j := Y_j - L_P[Y_j\mid D], \qquad
\Omega := E_P[\widetilde T^2], \qquad
\theta_j := \Omega^{-1}E_P[\widetilde T\,\widetilde Y_j],\\
\widetilde U_j := \widetilde Y_j - \theta_j \widetilde T, \qquad
\psi_{ij} := \Omega^{-1}\widetilde T_i \widetilde U_{ij}.
\end{gathered}
\]
Define feasible sample analog quantities as
\[
\begin{gathered}
\widehat{\gamma}_T:=E_n[D_i D_i^{\mathtt{T}}]^{-1}E_n[ D_i T_i],\qquad \widehat{\gamma}_{Y_j}:=E_n[D_i D_i^{\mathtt{T}}]^{-1}E_n[D_i Y_{i j}],\\
\widehat{\widetilde{T}}_i:=T_i-D_i^{\mathtt{T}} \widehat{\gamma}_T, \qquad
\widehat{\widetilde{Y}}_{ij}:=Y_{ij}-D_i^{\mathtt{T}} \widehat{\gamma}_{Y_j},\\
\widehat\Omega:=E_n[\widehat{\widetilde T}_i^2], \qquad
\widehat\theta_j:=\widehat\Omega^{-1}E_n[\widehat{\widetilde T}_i\,\widehat{\widetilde Y}_{ij}],\qquad
\widehat{\widetilde U}_{ij}:=\widehat{\widetilde Y}_{ij}-\widehat\theta_j\widehat{\widetilde T}_i, \qquad
\widehat\psi_{ij}:=\widehat\Omega^{-1}\widehat{\widetilde T}_i \widehat{\widetilde U}_{ij}.
\end{gathered}
\]
Assume:
\begin{enumerate}
\item[(ND)] $\Sigma_D:=E_P[DD^{\mathtt T}]$ is nonsingular and $0<C_1\le \Omega\le C_2<\infty$.
\item[(SG)] There exists $C_3<\infty$ such that $\|T\|_{\psi_2}\le C_3$, $\max_{\ell\le d}\|D_\ell\|_{\psi_2}\le C_3$, $\max_{j\le p}\|Y_j\|_{\psi_2}\le C_3$.
\end{enumerate}
Then, if Assumption \ref{ass_rate} holds,
\[
\max_{j\in[p]}\sqrt{E_n\!\left[(\widehat\psi_{ij}-\psi_{ij})^2\right]}
= o_P\!\left(\frac1{\log(pn)}\right).
\]
\end{restatable}

\begin{restatable}[Linearization error for many linear regressions]{proposition}{Hdlin}\label{thm_hdlin}
Assume the same setup as Proposition \ref{thm_hdreg}. Let
\[
\sqrt{n}(\hat\theta_j - \theta_j) = \frac{1}{\sqrt{n}}\sum_{i=1}^n \psi_{ij} + R_{n,j}
\]
where
\[
R_{n,j}
=
(\widehat\Omega^{-1}-\Omega^{-1})\sqrt{n}E_n[\widetilde T\,\widetilde U_j]
+\widehat\Omega^{-1}\sqrt{n}E_n[(\widehat{\widetilde T}-\widetilde T)\widetilde U_j].
\]
Let \(R_n:=(R_{n,1},\ldots,R_{n,p})^{\mathtt T}\). Then, under Assumption \ref{ass_rate},
\[
\|R_n\|_\infty = o_P\!\left(1/\sqrt{\log (pn)}\right).
\]
\end{restatable}

\subsection{Proofs}\label{sec_app}

Before stating proofs of the results in the main text, we state several additional useful lemmas.

\begin{lemma}\label{lem_op}
For a sequence of random variables $U_n$ and for a deterministic sequence $r_n$, 
\[ U_n = o_P(r_n) \Longleftrightarrow P^B(|U_n/r_n| \geq \varepsilon) = o_P(1) \text{ for any } \varepsilon > 0.\]
\end{lemma}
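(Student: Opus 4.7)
\textbf{Proof Proposal for Lemma \ref{lem_op}.}

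The plan is to exploit the tower property $P(A)=E[P^B(A)]=E[P(A\mid X)]$, together with the fact that the conditional probability $P^B(|U_n/r_n|\geq\varepsilon)$ is a $[0,1]$-valued random variable. Both directions of the equivalence then reduce to standard arguments relating convergence in $L^1$ and convergence in probability for bounded random variables.

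For the direction ($\Leftarrow$), I would assume that $P^B(|U_n/r_n|\geq\varepsilon)=o_P(1)$ for every $\varepsilon>0$. Since these conditional probabilities are uniformly bounded by $1$, the bounded convergence theorem (applied to the unconditional measure $P$) implies that their expectations also vanish. Invoking the tower property then gives $P(|U_n/r_n|\geq\varepsilon)=E[P^B(|U_n/r_n|\geq\varepsilon)]\to 0$, which is precisely the definition of $U_n=o_P(r_n)$.

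For the direction ($\Rightarrow$), I would start from $U_n=o_P(r_n)$, i.e.\ $P(|U_n/r_n|\geq\varepsilon)\to 0$ for any $\varepsilon>0$. By the tower property, $E[P^B(|U_n/r_n|\geq\varepsilon)]=P(|U_n/r_n|\geq\varepsilon)\to 0$. Since $P^B(|U_n/r_n|\geq\varepsilon)$ is a non-negative random variable, Markov's inequality gives, for any $\delta>0$,
\[
P\bigl(P^B(|U_n/r_n|\geq\varepsilon)\geq\delta\bigr)\leq\frac{E[P^B(|U_n/r_n|\geq\varepsilon)]}{\delta}\to 0,
\]
establishing $P^B(|U_n/r_n|\geq\varepsilon)=o_P(1)$.

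There is no real obstacle here; the result is essentially a repackaging of the tower property plus Markov's inequality for bounded random variables. The only subtlety worth flagging is measurability: one should verify that $\{|U_n/r_n|\geq\varepsilon\}$ is jointly measurable in the data and the bootstrap randomness so that the conditional probability $P^B(|U_n/r_n|\geq\varepsilon)$ is well-defined as a measurable function of the data, but this is standard in the settings in which the lemma is to be applied (the $U_n$ arising in the proofs are Borel functions of $(X,\xi)$).
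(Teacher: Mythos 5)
Your proof is correct and mirrors the paper's argument exactly: the $(\Rightarrow)$ direction via the tower property plus Markov's inequality applied to the $[0,1]$-valued conditional probability, and the $(\Leftarrow)$ direction via the bounded convergence theorem for sequences converging in probability. The measurability remark is a sensible addition but not essential to the argument.
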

\begin{proof}
Note that, by definition, we have that $U_n = o_P(r_n)$ if for any $\varepsilon > 0$ that
\[
\lim_{n\to\infty} P(|U_n/r_n| \geq \varepsilon) = \lim_{n\to\infty} E[Z_{n,\varepsilon}] = 0
\]
where $Z_{n,\varepsilon} := P(|U_n/r_n| \geq \varepsilon \mid X)$, as by the law of total expectation $ P(|U_n/r_n| \geq \varepsilon) = E[Z_{n,\varepsilon}]$. By Markov's inequality, because $Z_{n,\varepsilon}$ is always positive, for any $\delta > 0$,
\[
P(Z_{n,\varepsilon} \geq \delta) \leq \frac{E[Z_{n,\varepsilon}]}{\delta}.
\]
Thus we have that
\[
0 \leq \lim_{n\to\infty} P(Z_{n,\varepsilon} \geq \delta) \leq \frac{\lim_{n\to\infty} E[Z_{n,\varepsilon}]}{\delta} = 0
\]
and $\lim_{n\to\infty} P(|Z_{n,\varepsilon}| \geq \delta) = 0$ for any $\varepsilon, \delta > 0$. As such, $P(|U_n/r_n| \geq \varepsilon \mid X) = P^B(|U_n/r_n| \geq \varepsilon) = o_P(1)$ for any $\varepsilon > 0$ if $U_n = o_P(r_n)$.

For the other direction, note further that if $P(|U_n/r_n| \geq \varepsilon \mid X) = o_P(1)$ then the boundedness of $Z_{n,\varepsilon}$ permits using the bounded convergence theorem\footnote{See, e.g., Lemma 10.7 at this link: \url{https://people.math.wisc.edu/~roch/teaching_files/275b.1.12w/lect10-web.pdf}.} to show for any $\varepsilon > 0$
\[
P(|U_n/r_n| \geq \varepsilon) = o(1).
\]
\end{proof}

\begin{lemma}\label{lem_conp}
If for $\delta_n \searrow 0, \nu_n \searrow 0$, and sequence of functions $d_n : \mathcal{X}_n \to \mathbb{R}$, for $X_n \in \mathcal{X}_n$
\[
P(d_n(X_n) \geq \delta_n) \leq \nu_n
\]
then
\[
d_n(X_n) = o_P(1).
\]
\end{lemma}
\begin{proof}
We want to show that, for any fixed $\delta > 0$, 
\[
\lim_{n\to\infty} P( d_n(X_n)\geq \delta) = 0
\]
(the definition of convergence in probability). Note that, given the definition $\delta_n = o(1)$, for some $n^*$, for all $n \geq n^*$ we have that $\delta _n < \delta$. Thus, define the events:
\[
A_n := \{d_n(X_n) \geq \delta \}, \quad B_n:= \{d_n(X_n) \geq \delta_n \}
\]
For all $n \geq n^*$, $A_n \subseteq B_n$. Thus, for all $n \geq n^*$
\[
P\{d_n(X_n) \geq \delta \} \leq P\{d_n(X_n) \geq \delta_n \}.
\]

Notice that then, using that $\nu_n = o(1)$, 
\[
0\leq \lim_{n\to\infty} P\{d_n(X_n) \geq \delta \} \leq \lim_{n\to\infty}P\{d_n(X_n) \geq \delta_n \}=0.
\]
Because this holds for any choice of $\delta > 0$, we have proven the stated result.
\end{proof}

\begin{lemma}\label{lem_bernstein}
For $\{X_{i}\}_{i\in [n]}$ independent subexponential random vectors of dimension $p$, with each $\|X_{ij}-E[X_{ij}]\|_{\psi_1} \leq B_n$ for $i \in [n]$ and $j \in [p]$, then, under Assumption \ref{ass_rate},
\[
\left\| S_n \right\|_\infty = \left\| \frac{1}{\sqrt{n}}\sum_{i=1}^n (X_i - E[X_i]) \right\|_\infty = O_P\left(B_n \sqrt{\log p }\right).
\]
\end{lemma}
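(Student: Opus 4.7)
The plan is to combine a coordinate-wise sub-exponential Bernstein inequality with a union bound over the $p$ coordinates, then leverage Assumption \ref{ass_rate} to ensure we operate in the sub-Gaussian tail regime. First, for each fixed $j \in [p]$, I would note that the summands $X_{ij} - E[X_{ij}]$ are independent, centered, and sub-exponential with $\psi_1$-norm at most $2 B_n$, since centering a sub-exponential random variable inflates its Orlicz norm by at most a constant factor. A Bernstein-type deviation inequality for sums of independent centered sub-exponential variables (e.g., Proposition 5.16 of Vershynin, 2012, or Theorem 2.8.1 of Vershynin, 2018) then yields, for all $t > 0$ and some absolute constant $c > 0$,
\[
P(|S_{n,j}| \geq t) \leq 2 \exp\left(-c \min\left\{\frac{t^2}{B_n^2}, \frac{t \sqrt{n}}{B_n}\right\}\right),
\]
after factoring the $1/\sqrt{n}$ scaling into the threshold.

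Next, I would take a union bound across all $j \in [p]$ to obtain
\[
P\left(\|S_n\|_\infty \geq t\right) \leq 2p \exp\left(-c \min\left\{\frac{t^2}{B_n^2}, \frac{t \sqrt{n}}{B_n}\right\}\right),
\]
and set $t = M B_n \sqrt{\log p}$ for a constant $M > 0$ to be chosen. The sub-Gaussian branch of the minimum is the active one exactly when $t^2/B_n^2 \leq t \sqrt{n}/B_n$, equivalently $M \sqrt{\log p} \leq \sqrt{n}$. Assumption \ref{ass_rate} gives $B_n^2 \log^5(pn) = o(n)$ and hence a fortiori $\log p = o(n)$, so for any fixed $M$ this condition holds for all $n$ large enough. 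The displayed probability is then bounded by $2 p^{1 - c M^2}$, which by choosing $M$ sufficiently large can be driven below any prescribed $\varepsilon > 0$ uniformly in $n$, yielding the $O_P(B_n \sqrt{\log p})$ conclusion.

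The only real content of the argument is the regime check just described; centering of sub-exponentials, the application of Bernstein, and the union bound are standard. The main obstacle, if any, is stylistic: one must invoke a version of Bernstein stated in terms of the $\psi_1$-norm scaling (rather than the moment-based formulation of Assumption \ref{ass_mom}(iii)), so that the exponent scales cleanly in $B_n$. Equivalently, one could first translate Assumption \ref{ass_mom}(i) into a $\psi_1$-norm bound via the standard equivalence between tail/moment growth and Orlicz norm, which is immediate. Everything else reduces to unpacking the exponents and verifying via Assumption \ref{ass_rate} that the sub-Gaussian tail regime is operative at the scale $B_n \sqrt{\log p}$.
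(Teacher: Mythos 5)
Your proposal is correct and follows essentially the same route as the paper's proof: a coordinate-wise Bernstein inequality for sub-exponential sums (Vershynin 2018, the paper cites Corollary 2.9.2), a union bound over $j \in [p]$, and then invoking Assumption \ref{ass_rate} to verify that the sub-Gaussian branch of the min dominates at the scale $t \asymp B_n\sqrt{\log p}$. The only cosmetic difference is that you fix $t = M B_n \sqrt{\log p}$ up front and tune $M$, whereas the paper fixes a target tail probability $\varepsilon$ and solves for the corresponding threshold; these are equivalent ways of establishing the $O_P$ claim.
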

\begin{proof}
We start by noting Proposition 2.9.2 of \cite{vershynin_high-dimensional_2018}, which implies here that
\[
P\left\{\left|\frac{1}{n}\sum_{i=1}^n (X_{ij} - E[X_{ij}])\right| \geq t\right\} \leq 2 \exp \left[-c \min \left(\frac{nt^2}{B_n^2}, \frac{nt}{B_n}\right)\right].
\]
Thus by the union bound,
\[
P\left\{\max_{j\in [p]} \left|\frac{1}{n}\sum_{i=1}^n (X_{ij} - E[X_{ij}])\right| \geq t\right\} \leq 2p \exp \left[-c \min \left(\frac{nt^2}{B_n^2}, \frac{nt}{B_n}\right)\right].
\]
Defining that
\[
\varepsilon := 2p \exp \left[-c \min \left(\frac{nt^2}{B_n^2}, \frac{nt}{B_n}\right)\right]
\]
then we have that for $\tilde c = 1/c$
\[
\tilde c \frac{\log(2p/\varepsilon)}{n} = \min\left\{ t^2/B_n^2, t/B_n \right\}
\]
and therefore
\[
t =  \tilde c B_n\frac{\log(2p/\varepsilon)}{n} \vee \tilde c B_n \sqrt{\frac{\log(2p/\varepsilon)}{n}} ,
\]
meaning that 
\[
P\left\{\max_{j\in [p]} \left|\frac{1}{n}\sum_{i=1}^n (X_{ij} - E[X_{ij}])\right| \geq \tilde c B_n\left(\frac{\log(2p/\varepsilon)}{n} \vee  \sqrt{\frac{\log(2p/\varepsilon)}{n}}\right)\right\} \leq \varepsilon.
\]
As such, we have
\[
P\left\{\max_{j\in [p]} \left|\frac{1}{\sqrt{n}}\sum_{i=1}^n (X_{ij} - E[X_{ij}])\right| \geq \tilde c B_n\left(\frac{\log(2p/\varepsilon)}{\sqrt{n}} \vee  \sqrt{\log(2p/\varepsilon)}\right)\right\} \leq \varepsilon.
\]

However, under Assumption \ref{ass_rate}, $\log(p)/\sqrt{n} = o(1)$, so the Gaussian tail term dominates for large $n$, and thus for large $n$ we may set $M_\varepsilon := \tilde c   \sqrt{\frac{\log(2/\varepsilon)}{\log 3}+1}$ and observe that
\[
P\left\{\frac{\max_{j\in [p]} \left|\frac{1}{\sqrt{n}}\sum_{i=1}^n (X_{ij} - E[X_{ij}])\right|}{B_n \sqrt{\log p}} \geq M_\varepsilon\right\} \leq \varepsilon.
\]
We then conclude using the definition of stochastic boundedness that
\[
\left\| \frac{1}{\sqrt{n}}\sum_{i=1}^n (X_i - E[X_i]) \right\|_\infty = O_P\left(B_n  \sqrt{\log p}\right)
\]
as claimed.
\end{proof}

\begin{lemma}\label{lem_sbb}
Under Assumptions \ref{ass_mom} and \ref{ass_rate},
\[
\|S_n^B\|_\infty = O_P(B_n \sqrt{\log p}).
\]
\end{lemma}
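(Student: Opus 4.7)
The plan is to exploit the fact that, conditional on the data $X$, the Gaussian multiplier bootstrap statistic $S_n^B$ is a centered multivariate Gaussian vector with data-dependent covariance $\widehat\Sigma$, so that $\|S_n^B\|_\infty$ admits clean Gaussian-maximum bounds. Concretely, for each $j$ we have $S_{n,j}^B \mid X \sim N(0, \widehat\Sigma_{jj})$ with $\widehat\Sigma_{jj} = n^{-1}\sum_i (X_{ij} - \bar X_{n,j})^2$. Combining the scalar Gaussian tail bound with a union bound over $j \in [p]$ yields, for any $\varepsilon > 0$,
\[
P^B\!\left(\|S_n^B\|_\infty \geq \sqrt{2 \max_j \widehat\Sigma_{jj} \cdot \log(2p/\varepsilon)}\right) \leq \varepsilon.
\]
This reduces the problem to showing that $\max_j \widehat\Sigma_{jj} = O_P(B_n^2)$.

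For that step, I would bound $\widehat\Sigma_{jj} \leq n^{-1}\sum_i X_{ij}^2$ and decompose into a deterministic envelope plus a fluctuation. The envelope $\max_j E[X_{ij}^2]$ is of order $B_n^2$ by the sub-exponential Orlicz condition in Assumption \ref{ass_mom}(i). For the fluctuation, the fourth-moment bound in Assumption \ref{ass_mom}(iii) controls $\mathrm{Var}(X_{ij}^2) \leq B_n^2 b_2^2$; and because squares of sub-exponentials are sub-Weibull of order $1/2$ with parameter at most $B_n^2$, a Bernstein-type inequality for sub-Weibull sums combined with a union bound across $j$ yields $\max_j |n^{-1}\sum_i X_{ij}^2 - E[X_{ij}^2]| = o_P(B_n^2)$ under Assumption \ref{ass_rate}. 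Together these imply $\max_j \widehat\Sigma_{jj} = O_P(B_n^2)$.

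Finally, I would combine the conditional Gaussian tail bound with this envelope: for any fixed $\delta > 0$, on the high-probability event $\{\max_j \widehat\Sigma_{jj} \leq C_\delta B_n^2\}$ the conditional bound gives $\|S_n^B\|_\infty \leq C'_\delta B_n \sqrt{\log p}$ with $P^B$-probability at least $1 - \delta/2$, and a law-of-total-probability argument (in the spirit of Lemma \ref{lem_op}) then upgrades this to the unconditional conclusion $\|S_n^B\|_\infty = O_P(B_n \sqrt{\log p})$.

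The main technical hurdle is the uniform-in-$j$ concentration of $\widehat\Sigma_{jj}$: because $X_{ij}^2$ is only sub-Weibull of order $1/2$ rather than sub-exponential, neither Chebyshev (too loose when $p \gg n$) nor a direct application of Lemma \ref{lem_bernstein} suffices. The saving grace is Assumption \ref{ass_rate}, which is strong enough to absorb the extra logarithmic factor appearing in the sub-Weibull Bernstein bound and thereby keeps the fluctuation of the empirical variances of smaller order than $B_n^2$ uniformly in $j$.
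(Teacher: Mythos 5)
Your proof is correct but genuinely different from the paper's. The paper does not exploit the explicit conditional Gaussianity of $S_n^B$ directly; instead, it first establishes the tail bound for the \emph{unbootstrapped} statistic $\|S_n\|_\infty$ via the sub-exponential Bernstein argument in Lemma~\ref{lem_bernstein}, then invokes Lemma~4.6 of \cite{chernozhuokov_improved_2022} (a conditional comparison result for the Gaussian multiplier bootstrap) together with Lemma~\ref{lem_conp} to transfer that tail bound to $\|S_n^B\|_\infty$ up to an $o_P(1)$ error, and finally upgrades the conditional statement to an unconditional one with a law-of-total-probability argument essentially identical to your last step. Your route bypasses the comparison lemma by using the scalar Gaussian tail plus a union bound conditionally on $X$, and then separately controls $\max_j \widehat\Sigma_{jj}$ via a sub-Weibull Bernstein inequality. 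Both arguments are sound; yours is more self-contained and transparent about where the $B_n\sqrt{\log p}$ scale comes from, at the price of an additional uniform-in-$j$ concentration argument for the empirical variances that the paper avoids here (the paper only invokes concentration of $\widehat\Sigma$ in the proofs of Corollaries~\ref{cor_hdkapprox} and~\ref{cor_hdkapproxboot}, and there under the stronger restriction $B_n = O(1)$). One small note: your envelope claim $\max_j E[X_{ij}^2] = O(B_n^2)$ from the Orlicz condition is fine, but Assumption~\ref{ass_mom}(iii) combined with Jensen actually gives the tighter $E[X_{ij}^2] \leq B_n b_2$; this does not change anything downstream but is worth knowing.
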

\begin{proof}
Using Assumptions \ref{ass_mom} and \ref{ass_rate}, the proof of Lemma \ref{lem_bernstein} implies that
\[
P\left\{\| S_{n} \|_\infty \geq \tilde c B_n\left(\frac{\log(2p/\varepsilon)}{\sqrt{n}} \vee \sqrt{\log(2p/\varepsilon)}\right)\right\} \leq \varepsilon.
\]
Using Lemma 4.6 of \cite{chernozhuokov_improved_2022} and Lemma \ref{lem_conp}, notice that 
\begin{equation}\label{eqn_tail}
b_n(X):=P\left(\|S_n^B\|_\infty > \tilde c B_n\left(\frac{\log(2p/\varepsilon)}{\sqrt{n}} \vee \sqrt{\log(2p/\varepsilon)}\right) \mid X\right) \leq \varepsilon + o_P(1).
\end{equation}

Define the event $A_n:=\left\{b_n(X) > \varepsilon + \eta \right\}$ for any $\eta > 0$.
Then the earlier statement implies that 
$P(A_n) = o(1)$.
Now, by the law of total expectation and the law of total probability,
\begin{align*}
    &P\left(\|S_n^B\|_\infty > \tilde c B_n\left(\frac{\log(2p/\varepsilon)}{\sqrt{n}} \vee \sqrt{\log(2p/\varepsilon)}\right) \right) \\
    &= E[b_n(X)] \\
    &= E[b_n(X) \mid A_n] P(A_n) + E[b_n(X) \mid A_n^c] P(A_n^c) \\
    &= E[b_n(X) \mid A_n^c] + o(1) \tag{Set $\eta = o(1)$} \\
    &\leq \varepsilon + o(1).
\end{align*}
It then follows, using the same logic as in Lemma \ref{lem_bernstein}, along with Assumption \ref{ass_rate}, that
\[
\|S_n^B\|_\infty = O_P(B_n \sqrt{\log p}).
\]
\end{proof}

\begin{lemma}\label{lem_sbb_if}
Suppose Assumptions \ref{ass_mom} (treating the influence functions
$\psi_{ij}$ as data), \ref{ass_rate}, and \ref{ass_approx} hold. Then $$\|\tilde
S_n^B\|_\infty = O_P(B_n\sqrt{\log p}).$$ 
\end{lemma}
\begin{proof}
First, we will decompose $\tilde S_n^B = G_n + \Delta_n$, where $G_n := n^{-1/2}\sum_{i=1}^n
\xi_i\psi_i$ and $\Delta_n := n^{-1/2}\sum_{i=1}^n \xi_i(\hat\psi_i-\psi_i)$.
 
Conditional on $X$, $G_{n,j}$ is centered Gaussian with
variance $\hat\sigma_j^2 := n^{-1}\sum_{i=1}^n \psi_{ij}^2$. By Assumption
\ref{ass_mom}(iii) and Jensen's inequality, $n^{-1}\sum_{i=1}^n
E[\psi_{ij}^2] \le (n^{-1}\sum_{i=1}^n E[\psi_{ij}^4])^{1/2}\le B_n b_2$.
Moreover, $\psi_{ij}^2$ is sub-Weibull of order $1/2$ with
$\|\psi_{ij}^2\|_{\psi_{1/2}}\lesssim B_n^2$, so Theorem 3.4 of
\cite{kuchibhotla_moving_2022} together with a union bound over $j\in[p]$
gives
\[
\max_{j\in[p]} \Big|\hat\sigma_j^2 - n^{-1}\sum_{i=1}^n E[\psi_{ij}^2]\Big| =
O_P\!\left(B_n\sqrt{\frac{\log p}{n}} +
\frac{B_n^2\log^2(2n)\log^2(p)}{n}\right) = o_P(B_n^2)
\]
where the last equality follows from Assumption \ref{ass_rate}, and so $\max_{j\in[p]} \hat\sigma_j =
O_P(B_n)$ (recalling $B_n>1$). The Borell-TIS inequality (as per \citep{belloni_high-dimensional_2018})---applied as in the proof of Theorem
\ref{thm_hdbootinfluence}---yields,
for any $\varepsilon\in(0,1)$, almost surely,
\[
P^B\Big(\|G_n\|_\infty > \max_{j\in[p]}\hat\sigma_j\,
\big(\sqrt{2\log(2p)}+\sqrt{2\log(1/\varepsilon)}\big)\Big)\le 2\varepsilon,
\]
and applying the un-conditioning argument exactly as in the proof of Lemma \ref{lem_sbb} (via the
law of total probability on the event $\{\max_j\hat\sigma_j\le M B_n\}$ for some $M < \infty$,
whose probability can be made arbitrarily close to one) delivers
$\|G_n\|_\infty = O_P(B_n\sqrt{\log p})$.
 
Then, conditional on $X$, note that $\Delta_{n,j}$ is a centered Gaussian
with variance $n^{-1}\sum_{i=1}^n(\hat\psi_{ij}-\psi_{ij})^2$, so by the same inequality as above, with $P^B$-probability at least $1-2\varepsilon$,
\[
\|\Delta_n\|_\infty \le
\max_{j\in[p]}\Big(n^{-1}\sum_{i=1}^n(\hat\psi_{ij}-\psi_{ij})^2\Big)^{1/2}
\big(\sqrt{2\log(2p)}+\sqrt{2\log(1/\varepsilon)}\big).
\]
Choosing $\varepsilon=1/(2n)$ and applying Assumption \ref{ass_approx}, the
right-hand side is $o_P(1/\log(pn)) O(\sqrt{\log(pn)}) = o_P(1)$;
un-conditioning again (as above) gives $\|\Delta_n\|_\infty = o_P(1)$.
 
Combining these results, it is immediate that $\|\tilde S_n^B\|_\infty = O_P(B_n\sqrt{\log p})$. 

\end{proof}

We now prove the results from the main text. For any set of real numbers $A$ with $|A|>k$, we use the notation $\kmax (A)$ to denote the $k$-th largest element of $A$, following \cite{romano_control_2007}.

\Smallk*
\begin{proof}
To start, we may assume $|I(P)|\ge k$: otherwise at most $|I(P)|<k$ true
null hypotheses can be falsely rejected, so that $k\text{-FWER}_P = 0 \le
\alpha$ holds trivially, and parts (ii) and (iii) are unaffected. (This will also be implicitly assumed in subsequent proofs of the \cite{romano_control_2007} algorithms' validity in high-dimensions.)

Now consider Algorithm 2.1 from \cite{romano_control_2007}, which at each stage takes as inputs critical values $\hat{c}_{n, K}(1-\alpha,k)$ and a vector of test statistics $T_n$; we denote this as $\texttt{RW-2.1}(T_n, \hat{c}_{n, K}(1-\alpha,k))$.

The choice of critical values we will use is based on a bootstrap of the $1-\alpha$ quantile of the $\kmax$:
\[
\hat{c}_{n, K}(1-\alpha, k) := 1-\alpha \text{ quantile of } S_{n,K,[k]}^B \mid X.
\]
As in \cite{romano_control_2007}, define $I(P)$ to be the set of indices of true null hypotheses under $P$. Note that, for any $K \supset I(P)$, 
$$
\hat{c}_{n, K}(1-\alpha, k) \geq \hat{c}_{n, I(P)}(1-\alpha, k)
$$
because for any $K \supset I(P)$, and under any distribution, $ S_{n,K,[k]}^B \geq  S_{n,I(P),[k]}^B$ almost surely (i.e., the $\kmax$ statistic can only get larger if we include more test statistics without dropping the others).  
As such, Theorem 2.1 (i) holds, and we conclude that $\texttt{RW-2.1}(T_n, \hat{c}_{n, K}(1-\alpha,k))$ delivers:
\[
k\text{-FWER}_P \leq P\left\{\kmax \left(T_{n,j}: j \in I(P)\right)>\hat{c}_{n, I(P)}(1-\alpha, k)\right\}.
\]
Specifically, then, it is sufficient to show that
\[
\limsup_{n,p} P\left\{\kmax \left(T_{n,j}: j \in I(P)\right)> \hat{c}_{n, I(P)}(1-\alpha, k)\right\} \leq \alpha.
\]
To continue to use the notation of \cite{romano_control_2007}, let $\hat\theta_{n,j}:=\bar X_{n,j}$. Since $\theta_j(P) \leq 0$ for $j \in I(P)$, it follows that, almost surely,

$$
\begin{aligned}
\kmax \left(T_{n,j}: j \in I(P)\right) & =\kmax \left(\sqrt{n} \hat{\theta}_{n, j}: j \in I(P)\right) \\
& \leq \kmax \left(\sqrt{n}\left[\hat{\theta}_{n, j}-\theta_j(P)\right]: j \in I(P)\right)\\
& = \kmax \left(S_{n,j}: j \in I(P)\right)
\end{aligned}
$$
and therefore
\begin{align*}
    &P\left\{\kmax \left(T_{n,j}: j \in I(P)\right)> \hat{c}_{n, I(P)}(1-\alpha, k)\right\} \\
    \leq &P\left\{\kmax \left(S_{n,j}: j \in I(P)\right)>\hat{c}_{n, I(P)}(1-\alpha, k)\right\} .
\end{align*}

If we can show that the limit of the quantity on the right-hand side is no greater than $\alpha$, the proof is complete. However, Theorem 2.2 of \cite{ding_gaussian_2025} delivers exactly that, for any $K$,
\[
P\left\{S_{n,K,[k]}> \hat{c}_{n,K}\left(1-\alpha, k\right)\right\} \leq \alpha + o(1)
\]
under Assumptions \ref{ass_mom} and \ref{ass_rate}. So, chaining inequalities, we note that
\begin{align*}
     k\text{-FWER}_P \leq P\left\{S_{n,I(P),[k]}> \hat{c}_{n,I(P)}\left(1-\alpha, k\right)\right\}
    \leq \alpha +o(1)
\end{align*}
and so
\begin{align*}
    \limsup_{n,p\to\infty}
     k\text{-FWER}_P 
    \leq \alpha 
\end{align*}
which is precisely what we wanted to show.

To prove the second statement, consider the $\tilde H_{0,j}$ corresponding to all $\theta_j(P)>0$. Note that, for the Gaussian multiplier bootstrap, using Assumptions \ref{ass_mom} and \ref{ass_rate}, by the tail inequality Equation \ref{eqn_tail} in the proof of Lemma \ref{lem_sbb},
\[
\hat c_{n, [p]}\left(1-\alpha, k\right) \leq \hat c_{n, [p]}\left(1-\alpha, 1\right)  = O_P\left(B_n \sqrt{\log p} \right).
\]
Furthermore, each $S_{n,j} = \sqrt{n} \left[\hat{\theta}_{n,j}-\theta_j(P)\right]$ is tight, so 
$$
T_{n, j}=\sqrt{n} \hat{\theta}_{n,j} =\sqrt{n} (\hat{\theta}_{n,j}-\theta_j(P)) + \sqrt{n} \theta_j(P)\xrightarrow{P} \infty.
$$
However, because $\sqrt{n}$ grows faster than $B_n \sqrt{\log p}$ by Assumption \ref{ass_rate}, we have that also 
$$
\frac{T_{n, j}}{B_n \sqrt{\log p} \,\,} \xrightarrow{P} \infty.
$$
Therefore, with probability tending to one, $T_{n, j}>\hat{c}_{n,[p]}\left(1-\alpha, k\right)$,  resulting in the rejection of $\tilde H_{0,j}$ in the first step of Algorithm 2.1, so long as $\theta_j(P)$ is fixed or approaches zero slower than a rate of $B_n \sqrt{\log p/n}$.

To prove the third statement, the asymptotic validity of the streamlined Algorithm 2.2 follows from the separation assumption 
\[
\min_{j \notin I(P)}\theta_j(P) \gg B_n \sqrt{\log p /n}
\]
using the same logic as for the proof of the second statement and the argument of proof of Theorem 3.3 of \cite{romano_control_2007}, i.e., $\min \left(T_{n, j}: j \notin I(P)\right)$ is diverging at rate $\sqrt{n}$, and if any $\theta_j(P)=0$ then $
\max \left(T_{n, j}: j \in I(P)\right) = O_P(B_n \sqrt{\log p})
$.
\end{proof}

\Smallktwo*
\begin{proof}
We again start by considering Algorithm 2.1 from \cite{romano_control_2007}, which at each stage takes as inputs critical values $\hat{c}_{n, K}(1-\alpha,k)$ and a vector of test statistics $|T_n|$; we denote this as $\texttt{RW-2.1}(|T_n|, \hat{c}_{n, K}(1-\alpha,k))$.

The choice of critical values we will use is based on a bootstrap of the $1-\alpha$ quantile of the $\kmax$ of absolute values:
\[
\hat{c}_{n, K}(1-\alpha, k) := 1-\alpha \text{ quantile of } |S_{n,K}^B|_{[k]} \mid X.
\]
As in \cite{romano_control_2007}, define $I(P)$ to be the set of indices of true null hypotheses under $P$. Note that, for any $K \supset I(P)$, 
$$
\hat{c}_{n, K}(1-\alpha, k) \geq \hat{c}_{n, I(P)}(1-\alpha, k)
$$
because for any $K \supset I(P)$, and under any distribution, $ |S_{n,K}^B|_{[k]} \geq  |S_{n,I(P)}^B|_{[k]}$ almost surely (i.e., the $\kmax$ statistic can only get larger if we include more test statistics without dropping the others).  
As such, Theorem 2.1 (i) holds, and we conclude that $\texttt{RW-2.1}(|T_n|, \hat{c}_{n, K}(1-\alpha,k))$ delivers:
\[
k\text{-FWER}_P \leq P\left\{\kmax \left(|T_{n,j}|: j \in I(P)\right)>\hat{c}_{n, I(P)}(1-\alpha, k)\right\}.
\]
Specifically, then, it is sufficient to show that
\[
\limsup_{n,p} P\left\{\kmax \left(|T_{n,j}|: j \in I(P)\right)> \hat{c}_{n, I(P)}(1-\alpha, k)\right\} \leq \alpha.
\]
To continue to use the notation of \cite{romano_control_2007}, let $\hat\theta_{n,j}:=\bar X_{n,j}$. Since $\theta_j(P) = 0$ for $j \in I(P)$, it follows that, almost surely,

$$
\begin{aligned}
\kmax \left(|T_{n,j}|: j \in I(P)\right) & =\kmax \left(\sqrt{n} |\hat{\theta}_{n, j}|: j \in I(P)\right) \\
& = \kmax \left(\sqrt{n}\left|\hat{\theta}_{n, j}-\theta_j(P)\right|: j \in I(P)\right)\\
& = \kmax \left(|S_{n,j}|: j \in I(P)\right)
\end{aligned}
$$
and therefore
\begin{align*}
    &P\left\{\kmax \left(|T_{n,j}|: j \in I(P)\right)> \hat{c}_{n, I(P)}(1-\alpha, k)\right\} \\
    = &P\left\{\kmax \left(|S_{n,j}|: j \in I(P)\right)>\hat{c}_{n, I(P)}(1-\alpha, k)\right\} .
\end{align*}

If we can show that the limit of the quantity on the right-hand side is no greater than $\alpha$, the proof is complete. However, Theorem 2.2 and Remark 2 of \cite{ding_gaussian_2025} delivers exactly that, for any $K$,
\[
P\left\{|S_{n,K}|_{[k]}> \hat{c}_{n,K}\left(1-\alpha, k\right)\right\} \leq \alpha + o(1)
\]
under Assumptions \ref{ass_mom} and \ref{ass_rate}. So, chaining inequalities, we note that
\begin{align*}
     k\text{-FWER}_P \leq P\left\{|S_{n,I(P)}|_{[k]}> \hat{c}_{n,I(P)}\left(1-\alpha, k\right)\right\}
    \leq \alpha +o(1)
\end{align*}
and so
\begin{align*}
    \limsup_{n,p\to\infty}
     k\text{-FWER}_P 
    \leq \alpha 
\end{align*}
which is precisely what we wanted to show.

To prove the second statement, note that, for the Gaussian multiplier bootstrap, using Assumptions \ref{ass_mom} and \ref{ass_rate}, by the tail inequality Equation \ref{eqn_tail} in the proof of Lemma \ref{lem_sbb}, and using the insight from Remark 2 of \cite{ding_gaussian_2025} that quantiles for absolute values are identical to quantiles for a related $2p$ dimensional problem without absolute values,
\[
\hat c_{n, [p]}\left(1-\alpha, k\right) \leq \hat c_{n, [p]}\left(1-\alpha, 1\right)  = O_P\left(B_n \sqrt{\log p} \right).
\]
Now notice that each $S_{n,j} = \sqrt{n} \left[\hat{\theta}_{n,j}-\theta_j(P)\right]$ is tight, so consider that for the $ H_{0,j}$ corresponding to all $\theta_j(P) > 0$ that
$$
T_{n, j}=\sqrt{n} \hat{\theta}_{n,j} = \sqrt{n} (\hat{\theta}_{n,j}-\theta_j(P)) + \sqrt{n} \theta_j(P)\xrightarrow{P} \infty
$$
and that for the $H_{0,j}$ corresponding to all $\theta_j(P) < 0$ that
\[
T_{n, j}=\sqrt{n} \hat{\theta}_{n,j} = \sqrt{n} (\hat{\theta}_{n,j}-\theta_j(P)) + \sqrt{n} \theta_j(P)\xrightarrow{P} -\infty.
\]
However, because $\sqrt{n}$ grows faster than $B_n \sqrt{\log p}$ by Assumption \ref{ass_rate}, we have that also for all $ H_{0,j}$ for which $\theta_j(P)\neq 0$
$$
\frac{|T_{n, j}|}{B_n \sqrt{\log p} \,\,} \xrightarrow{P} \infty.
$$
Therefore, with probability tending to one, $|T_{n, j}|>\hat{c}_{n,[p]}\left(1-\alpha, k\right)$,  resulting in the rejection of $H_{0,j}$ in the first step of Algorithm 2.1, so long as $|\theta_j(P)|$ is fixed or approaches zero slower than a rate of $B_n \sqrt{\log p/n}$.

To prove the third statement, the asymptotic validity of the streamlined Algorithm 2.2 follows from the separation assumption 
\[
\min_{j \notin I(P)}|\theta_j(P)| \gg B_n \sqrt{\log p /n}
\]
using the same logic as for the proof of the second statement and the argument of proof of Theorem 3.3 of \cite{romano_control_2007}, i.e., $\min \left(|T_{n, j}|: j \notin I(P)\right)$ is diverging at rate $\sqrt{n}$, and if any $\theta_j(P)=0$ then $
\max \left(|T_{n, j}|: j \in I(P)\right) = O_P(B_n \sqrt{\log p})
$.
\end{proof}

Before proceeding to proofs of the high-dimensional CLTs, we define the following deterministic sequences:
\[
\nu_n := 2n^{-1} + 3\sqrt{\frac{B_n^2 \log ^3(p n)}{n}}, \qquad \beta_n :=  k^2\left(\frac{B_n^2 \log ^5(p n)}{n}\right)^{1 / 4}
\]
where both $\nu_n = o(1)$ and $\beta_n = o(1)$ under Assumption \ref{ass_rate}.

\Hdapprox*
\begin{proof}
The proof of this lemma proceeds following the strategy of \cite{chernozhukov_high-dimensional_2023}.

First, recall that the function $t \mapsto t_{[k]}$ is 1-Lipschitz wrt to the sup-norm, meaning almost surely
\[
\left|\tilde S_{n,[k]} - S_{n,[k]} \right| = \left|(S_{n}+R_n)_{[k]} - S_{n,[k]} \right| \leq \| R_n \|_\infty.
\]
Consider the event $\{ \| R_n \|_\infty \leq \epsilon \}$, as well as the event $\{(S_{n}+R_n)_{[k]} \leq t \}$. Then observe that
\[
\{ \| R_n \|_\infty \leq \epsilon \} \cap \{(S_{n}+R_n)_{[k]} \leq t \} \subseteq \{ S_{n,[k]} \leq t + \epsilon \}
\]
As such notice that
\[
\{(S_{n}+R_n)_{[k]} \leq t \} = (\{(S_{n}+R_n)_{[k]} \leq t \} \cap \{ \| R_n \|_\infty \leq \epsilon \}) \cup (\{(S_{n}+R_n)_{[k]} \leq t \} \cap \{ \| R_n \|_\infty > \epsilon \} )
\]
and so
\[
P(\tilde S_{n,[k]} \leq t ) = P((S_{n}+R_n)_{[k]} \leq t ) \leq P( S_{n,[k]} \leq t + \epsilon ) + P( \| R_n \|_\infty > \epsilon).
\]
We now need to apply an anti-concentration result, which can be found as embedded in Lemma A.7 of \cite{ding_gaussian_2025}. It implies, for any $t$, for a $C_1$ depending on only $b_1, b_2$,
\[
P\left(\tilde S_{n,[k]}\leq t\right)  \leq P\left(S_{n,[k]} \leq t\right)+ C_1 k \epsilon \sqrt{1 \vee \log (p / \epsilon)}+ 2C_1\beta_n+P\left(\left\|R_n \right\|_{\infty}>\epsilon\right).
\]

We then have, using Assumptions \ref{ass_mom} and \ref{ass_rate} in addition to the high-dimensional CLT for the $k$ largest coordinate of Lemma A.6 in \cite{ding_gaussian_2025} (which requires Assumption \ref{ass_mom}):
\begin{align*}
    P\left(\tilde S_{n,[k]}\leq t\right) 
    &\leq P\left(S_{n,[k]} \leq t\right)+ C_1 k \epsilon \sqrt{1 \vee \log (p / \epsilon)}+ 2C_1\beta_n+P\left(\left\|R_n \right\|_{\infty}>\epsilon\right) \\
    &\leq P\left(N(0,\Sigma)_{[k]} \leq t\right)+ C_1 k \epsilon \sqrt{1 \vee \log (p / \epsilon)}+C_2\beta_n+P\left(\left\|R_n\right\|_{\infty}>\epsilon\right).
\end{align*}


Since $\sqrt{\log(pn)}\,\|R_n\|_\infty = o_P(1)$, there exists
a deterministic sequence $\delta_n \searrow 0$ such that
$P\big(\sqrt{\log(pn)}\,\|R_n\|_\infty > \delta_n\big)\to 0$. If we choose
\[
\epsilon=\epsilon_n := \max\Big\{\delta_n/\sqrt{\log(pn)},\ (pn)^{-1}\Big\},
\]
then $P(\|R_n\|_\infty > \epsilon_n)\to 0$, and since $\epsilon_n \ge
(pn)^{-1}$ implies $\log(p/\epsilon_n)\le \log(p^2 n)\le 2\log(pn)$, we have,
for fixed $k$,
\[
k\,\epsilon_n\sqrt{1\vee\log(p/\epsilon_n)}
\;\le\; \sqrt{2}\,k\,\max\Big\{\delta_n,\ \tfrac{\sqrt{\log(pn)}}{pn}\Big\}
\;\to\; 0 .
\]
It follows that
\[
P\left(\tilde S_{n,[k]}\leq t\right)  \leq P\left(N(0,\Sigma)_{[k]} \leq t\right) + \rho_n + C_2\beta_n
\]
where $\rho_n := C_1 k \epsilon_n \sqrt{1 \vee \log (p / \epsilon_n)}+
P\left(\left\|R_n \right\|_{\infty}>\epsilon_n\right) = o(1)$ is
deterministic.

From the reverse direction, note that
\[
\left\{\left\|R_n\right\|_{\infty} \leq \epsilon\right\} \cap\left\{S_{n,[k]} \leq t-\epsilon\right\} \subseteq\left\{\left(S_n+R_n\right)_{[k]} \leq t\right\} 
\]
so it also holds that similarly, partitioning $\{S_{n,[k]} \leq t - \epsilon \}$ using $\{ \| R_n \|_\infty \leq \epsilon \}$ and its complement, that
\begin{align*}
    &P\left(\tilde S_{n,[k]}\leq t\right)
    \geq P\left(S_{n,[k]} \leq t-\epsilon\right)-P\left(\left\|R_n\right\|_{\infty}>\epsilon\right),
\end{align*}
and so using identical arguments as above we conclude that, uniformly in $t$,
\[
\left|P\left( \tilde S_{n,[k]} \leq t\right)-P\left(N(0,\Sigma)_{[k]} \leq t\right)\right| \leq C_2\beta_n + \rho_n= o(1),
\]
proving the stated result.
\end{proof}

\Hdapproxboot*
\begin{proof}
The proof of this lemma proceeds following the strategy of Theorem \ref{lem_hdkapprox}. Using the same logic as the proof of Theorem \ref{lem_hdkapprox}, note that almost surely
\[
P^B(\tilde S_{n,[k]}^B \leq t) = P^B((S_{n}^B+R_n)_{[k]} \leq t ) \leq P^B( S_{n,[k]}^B \leq t + \epsilon) + P^B( \| R_n \|_\infty > \epsilon).
\]
Observe that, from \cite{ding_gaussian_2025} Lemma A.6 and Lemma A.8, using Assumptions \ref{ass_mom} and \ref{ass_rate} and the triangle inequality, with probability at least $1-\nu_n$
\[
\sup_{t \in \mathbb{R}} \left|P^B\left(S_{n,[k]}^B \leq t \right) - P\left(S_{n,[k]} \leq t\right)  \right| \leq C_3\beta_n,
\]
and thus with probability at least $1-\nu_n$
\[
P^B(\tilde S_{n,[k]}^B \leq t) = P^B((S_{n}^B+R_n)_{[k]} \leq t ) \leq P\left(S_{n,[k]} \leq t + \epsilon\right) + C_3\beta_n + P^B( \| R_n \|_\infty > \epsilon).
\]
Then again using Lemma A.7 and Lemma A.6 of \cite{ding_gaussian_2025} we get that with probability at least $1-\nu_n$
\[
P^B(\tilde S_{n,[k]}^B \leq t) \leq P\left(N(0,\Sigma)_{[k]} \leq t\right) + C_4\beta_n + \rho_n^B
\]
where $\rho_n^B :=  C_1 k \epsilon \sqrt{1 \vee \log (p / \epsilon)}+ P^B\left(\left\|R_n \right\|_{\infty}>\epsilon\right)$. The reverse direction proceeds similarly just as in Theorem \ref{lem_hdkapprox}, yielding still with probability at least $1-\nu_n$ (as in the reverse direction the ``bad'' event is the same)
\[
|P^B(\tilde S_{n,[k]}^B \leq t) - P\left(N(0,\Sigma)_{[k]} \leq t\right)| \leq C_4\beta_n + \rho_n^B 
\]
uniformly in $t$. We may re-write this as, with probability at least $1-\nu_n$,
\[
|P^B(\tilde S_{n,[k]}^B \leq t) - P\left(N(0,\Sigma)_{[k]} \leq t\right)| - \rho_n^B \leq C_4\beta_n 
\]
and so using Lemma \ref{lem_conp}, uniformly in $t$,
\[
|P^B(\tilde S_{n,[k]}^B \leq t) - P\left(N(0,\Sigma)_{[k]} \leq t\right)| - \rho_n^B = o_P(1).
\]

By Lemma \ref{lem_op}, $\sqrt{\log(pn)}\,\|R_n\|_\infty = o_P(1)$ implies
$P^B\big(\sqrt{\log(pn)}\,\|R_n\|_\infty > u\big) = o_P(1)$ for every fixed
$u>0$; a standard diagonalization argument then yields a deterministic
sequence $\delta_n\searrow 0$ with
$P^B\big(\sqrt{\log(pn)}\,\|R_n\|_\infty > \delta_n\big) = o_P(1)$. Setting
$\epsilon=\epsilon_n := \max\{\delta_n/\sqrt{\log(pn)},\,(pn)^{-1}\}$ as in the proof
of Theorem \ref{lem_hdkapprox}, we obtain $\log(p/\epsilon_n)\le 2\log(pn)$
and hence $C_1 k \epsilon_n\sqrt{1\vee\log(p/\epsilon_n)} = o(1)$ for fixed
$k$, so that
\[
\rho_n^B = C_1 k \epsilon_n \sqrt{1 \vee \log (p / \epsilon_n)}
+ P^B\left(\left\|R_n \right\|_{\infty}>\epsilon_n\right) = o(1) + o_P(1) = o_P(1).
\]
This means that
\[|P^B(\tilde S_{n,[k]}^B \leq t) - P\left(N(0,\Sigma)_{[k]} \leq t\right)| = o_P(1)\]
uniformly in $t$, completing the proof.
\end{proof}

\Corhd*
\begin{proof}
Let $\tilde S_n := \hat\Lambda^{-1/2}S_n$. Then we may write $\tilde S_n = \Lambda^{-1/2} S_n + R_n$, and note that
\[
R_n = \hat \Lambda^{-1/2} S_n -  \Lambda^{-1/2} S_n = (\hat \Lambda^{-1/2} -  \Lambda^{-1/2}) S_n,
\]
and thus we can use the machinery of Theorem \ref{lem_hdkapprox} to prove the desired result so long as we can show that $\| R_n \|_\infty = o_P(1/\sqrt{\log (pn)})$.

Note that, by the sub-multiplicative induced matrix norm inequality,
\begin{align*}
    \left\|\left(\widehat{\Lambda}^{-1 / 2}-\Lambda^{-1 / 2}\right) S_n\right\|_{\infty} &\leq \|\widehat{\Lambda}^{-1 / 2}-\Lambda^{-1 / 2} \|_{\infty} \left\|S_n\right\|_{\infty} = \|\widehat{\Lambda}^{-1 / 2}-\Lambda^{-1 / 2} \|_{\max} \left\|S_n\right\|_{\infty}
\end{align*}
where the last equality follows from the fact that the max elementwise norm is equal to the induced operator $\infty$-norm for diagonal matrices.
  
To control the first term on the right-hand side, we may turn to  \cite{kuchibhotla_moving_2022}, Theorem 4.2, which shows that if $X_{ij}$ are sub-Weibull with parameter $\alpha = 1$ (i.e., subexponential, granted by Assumption \ref{ass_mom}), then Assumption \ref{ass_rate} ensures that
\[
\|\widehat{\Lambda}-\Lambda \|_{\max} = o_P(1/\log (pn)).
\]
To see this, note that Theorem 4.2 of \cite{kuchibhotla_moving_2022} (spelled out in Remark 4.1) implies that
$$\|\hat\Sigma - \Sigma\|_{\max} = O_P\left(\sqrt{\frac{\log p}{n}}\vee\frac{(\log p)^2(\log n)^2}{n}\right)$$ if $B_n = O(1)$ (where $\|\cdot \|_{\max}$ is the maximum elementwise norm), which we assume. As a consequence, Assumption \ref{ass_rate} also delivers that $\|\widehat{\Lambda}-\Lambda \|_{\max} \leq \|\hat\Sigma - \Sigma\|_{\max} = o_P\left(1/\log (pn)\right)$, and thus $\|\widehat{\Lambda}^{-1 / 2}-\Lambda^{-1 / 2} \|_{\max} = o_P\left(1/\log (pn)\right)$ as well using a Taylor expansion argument and part (ii) of Assumption \ref{ass_mom}.\footnote{Since $\widehat\Lambda$ and $\Lambda$ are invariant under a common location
shift of the data, we may apply the cited result to the centered array
$X_i - E[X_i]$, whose marginal $\psi_1$-norms are bounded by $B_n$ under
Assumption \ref{ass_mom}(i).}

For the second term, note that, using Lemma \ref{lem_bernstein} via Assumptions \ref{ass_mom} and \ref{ass_rate}, that $\|S_n \|_\infty = O_P(B_n \sqrt{\log p }) = O_P(\sqrt{\log p })$, where the last equality follows because $B_n = O(1)$ by assumption.
Putting everything together then, we conclude
\[
\left\|\left(\widehat{\Lambda}^{-1 / 2}-\Lambda^{-1 / 2}\right)
S_n\right\|_{\infty} \leq o_P\left(1 / \log (pn)\right) O_P(\sqrt{\log p}) =
o_P(1/\sqrt{\log (pn)}).
\]
\end{proof}

\Corhdboot*
\begin{proof}
The proof of this theorem proceeds just as the proof of Theorem \ref{cor_hdkapprox}. Let $\tilde S_n^B := \hat\Lambda^{-1/2}S_n^B$ and
\[
R_n = \hat \Lambda^{-1/2} S_n^B -  \Lambda^{-1/2} S_n^B = (\hat \Lambda^{-1/2} -  \Lambda^{-1/2}) S_n^B,
\]
and thus we can use Theorem \ref{lem_hdkapproxboot} to prove the desired result so long as we can show that, sufficiently, $\|R_n \|_\infty  = o_P(1/\sqrt{\log (pn)})$. 

Note then that almost surely, as in Theorem \ref{cor_hdkapprox},
\begin{align*}
    \left\|\left(\widehat{\Lambda}^{-1 / 2}-\Lambda^{-1 / 2}\right) S_n^B\right\|_{\infty} &\leq \|\widehat{\Lambda}^{-1 / 2}-\Lambda^{-1 / 2} \|_{\max} \left\|S_n^B\right\|_{\infty}.
\end{align*}
Using the same arguments as in the proof of Theorem \ref{cor_hdkapprox}, we conclude that $
\|\widehat{\Lambda}^{-1/2}-\Lambda^{-1/2} \|_{\infty} = o_P(1/\log (pn))$. By Lemma \ref{lem_sbb} we have that $\|S_n^B \|_\infty = O_P( B_n\sqrt{\log p})$, and under $B_n = O(1)$ then $\|S_n^B \|_\infty = O_P( \sqrt{\log p})$.
Thus, we complete the proof as in Theorem \ref{cor_hdkapprox}.
\end{proof}

\Smallkstu*
\begin{proof}
We consider Algorithm 2.1 from \cite{romano_control_2007}, which at each stage takes as inputs critical values $\hat{c}_{n, K}(1-\alpha,k)$ and a vector of test statistics $T_n$; we denote this as $\texttt{RW-2.1}(T_n, \hat{c}_{n, K}(1-\alpha,k))$.

The choice of critical values we will use is based on a bootstrap of the $1-\alpha$ quantile of the $\kmax$:
\[
\hat{c}_{n, K}(1-\alpha, k) := 1-\alpha \text{ quantile of } (\hat\Lambda^{-1/2} S_{n}^B)_{K,[k]} \mid X.
\]
As in \cite{romano_control_2007}, define $I(P)$ to be the set of indices of true null hypotheses under $P$. Note that, for any $K \supset I(P)$, 
$$
\hat{c}_{n, K}(1-\alpha, k) \geq \hat{c}_{n, I(P)}(1-\alpha, k)
$$
because for any $K \supset I(P)$, and under any distribution, $ (\hat\Lambda^{-1/2} S_{n}^B)_{K,[k]} \geq  (\hat\Lambda^{-1/2} S_{n}^B)_{I(P),[k]}$ almost surely (i.e., the $\kmax$ statistic can only get larger if we include more test statistics without dropping the others).  

As such, Theorem 2.1 (i) holds, and we conclude that $\texttt{RW-2.1}(\hat\Lambda^{-1/2}T_n, \hat{c}_{n, K}(1-\alpha,k))$ delivers:
\[
k\text{-FWER}_P \leq P\left\{\kmax \left(\hat\Lambda_{jj}^{-1/2}T_{n,j}: j \in I(P)\right)>\hat{c}_{n, I(P)}(1-\alpha, k)\right\}.
\]
Specifically, then, it is sufficient to show that
\[
\limsup_{n,p} P\left\{\kmax \left(\hat\Lambda_{jj}^{-1/2}T_{n,j}: j \in I(P)\right)> \hat{c}_{n, I(P)}(1-\alpha, k)\right\} \leq \alpha.
\]
To continue to use the notation of \cite{romano_control_2007}, let $\hat\theta_{n,j}:=\bar X_{n,j}$. Since $\theta_j(P) \leq 0$ for $j \in I(P)$, it follows that, almost surely,

$$
\begin{aligned}
\kmax \left(\hat\Lambda_{jj}^{-1/2}T_{n,j}: j \in I(P)\right) & =\kmax \left(\hat\Lambda_{jj}^{-1/2}\sqrt{n} \hat{\theta}_{n, j}: j \in I(P)\right) \\
& \leq \kmax \left(\hat\Lambda_{jj}^{-1/2}\sqrt{n}\left[\hat{\theta}_{n, j}-\theta_j(P)\right]: j \in I(P)\right)\\
& = \kmax \left(\hat\Lambda_{jj}^{-1/2}S_{n,j}: j \in I(P)\right)
\end{aligned}
$$
and therefore
\begin{align*}
    &P\left\{\kmax \left(\hat\Lambda_{jj}^{-1/2}T_{n,j}: j \in I(P)\right)> \hat{c}_{n, I(P)}(1-\alpha, k)\right\} \\
    \leq &P\left\{\kmax \left(\hat\Lambda_{jj}^{-1/2}S_{n,j}: j \in I(P)\right)>\hat{c}_{n, I(P)}(1-\alpha, k)\right\} .
\end{align*}
If we can show that the limit of the quantity on the right-hand side is no greater than $\alpha$, the proof is complete. 

Recall, however, that Theorem \ref{cor_hdkapprox} (via Theorem \ref{lem_hdkapprox}) gives us that, for any viable $K$,
\[
\sup _{t \in \mathbb{R}}\left|P\left( \left(\hat\Lambda^{-1/2}S_n\right)_{K,[k]} \leq t\right)-P\left(N(0,\Sigma_0)_{K,[k]} \leq t\right)\right| \leq C_2\beta_n + \rho_n
\]
and Theorem \ref{cor_hdkapproxboot} (via Theorem \ref{lem_hdkapproxboot}) that, with probability at least $1-\nu_n$,  
\[
\sup _{t \in \mathbb{R}}\left|P^B\left( \left(\hat\Lambda^{-1/2}S_n^B\right)_{K,[k]} \leq t \right)-P\left(N(0,\Sigma_0)_{K,[k]} \leq t\right)\right| \leq C_4\beta_n + \rho_n^B
\]
and so by the triangle inequality, with probability at least $1-\nu_n$,
\[
\sup _{t \in \mathbb{R}}\left|P\left( \left(\hat\Lambda^{-1/2}S_n\right)_{K,[k]} \leq t\right)-P^B\left( \left(\hat\Lambda^{-1/2}S_n^B\right)_{K,[k]} \leq t \right)\right| \leq C_5\beta_n + \rho_n + \rho_n^B.
\]


Now let $c_{1-\gamma}$ be the $1-\gamma$ quantile of
$\left(\hat\Lambda^{-1/2}S_n\right)_{K,[k]}$, and use $c_{1-\gamma}^B =
\hat{c}_{n, K}(1-\gamma, k)$ for shorthand. Since $\rho_n^B = o_P(1)$, fix a
deterministic sequence $w_n \searrow 0$ such that $P(\rho_n^B > w_n) = o(1)$,
and define the deterministic sequences
\[
\check\beta_n := C_5\beta_n + \rho_n + w_n, \qquad
\tilde\beta_n := 2(C_2+C_5)\beta_n + 4\rho_n + 2w_n,
\]
so that $c_{1-\alpha+\check\beta_n}$ and $c_{1-\alpha-\tilde\beta_n}$ are
well-defined population quantiles (for all $n$ large enough that
$\check\beta_n < 1-\alpha$ and $\tilde\beta_n < \alpha$). Work on the
intersection of the event on which the uniform comparison of the previous
display holds and the event $\{\rho_n^B \le w_n\}$; this intersection has
probability at least $1-\nu_n - P(\rho_n^B > w_n)$, and on it the uniform
comparison holds with the deterministic bound $C_5\beta_n + \rho_n + w_n$.
Consequently, on this event, both
\begin{align*}
    P^B \left(\left(\hat\Lambda^{-1/2}S_n^B\right)_{K,[k]} \leq c_{1-\alpha+\check\beta_n}\right) \geq P\left(\left(\hat\Lambda^{-1/2}S_n\right)_{K,[k]} \leq c_{1-\alpha+\check\beta_n}\right) - (C_5\beta_n + \rho_n + w_n) \geq 1-\alpha
\end{align*}
and
\begin{align*}
    P^B \left(\left(\hat\Lambda^{-1/2}S_n^B\right)_{K,[k]} \leq c_{1-\alpha-\tilde\beta_n}\right) &\leq P \left(\left(\hat\Lambda^{-1/2}S_n\right)_{K,[k]} \leq c_{1-\alpha-\tilde\beta_n }\right) + C_5\beta_n + \rho_n + w_n \\
    &\leq (1-\alpha -\tilde\beta_n) + 2C_2\beta_n + 2\rho_n + C_5\beta_n + \rho_n + w_n \\
    &< 1-\alpha,
\end{align*}
where the second inequality comes from the fact that
$$P \left(\left(\hat\Lambda^{-1/2}S_n\right)_{K,[k]} \leq t + \epsilon\right) \leq P \left(\left(\hat\Lambda^{-1/2}S_n\right)_{K,[k]} \leq t\right) + C_1 k \epsilon \sqrt{1 \vee \log (p / \epsilon)} + 2(C_2\beta_n + \rho_n)$$
using the argument of Lemma A.7 of \cite{ding_gaussian_2025} via Theorem
\ref{lem_hdkapprox} (letting $t=c_{1-\alpha-\tilde\beta_n}-\epsilon$ and
taking $\epsilon \searrow 0$), and the final inequality holds because
$\tilde\beta_n > (2C_2+C_5)\beta_n + 3\rho_n + w_n$.

This in turn implies that
\[
P(c_{1-\alpha-\tilde\beta_n} < c^B_{1-\alpha} \leq c_{1-\alpha+\check\beta_n}) \geq 1- \nu_n - P(\rho_n^B > w_n).
\]
This in turn implies that both
\begin{align*}
    P((\hat\Lambda^{-1/2}S_n)_{K,[k]} > c^B_{1-\alpha}) &\leq P((\hat\Lambda^{-1/2}S_n)_{K,[k]} > c_{1-\alpha-\tilde\beta_n}) + (\nu_n + P(\rho_n^B > w_n)) \\
    &\leq \alpha + \tilde\beta_n + (\nu_n + P(\rho_n^B > w_n))
\end{align*}
and, using Lemma A.7 of \cite{ding_gaussian_2025} via Theorem \ref{lem_hdkapprox} again,
\begin{align*}
    P((\hat\Lambda^{-1/2}S_n)_{K,[k]} > c^B_{1-\alpha}) &\geq P((\hat\Lambda^{-1/2}S_n)_{K,[k]} > c_{1-(\alpha-\check\beta_n)}) - (\nu_n + P(\rho_n^B > w_n)) \\
    &\geq \alpha - \check\beta_n - 2(C_2\beta_n + \rho_n) - (\nu_n + P(\rho_n^B > w_n)).
\end{align*}
These two statements then jointly imply that 
\[
|P((\hat\Lambda^{-1/2}S_n)_{K,[k]}> \hat c_{n,K}(1 - \alpha,k)) - \alpha| \leq o(1).
\]
So, chaining inequalities, we arrive at
\begin{align*}
     k\text{-FWER}_P \leq P\left\{(\hat\Lambda^{-1/2}S_n)_{I(P),[k]}> \hat{c}_{n,I(P)}\left(1-\alpha, k\right)\right\}
    \leq \alpha +o(1)
\end{align*}
and so
\begin{align*}
    \limsup_{n,p\to\infty}
     k\text{-FWER}_P 
    \leq \alpha 
\end{align*}
which is precisely what we wanted to show.

To prove the second statement, consider the $\tilde H_{0,j}$ corresponding to all $\theta_j(P)>0$. Note that, for the Gaussian multiplier bootstrap, using Assumptions \ref{ass_mom} and \ref{ass_rate}, by the tail inequality Equation \ref{eqn_tail} in the proof of Lemma \ref{lem_sbb} we can conclude that (because $B_n = O(1)$)
\[
\hat c_{n, [p]}\left(1-\alpha, k\right) \leq \hat c_{n, [p]}\left(1-\alpha, 1\right)  = O_P\left(\sqrt{\log p} \right)
\]
(noting that premultiplication by $\hat\Lambda^{-1/2}$ under the data conditional law is just a particular choice of normalization). Furthermore, each $S_{n,j} = \sqrt{n} \left[\hat{\theta}_{n,j}-\theta_j(P)\right]$ is tight, and the conventional Lindeberg-Levy CLT and non-degeneracy and bounded fourth moments from Assumption \ref{ass_mom} yields $|\hat\Lambda_{jj}^{-1/2}-\Lambda_{jj}^{-1/2}|  =  O_P(n^{-1/2})$, so
\begin{align*}
\hat\Lambda_{jj}^{-1/2}T_{n, j}&=\hat\Lambda_{jj}^{-1/2}\sqrt{n} \hat{\theta}_{n,j} \\
&=(\Lambda_{jj}^{-1/2}+O_P(n^{-1/2}))\sqrt{n} (\hat{\theta}_{n,j}-\theta_j(P)) +O_P(n^{-1/2})\sqrt{n} \theta_j(P) + \Lambda_{jj}^{-1/2}\sqrt{n} \theta_j(P)\\
&=\Lambda_{jj}^{-1/2}O_P(1)+O_P(n^{-1/2}) +O_P(1)\theta_j(P) + \Lambda_{jj}^{-1/2}\sqrt{n} \theta_j(P)
\xrightarrow{P}\infty
\end{align*}
because the last term is diverging. However, because $\sqrt{n}$ grows faster than $ \sqrt{\log p}$ when $B_n = O(1)$ by Assumption \ref{ass_rate}, we have that also 
$$
\frac{\hat\Lambda_{jj}^{-1/2}T_{n, j}}{ \sqrt{\log p} \,\,} \xrightarrow{P} \infty.
$$
Therefore, with probability tending to one, $T_{n, j}>\hat{c}_{n,[p]}\left(1-\alpha, k\right)$,  resulting in the rejection of $\tilde H_{0,j}$ in the first step of Algorithm 2.1, so long as $\theta_j(P)$ is fixed or approaches zero slower than a rate of $ \sqrt{\log p/n}$.

To prove the third statement, the asymptotic validity of the streamlined Algorithm 2.2 follows from the separation assumption
\[
\min_{j \notin I(P)}\theta_j(P) \gg B_n \sqrt{\log p /n}
\]
using the same logic as for the proof of the second statement and the argument of proof of Theorem 3.3 of \cite{romano_control_2007}, i.e., $\min \left(\hat\Lambda_{jj}^{-1/2}T_{n, j}: j \notin I(P)\right)$ is diverging at rate $\sqrt{n}$, and if any $\theta_j(P)=0$ then $
\max \left(\hat\Lambda_{jj}^{-1/2}T_{n, j}: j \in I(P)\right) = O_P(B_n \sqrt{\log p})=O_P(\sqrt{\log p})
$.

\end{proof}

\Hdbootinfluence*
\begin{proof} We follow the proof strategy of \cite{belloni_high-dimensional_2018}, Theorem 2.2.

Because of Assumption \ref{ass_approx}, as per \cite{belloni_high-dimensional_2018}, there exist two deterministic sequences $\delta_n \searrow 0, v_n \searrow 0$ that satisfy 
\[
P\left(\max _{j \in[p]} \frac{1}{n} \sum_{i=1}^n\left[\left(\hat{\psi}_{i j}-\psi_{i j}\right)^2\right]>\delta_n^2 / \log ^2(p n)\right) \leq v_n,
\]
Set $\Delta_n:=\delta_n / \sqrt{\log (p n)}$, and observe that, under the bootstrap law $P^B$, that $\frac{1}{\sqrt{n}} \sum_{i=1}^n \xi_i\left(\hat{\psi}_{i j}-\psi_{i j}\right)$ is a zero-mean Gaussian with variance $n^{-1} \sum_{i=1}^n\left(\hat\psi_{i j}-\psi_{i j}\right)^2$. Thus by the Borell-TIS inequality, for any $\alpha \in (0,1)$,
\[
P^B\left(\max _{j \in[p]}\left|\frac{1}{\sqrt{n}} \sum_{i=1}^n \xi_i\left(\hat{\psi}_{i j}-\psi_{i j}\right)\right|>\max _{j \in[p]} \sqrt{n^{-1}\sum_{i=1}^n\left[\left(\hat{\psi}_{i j}-\psi_{i j}\right)^2\right]}(\sqrt{2 \log (2p)}+\sqrt{2 \log (1 / \alpha)})\right) \leq 2\alpha.
\]
Using Assumption \ref{ass_approx} and setting $\alpha=\delta_n / 2 \geq 1 /(2 n)$, we then have that
\[
P^B\left(\max _{j \in[p]}\left|\frac{1}{\sqrt{n}} \sum_{i=1}^n \xi_i\left(\hat{\psi}_{i j}-\psi_{i j}\right)\right|>C_7 \Delta_n\right) \leq \delta_n
\]
with probability $1 - v_n$ for some universal constant $C_7$. So far, these arguments are identical to \cite{belloni_high-dimensional_2018}.

Now consider the event $\left\{\max _{j \in[p]}\left|\frac{1}{\sqrt{n}} \sum_{i=1}^n \xi_i\left(\hat{\psi}_{i j}-\psi_{i j}\right)\right|>C_7 \Delta_n\right\}$ for any $p,n$ and $t>0$ and $k \geq 1 $ and notice that
\begin{align*}
    &\left\{\max _{j \in[p]}\left|\frac{1}{\sqrt{n}} \sum_{i=1}^n \xi_i\left(\hat{\psi}_{i j}-\psi_{i j}\right)\right|>C_7 \Delta_n\right\} \\&\supseteq \left\{\left|\kmax_{j \in[p]}\frac{1}{\sqrt{n}} \sum_{i=1}^n \xi_i\hat{\psi}_{i j}-\kmax_{j \in[p]}\frac{1}{\sqrt{n}}\sum_{i=1}^n\xi_i\psi_{i j}\right|>C_7 \Delta_n\right\} \\
    &\supseteq \left\{\kmax_{j \in[p]}\frac{1}{\sqrt{n}}\sum_{i=1}^n\xi_i\psi_{i j} - \kmax_{j \in[p]}\frac{1}{\sqrt{n}} \sum_{i=1}^n \xi_i\hat{\psi}_{i j}>C_7 \Delta_n\right\} \\
    &\supseteq \left\{\kmax_{j \in[p]}\frac{1}{\sqrt{n}}\sum_{i=1}^n\xi_i \psi_{i j} > t + C_7 \Delta_n \right\} \cap\left\{ \kmax_{j \in[p]}\frac{1}{\sqrt{n}} \sum_{i=1}^n \xi_i\hat{\psi}_{i j} \leq t \right\}.
\end{align*}
where the second line follows by noting that the $\kmax$ function is $L_\infty$ Lipschitz. Then notice that, almost surely, by a Boole-Fréchet inequality,
\begin{align*}
    &P^B\left(\left\{\kmax_{j \in[p]}\frac{1}{\sqrt{n}}\sum_{i=1}^n\xi_i \psi_{i j} > t + C_7 \Delta_n \right\} \cap\left\{ \kmax_{j \in[p]}\frac{1}{\sqrt{n}} \sum_{i=1}^n \xi_i\hat{\psi}_{i j} \leq t \right\}\right)\\
    \geq &P^B\left(\left\{ \kmax_{j \in[p]}\frac{1}{\sqrt{n}} \sum_{i=1}^n \xi_i\hat{\psi}_{i j} \leq t \right\}\right) - P^B\left(\left\{\kmax_{j \in[p]}\frac{1}{\sqrt{n}}\sum_{i=1}^n\xi_i \psi_{i j} \leq  t + C_7 \Delta_n \right\} \right),
\end{align*}
meaning that still with probability $1-v_n$
\[
P^B\left( \kmax_{j \in[p]}\frac{1}{\sqrt{n}} \sum_{i=1}^n \xi_i\hat{\psi}_{i j} \leq t \right)  \leq P^B\left(\kmax_{j \in[p]}\frac{1}{\sqrt{n}}\sum_{i=1}^n\xi_i \psi_{i j} \leq  t + C_7 \Delta_n  \right)  + \delta_n.
\]
However, we also have, using equation (34) (which invokes anti-concentration result Corollary A.1) and Proposition 2.4 from \cite{ding_gaussian_2025} and the triangle inequality,
\begin{align*}
   P^B\left( \kmax_{j \in[p]}\frac{1}{\sqrt{n}} \sum_{i=1}^n \xi_i\hat{\psi}_{i j} \leq t \right)  \leq P\left(N(0,\Sigma)_{[k]} \leq  t  \right)  + \delta_n + C_8\beta_n
\end{align*}
with a probability $1-1 /\left(2 n^4\right)-1 / n-3 \nu_n - v_n=1-o(1)$ with some constant $C_8$ that only depends on $b_1, b_2$.

The lower bound can be proven using nearly identical arguments and setting a threshold of $t-C_7\Delta_n$; because these arguments hold for any $t$, we thus conclude, for fixed $k$, with probability at least $1-1 /\left(2 n^4\right)-1 / n-3 \nu_n - v_n$,
\[
\sup _{t \in \mathbb{R}}\left|P^B\left( \tilde S_{n,[k]}^B \leq t \right)-P\left(N(0,\Sigma)_{[k]} \leq t\right)\right| \leq \delta_n + C_8\beta_n.
\]
Application of Lemma \ref{lem_conp} then completes the proof.

\end{proof}

\Smallkaif*
\begin{proof}
We consider Algorithm 2.1 from \cite{romano_control_2007}, which at each stage takes as inputs critical values $\hat{c}_{n, K}(1-\alpha,k)$ and a vector of test statistics $T_n$; we denote this as $\texttt{RW-2.1}(T_n, \hat{c}_{n, K}(1-\alpha,k))$.

The choice of critical values we will use is based on a bootstrap of the $1-\alpha$ quantile of the $\kmax$:
\[
\hat{c}_{n, K}(1-\alpha, k) := 1-\alpha \text{ quantile of } (\tilde S_{n}^B)_{K,[k]} \mid X.
\]
As in \cite{romano_control_2007}, define $I(P)$ to be the set of indices of true null hypotheses under $P$. Note that, for any $K \supset I(P)$, 
$$
\hat{c}_{n, K}(1-\alpha, k) \geq \hat{c}_{n, I(P)}(1-\alpha, k)
$$
because for any $K \supset I(P)$, and under any distribution, $ (\tilde S_{n}^B)_{K,[k]} \geq  (\tilde S_{n}^B)_{I(P),[k]}$ almost surely (i.e., the $\kmax$ statistic can only get larger if we include more test statistics without dropping the others).  

As such, Theorem 2.1 (i) holds, and we conclude that $\texttt{RW-2.1}(\sqrt{n}\hat\theta, \hat{c}_{n, K}(1-\alpha,k))$ delivers:
\[
k\text{-FWER}_P \leq P\left\{\kmax \left(\sqrt{n}\hat\theta_j: j \in I(P)\right)>\hat{c}_{n, I(P)}(1-\alpha, k)\right\}.
\]
Specifically, then, it is sufficient to show that
\[
\limsup_{n,p} P\left\{\kmax \left(\sqrt{n}\hat\theta_j: j \in I(P)\right)> \hat{c}_{n, I(P)}(1-\alpha, k)\right\} \leq \alpha.
\]
Since $\theta_j(P) \leq 0$ for $j \in I(P)$, it follows that, almost surely,
$$
\begin{aligned}
\kmax \left(\sqrt{n}\hat\theta_j: j \in I(P)\right) 
& \leq \kmax \left(\sqrt{n}\left[\hat{\theta}_{ j}-\theta_j(P)\right]: j \in I(P)\right)
\end{aligned}
$$
and therefore
\begin{align*}
    &P\left\{\kmax \left(\sqrt{n}\hat\theta_j: j \in I(P)\right)> \hat{c}_{n, I(P)}(1-\alpha, k)\right\} \\
    \leq &P\left\{\kmax \left(\sqrt{n}(\hat\theta_j-\theta_j(P)): j \in I(P)\right)>\hat{c}_{n, I(P)}(1-\alpha, k)\right\} \\
    = &P\left\{\kmax \left(\tilde S_{n,j}: j \in I(P)\right)>\hat{c}_{n, I(P)}(1-\alpha, k)\right\} 
\end{align*}
where the last equality is simply notation. If we can show that the limit of the quantity on the right-hand side is no greater than $\alpha$, the proof is complete. 

Recall, however, that Theorem \ref{lem_hdkapprox} gives us that, for any viable $K$,
\[
\sup _{t \in \mathbb{R}}\left|P\left( \tilde S_{n,K,[k]} \leq t\right)-P\left(N(0,\Sigma)_{K,[k]} \leq t\right)\right| \leq C_2\beta_n + \rho_n 
\]
and Theorem \ref{thm_hdbootinfluence} that, with probability at least $1-\tilde\nu_n$ (where $\tilde\nu_n:=1 /\left(2 n^4\right)+1 / n+3 \nu_n + v_n$),
\[
\sup _{t \in \mathbb{R}}\left|P^B\left( \tilde S_{n,K,[k]}^B \leq t \right)-P\left(N(0,\Sigma)_{K,[k]} \leq t\right)\right| \leq \delta_n + C_8\beta_n
\]
and so by the triangle inequality, with probability at least $1-\tilde\nu_n$,
\[
\sup _{t \in \mathbb{R}}\left| P\left( \tilde S_{n,K,[k]} \leq t\right)-P^B\left( \tilde S_{n,K,[k]}^B \leq t \right)\right| \leq \delta_n + \rho_n + C_9 \beta_n.
\]


Now let $c_{1-\gamma}$ be the $1-\gamma$ quantile of $\tilde S_{n,K,[k]}$, and use $c_{1-\gamma}^B = \hat{c}_{n, K}(1-\gamma, k)$ for shorthand. Then, letting $\check\beta_n:=\delta_n + \rho_n + C_9\beta_n$, with probability at least $1-\tilde \nu_n$ both
\begin{align*}
    P^B \left(\tilde S_{n,K,[k]}^B  \leq c_{1-\alpha+\check\beta_n}\right) \geq P\left(\tilde S_{n,K,[k]} \leq c_{1-\alpha+\check\beta_n}\right) - (\delta_n + \rho_n + C_9 \beta_n) \geq 1-\alpha
\end{align*}
and letting $\tilde \beta_n := (2C_2+2C_9)\beta_n + 4\rho_n  + 2\delta_n$,
\begin{align*}
    P^B \left(\tilde S_{n,K,[k]}^B \leq c_{1-\alpha-\tilde\beta_n}\right) &\leq P \left(\tilde S_{n,K,[k]} \leq c_{1-\alpha-\tilde\beta_n }\right) + \delta_n + \rho_n + C_9 \beta_n \\
    &\leq (1-\alpha -\tilde\beta_n) + 2C_2\beta_n + 2\rho_n + \delta_n + \rho_n + C_9 \beta_n \\
    &= (1-\alpha -\tilde\beta_n) + C_{10}\beta_n + 3\rho_n  + \delta_n \\
    &< 1-\alpha
\end{align*}
where again the second inequality comes from the fact that $$P \left(\tilde S_{n,K,[k]} \leq t + \epsilon\right) \leq P \left(\tilde S_{n,K,[k]} \leq t\right) + C_1 k \epsilon \sqrt{1 \vee \log (p / \epsilon)} + 2(C_2\beta_n + \rho_n)$$ using Lemma A.7 of \cite{ding_gaussian_2025} via Theorem \ref{lem_hdkapprox} (and then, specifically, letting $t=c_{1-\alpha-\tilde\beta_n}-\epsilon$ and then taking $\epsilon \searrow 0$).

This in turn implies that
\[
P(c_{1-\alpha-\tilde\beta_n} < c^B_{1-\alpha} \leq c_{1-\alpha+\check\beta_n}) \geq 1- \tilde\nu_n.
\]
This in turn implies that both
\begin{align*}
    P(\tilde S_{n,K,[k]} > c^B_{1-\alpha}) &\leq P(\tilde S_{n,K,[k]} > c_{1-\alpha-\tilde\beta_n}) + \tilde\nu_n \\
    &\leq \alpha + \tilde\beta_n + \tilde\nu_n 
\end{align*}
and, using Lemma A.7 of \cite{ding_gaussian_2025} via Theorem \ref{lem_hdkapprox} again,
\begin{align*}
    P(\tilde S_{n,K,[k]} > c^B_{1-\alpha}) &\geq P(\tilde S_{n,K,[k]} > c_{1-(\alpha-\check\beta_n)}) - \tilde\nu_n  \\
    &\geq \alpha - \check\beta_n - 2(C_2\beta_n + \rho_n) - \tilde\nu_n.
\end{align*}
These two statements then jointly imply that 
\[
|P(\tilde S_{n,K,[k]}> \hat c_{n,K}(1 - \alpha)) - \alpha| \leq o(1).
\]
So, chaining inequalities, we arrive at
\begin{align*}
     k\text{-FWER}_P \leq P\left\{\tilde S_{n,I(P),[k]}> \hat{c}_{n,I(P)}\left(1-\alpha, k\right)\right\}
    \leq \alpha +o(1)
\end{align*}
and so
\begin{align*}
    \limsup_{n,p\to\infty}
     k\text{-FWER}_P 
    \leq \alpha 
\end{align*}
which is precisely what we wanted to show.

To prove the second statement, consider the $\tilde H_{0,j}$ corresponding to all $\theta_j(P)>0$. 
Note that the critical values are based on $\tilde S_n^B$, which is built
from \textit{estimated} influence functions; by Lemma \ref{lem_sbb_if} (which
uses Assumptions \ref{ass_mom}, \ref{ass_rate}, and \ref{ass_approx}), we
have $$\hat c_{n, [p]}\left(1-\alpha, k\right) \leq \hat c_{n,
[p]}\left(1-\alpha, 1\right) = O_P\left(B_n \sqrt{\log p} \right)$$
because the two conditional bounds from this lemma imply that for every $\varepsilon>0$ there exists $M<\infty$ such that, with unconditional probability tending to one, $P^B(\|\tilde S_n^B\|_\infty > M B_n\sqrt{\log p}) \le \varepsilon$ (and taking $\varepsilon<\alpha$ yields $\hat c_{n,[p]}(1-\alpha,1) \le M B_n\sqrt{\log p}$ with probability tending to one, i.e., $\hat c_{n,[p]}(1-\alpha,k) = O_P(B_n\sqrt{\log p})$).

Furthermore, each $\tilde S_{n,j} = \sqrt{n} \left[\hat{\theta}_{n,j}-\theta_j(P)\right]$ is tight, and so
$$
\sqrt{n} \hat{\theta}_{j} =\sqrt{n} (\hat{\theta}_{j}-\theta_j(P)) + \sqrt{n} \theta_j(P)\xrightarrow{P} \infty.
$$
However, because $\sqrt{n}$ grows faster than $B_n \sqrt{\log p}$ by Assumption \ref{ass_rate}, we have that also 
$$
\frac{\sqrt{n}\hat\theta_{j}}{ B_n\sqrt{\log p} \,\,} \xrightarrow{P} \infty.
$$
Therefore, with probability tending to one, $\sqrt{n}\hat\theta_j >\hat{c}_{n,[p]}\left(1-\alpha, k\right)$,  resulting in the rejection of $\tilde H_{0,j}$ in the first step of Algorithm 2.1, so long as $\theta_j(P)$ is fixed or approaches zero slower than a rate of $B_n \sqrt{\log p/n}$.

To prove the third statement, the asymptotic validity of the streamlined Algorithm 2.2 follows from the separation assumption 
\[
\min_{j \notin I(P)}\theta_j(P) \gg B_n \sqrt{\log p /n}
\]
using the same logic as for the proof of the second statement and the argument of proof of Theorem 3.3 of \cite{romano_control_2007}, i.e., $\min \left(\sqrt{n}\hat\theta_j: j \notin I(P)\right)$ is diverging at rate $\sqrt{n}$, and if any $\theta_j(P)=0$ then $
\max \left(\sqrt{n}\hat\theta_j: j \in I(P)\right) = O_P(B_n \sqrt{\log p})
$.
\end{proof}

\Ascoreest*
\begin{proof}
To show the conditional unbiasedness of the estimator, observe that
\begin{align*}
    E_P\left[\hat \theta_j^\textrm{acc} \mid \hat\eta_j \right]  &= \frac{1}{m} \sum_{i \in\mathcal{I}^\textrm{eval}} E_P[S_{ij}\mid \hat\eta_j]  =\theta_j^\textrm{acc}(\hat\eta_j;P)
\end{align*}
because $S_{ij}$ is drawn i.i.d. under the $\hat\eta_j$-conditional law.

To show $\sqrt{m}$-consistency and asymptotic normality, we make use of Lemma 1 from \cite{modarressi_causal_2025}, a clean statement of a conditional Berry-Esseen theorem for i.n.i.d. data. Noticing that for all $i \in \mathcal{I}^\textrm{eval}$ that $\zeta_{ij} := S_{ij}-\theta_j^\textrm{acc}(\hat\eta_j;P)$ are mean zero and independent, and that their third moments are bounded under the conditional law almost surely (as $S_{ij} \in \{0,1\}$ and $\theta_j^\textrm{acc}(\hat\eta_j;P) \in [0,1]$), using the assumption that $P(S_{ij}=1 \mid \hat\eta_j) \in [\delta, 1-\delta]$, this lemma provides that 
\begin{align*}
    \sup_{t \in \mathbb{R}} \left| P\left(m^{-1/2} \sum_{i \in \mathcal{I}^\textrm{eval}}\zeta_{ij} \leq t \mid \hat\eta_j \right) - P(N(0, \operatorname{Var}(S_{ij} \mid \hat\eta_j)) \leq t \mid \hat\eta_j) \right| \leq C_0 \frac{\sum_{i \in \mathcal{I}^\textrm{eval}} E\left[\left|\zeta_{ij}\right|^3 \mid \hat\eta_j \right]}{\left(\sum_{i \in \mathcal{I}^\textrm{eval}} E\left[\zeta_{ij}^2 \mid \hat\eta_j\right]\right)^{3 / 2}}
\end{align*}
for some universal constant $C_0 < \infty$, almost surely. The left-hand side of this inequality is exactly the left-hand side of the convergence statement in the proposition being proved. As for the right-hand side, note that then, given our previous observations and assumptions, $E\left[\left|\zeta_{ij}\right|^3 \mid \hat\eta_j \right] \leq 1 $ almost surely, and $E\left[\zeta_{ij}^2 \mid \hat\eta_j\right] =\operatorname{Var}(S_{ij} \mid \hat\eta_j) \geq \delta(1-\delta) = C_1 > 0$, almost surely. Thus we have that
\begin{align*}
    C_0 \frac{\sum_{i \in \mathcal{I}^\textrm{eval}} E\left[\left|\zeta_{ij}\right|^3 \mid \hat\eta_j \right]}{\left(\sum_{i \in \mathcal{I}^\textrm{eval}} E\left[\zeta_{ij}^2 \mid \hat\eta_j\right]\right)^{3 / 2}} \leq C_0\frac{ m}{(C_1m)^{3/2}} = \frac{C_0}{C_1^{3/2}} m^{-1/2} \asymp m^{-1/2}.
\end{align*}
Taking $m \to\infty$ then proves the stated claim.
\end{proof}

\Hdreg*
\begin{proof}
Let $\|\cdot\|_{p}$ denote the usual $\ell^p$ norm for a finite-dimensional vector, and let
$\|v\|_{p,n}:=E_n[|v_i|^p]^{1/p}$ denote the $L^p$ norm under the empirical measure.
Lastly, let the $L^p$ norm under probability measure $P$ be $\|\cdot\|_{p,P}$.
We note two inequalities used repeatedly: for $n$-dimensional vectors $v,u$, H\"older's inequality gives
\[
\|vu\|_{2,n}\le \|v\|_{4,n}\|u\|_{4,n},
\]
and for conformable matrix $A$ and vector $v$, $\|Av\|_2\le \|A\|_{\mathrm{op}}\|v\|_2$.

\noindent
\textbf{Preliminaries.}
Write $\gamma_T:=\Sigma_D^{-1}E_P[DT]$, so
$\widetilde T=T-\gamma_T^{\mathtt T}D$. Since $T$ and $D$ are subgaussian and $d$ is fixed,
$\widetilde T$ is subgaussian. Hence $E_P[\widetilde T^4]<\infty$ and $\|\widetilde T\|_{4,n}=O_P(1)$ by the WLLN.

Let $\gamma_{Y_j}:=\Sigma_D^{-1}E_P[DY_j]$. Then $\widetilde Y_j=Y_j-\gamma_{Y_j}^{\mathtt T}D$ and is subgaussian by the same argument, and $E_P[\widetilde Y_j^4]<\infty$.

Since $d$ is fixed and $\max_{j\le p}\|Y_j\|_{\psi_2}\le C_3$, we have $\sup_{j\le p}E_P[Y_j^2]\le C<\infty$.
By Cauchy-Schwarz,
\[
\|E_P[DY_j]\|_2 \le \sqrt{E_P[\|D\|_2^2]}\,\sqrt{E_P[Y_j^2]} \le C,
\]
uniformly in $j$ because $\|D\|_2^2$ is subexponential. Therefore, using (ND),
\[
\sup_{j\le p}\|\gamma_{Y_j}\|_2
\le \|\Sigma_D^{-1}\|_{\mathrm{op}}\sup_{j\le p}\|E_P[DY_j]\|_2
\le C<\infty.
\]
Similarly, by Cauchy-Schwarz and (ND),
\[
|\theta_j|
=\Omega^{-1}|E_P[\widetilde T\widetilde Y_j]|
\le \Omega^{-1}\sqrt{E_P[\widetilde T^2]E_P[\widetilde Y_j^2]}
\le C,
\]
uniformly in $j$. Hence $\widetilde U_j=\widetilde Y_j-\theta_j\widetilde T$ is also uniformly subgaussian.

Also, because $d$ is fixed and each $D_\ell$ is subgaussian, $E_P[\|D\|_2^4]<\infty$, and thus
\[
\big\|\|D_i\|_2\big\|_{4,n} = \Big(E_n[\|D_i\|_2^4]\Big)^{1/4}=O_P(1)
\]
by the WLLN (this will be used to control $\|e_T\|_{4,n}$ and $\|e_{Y_j}\|_{4,n}$).

Lastly, we will use that $\max_{j\le p}\|\widetilde Y_j\|_{4,n}=O_P(1)$ and $\max_{j\le p}\|\widetilde U_j\|_{4,n}=O_P(1)$. Concretely, $\|\widetilde Y_{ij}^4\|_{\psi_{1/2}} = \|\widetilde
Y_{ij}\|_{\psi_2}^4 \le C$ uniformly in $j$ by (SG) and Corollary 4 of
\cite{zhang_sharper_2021}, and hence $\|\widetilde Y_{ij}^4 - E[\widetilde
Y_{ij}^4]\|_{\psi_{1/2}} \le C$ by their Proposition 1. Thus Theorem 1(b)
of \cite{zhang_sharper_2021} (applied with $\theta = 1/2$ and equal
weights $w_i = n^{-1}$, so that $\|b\|_2 \lesssim n^{-1/2}$ and $\|b\|_2
L_n(\theta,b) \lesssim n^{-1}$ there), evaluated at $t = s + \log p$
together with a union bound over $j\in[p]$, yields
\[
\max_{j\le p}\left|\frac1n\sum_{i=1}^n \widetilde Y_{ij}^4 -
E_P[\widetilde Y_{ij}^4]\right|
=O_P\left( \sqrt{\frac{\log p}{n}}+\frac{(\log p)^2}{n}\right).
\]
Since $\sup_j E_P[\widetilde Y_j^4]<\infty$, and using the H\"older continuity bound
$|a^{1/4}-b^{1/4}|\le |a-b|^{1/4}$ for $a,b\ge0$, and applying Assumption \ref{ass_rate}, we obtain
\[
\max_{j\le p}\|\widetilde Y_j\|_{4,n}=O_P(1).
\]
Then $\widetilde U_j=\widetilde Y_j-\theta_j\widetilde T$ with $\sup_j|\theta_j|<\infty$ and $\|\widetilde T\|_{4,n}=O_P(1)$ gives
\[
\max_{j\le p}\|\widetilde U_j\|_{4,n}=O_P(1).
\]

\noindent
\textbf{Decomposition.}
We decompose
\[
\widehat\psi_{ij}-\psi_{ij}
=\big(\widehat\Omega^{-1}-\Omega^{-1}\big)\widehat{\widetilde T}_i\widehat{\widetilde U}_{ij}
+\Omega^{-1}e_{T,i}\widehat{\widetilde U}_{ij}
+\Omega^{-1}\widetilde T_i\big(\widehat{\widetilde U}_{ij}-\widetilde U_{ij}\big),
\]
where $e_{T,i}:=\widehat{\widetilde T}_i-\widetilde T_i=-D_i^{\mathtt T}(\widehat\gamma_T-\gamma_T)$.
By Minkowski's inequality,
\[
\max_{j\le p}\|\widehat\psi_{\cdot j}-\psi_{\cdot j}\|_{2,n}
\le (I)+(II)+(III),
\]
where
\[
\begin{aligned}
(I)&:=\max_{j\le p}\big\|\big(\widehat\Omega^{-1}-\Omega^{-1}\big)\widehat{\widetilde T}_i\widehat{\widetilde U}_{ij}\big\|_{2,n},\\
(II)&:=\max_{j\le p}\big\|\Omega^{-1}e_{T,i}\widehat{\widetilde U}_{ij}\big\|_{2,n},\\
(III)&:=\max_{j\le p}\big\|\Omega^{-1}\widetilde T_i(\widehat{\widetilde U}_{ij}-\widetilde U_{ij})\big\|_{2,n}.
\end{aligned}
\]

\noindent
\textbf{(I).}
By H\"older,
\[
(I)\le \big|\widehat\Omega^{-1}-\Omega^{-1}\big|\,\|\widehat{\widetilde T}\|_{4,n}\,\max_{j\le p}\|\widehat{\widetilde U}_j\|_{4,n}.
\]

We may write
\[
\widehat\Omega-\Omega = \big(E_n[\widetilde T^2]-\Omega\big) + E_n[\widehat{\widetilde T}^2-\widetilde T^2].
\]
The first term is $O_P(n^{-1/2})$ (by Bernstein/CLT since $\widetilde T^2$ is subexponential).
For the second term, note
\[
\widehat{\widetilde T}^2-\widetilde T^2 = (\widehat{\widetilde T}+\widetilde T)e_T,
\]
so by Cauchy-Schwarz and the identity $\widehat{\widetilde T}=\widetilde T+e_T$,
\[
\big|E_n[(\widehat{\widetilde T}+\widetilde T)e_T]\big|
\le \|\widehat{\widetilde T}+\widetilde T\|_{2,n}\|e_T\|_{2,n}
\le (\|e_T\|_{2,n}+2\|\widetilde T\|_{2,n})\|e_T\|_{2,n}.
\]

Recall $e_{T,i}=-D_i^{\mathtt T}(\widehat\gamma_T-\gamma_T)$, where
\[
\widehat\gamma_T:=(E_n[DD^{\mathtt T}])^{-1}E_n[DT], \qquad \gamma_T:=\Sigma_D^{-1}E_P[DT].
\]
Then
\[
\widehat\gamma_T-\gamma_T
=\big((E_n[DD^{\mathtt T}])^{-1}-\Sigma_D^{-1}\big)E_n[DT]
+\Sigma_D^{-1}\big(E_n[DT]-E_P[DT]\big).
\]
Using $A^{-1}-B^{-1}=A^{-1}(B-A)B^{-1}$ and operator-norm submultiplicativity,
\[
\big\|(E_n[DD^{\mathtt T}])^{-1}-\Sigma_D^{-1}\big\|_{\mathrm{op}}
\le \big\|(E_n[DD^{\mathtt T}])^{-1}\big\|_{\mathrm{op}}
\big\|\Sigma_D-E_n[DD^{\mathtt T}]\big\|_{\mathrm{op}}
\big\|\Sigma_D^{-1}\big\|_{\mathrm{op}}.
\]
By (ND), $\|\Sigma_D^{-1}\|_{\mathrm{op}}<\infty$.
Since $d$ is fixed and each $D_aD_b$ is subexponential, $\|\Sigma_D-E_n[DD^{\mathtt T}]\|_{\mathrm{op}}=O_P(n^{-1/2})$ (e.g.\ via entrywise Bernstein and $\|\cdot\|_{\mathrm{op}}\le \|\cdot\|_F$, the Frobenius norm).
Thus $E_n[DD^{\mathtt T}]\to_P \Sigma_D$ in operator norm, hence $\|(E_n[DD^{\mathtt T}])^{-1}\|_{\mathrm{op}}=O_P(1)$.
Finally, $\|E_n[DT]\|_2=O_P(1)$ and $\|E_n[DT]-E_P[DT]\|_2=O_P(n^{-1/2})$ since $d$ is fixed and $D_\ell T$ is subexponential.
Therefore,
\[
\|\widehat\gamma_T-\gamma_T\|_2 = O_P(n^{-1/2}).
\]
Then for each $i$, $|e_{T,i}|\le \|D_i\|_2\|\widehat\gamma_T-\gamma_T\|_2$, so
\[
\|e_T\|_{4,n}\le \|\widehat\gamma_T-\gamma_T\|_2\big\|\|D_i\|_2\big\|_{4,n} = O_P(n^{-1/2}),
\]
and hence $\|e_T\|_{2,n}\le \|e_T\|_{4,n}=O_P(n^{-1/2})$. Using also $\|\widetilde T\|_{2,n}=O_P(1)$, we conclude
\[
|\widehat\Omega-\Omega|=O_P(n^{-1/2}).
\]

By the WLLN, $E_n[\widetilde T^2]\to \Omega$ in probability. Since $\widehat\Omega=E_n[\widehat{\widetilde T}^2]
=E_n[\widetilde T^2]+o_P(1)$ (because $E_n[\widehat{\widetilde T}^2-\widetilde T^2]=O_P(n^{-1/2})$),
we have $\widehat\Omega\to \Omega$ in probability. In particular,
\[
P(\widehat\Omega\ge C_1/2)\to 1.
\]
On the event $\{\widehat\Omega\ge C_1/2\}$,
\[
\big|\widehat\Omega^{-1}-\Omega^{-1}\big|
=\frac{|\widehat\Omega-\Omega|}{\widehat\Omega\,\Omega}
\le \frac{2}{C_1^2}|\widehat\Omega-\Omega|
=O_P(n^{-1/2}).
\]

To control $\|\widehat{\widetilde T}\|_{4,n}$, note $\widehat{\widetilde T}=\widetilde T+e_T$, so by Minkowski,
\[
\|\widehat{\widetilde T}\|_{4,n}\le \|\widetilde T\|_{4,n}+\|e_T\|_{4,n}=O_P(1).
\]

Finally, consider $\max_{j\le p}\|\widehat{\widetilde U}_j\|_{4,n}$:
\[
\max_{j\le p}\|\widehat{\widetilde U}_j\|_{4,n}
\le \max_{j\le p}\|\widehat{\widetilde Y}_j\|_{4,n}
+ \max_{j\le p}|\widehat\theta_j|\,\|\widehat{\widetilde T}\|_{4,n}.
\]
Write $\widehat{\widetilde Y}_j=\widetilde Y_j+e_{Y_j}$ with
\[
e_{Y_j,i}:=\widehat{\widetilde Y}_{ij}-\widetilde Y_{ij}=-D_i^{\mathtt T}(\widehat\gamma_{Y_j}-\gamma_{Y_j}),
\qquad
\widehat\gamma_{Y_j}:=(E_n[DD^{\mathtt T}])^{-1}E_n[DY_j].
\]
Then $|e_{Y_j,i}|\le \|D_i\|_2\|\widehat\gamma_{Y_j}-\gamma_{Y_j}\|_2$ implies
\[
\max_{j\le p}\|e_{Y_j}\|_{4,n}
\le \big\|\|D_i\|_2\big\|_{4,n}\,\max_{j\le p}\|\widehat\gamma_{Y_j}-\gamma_{Y_j}\|_2.
\]

Decompose
\[
\widehat\gamma_{Y_j}-\gamma_{Y_j}
=(E_n[DD^{\mathtt T}])^{-1}\big(E_n[DY_j]-E_P[DY_j]\big)
+\big((E_n[DD^{\mathtt T}])^{-1}-\Sigma_D^{-1}\big)E_P[DY_j].
\]
Since $d$ is fixed, $(E_n[DD^{\mathtt T}])^{-1}=O_P(1)$ in operator norm, and
$\big\|(E_n[DD^{\mathtt T}])^{-1}-\Sigma_D^{-1}\big\|_{\mathrm{op}}=O_P(n^{-1/2})$ as shown above.
Also $\sup_j\|E_P[DY_j]\|_2<\infty$ from the earlier uniform bound.
For the empirical term, each component of $DY_j$ is a product of subgaussians and hence subexponential uniformly in $j$;
thus entrywise Bernstein plus a union bound over $j\le p$ (and $\ell\le d$) yields
\[
\max_{j\le p}\|E_n[DY_j]-E_P[DY_j]\|_2 = O_P\!\Big(\sqrt{\frac{\log p}{n}}\Big).
\]
Therefore,
\[
\max_{j\le p}\|\widehat\gamma_{Y_j}-\gamma_{Y_j}\|_2
=O_P\!\Big(\sqrt{\frac{\log p}{n}}\Big).
\]
Consequently, since $\big\|\|D_i\|_2\big\|_{4,n}=O_P(1)$,
\[
\max_{j\le p}\|e_{Y_j}\|_{4,n}=O_P\!\Big(\sqrt{\frac{\log p}{n}}\Big).
\]

Then, using $\max_j\|\widetilde Y_j\|_{4,n}=O_P(1)$ from the preliminaries and Minkowski,
\[
\max_{j\le p}\|\widehat{\widetilde Y}_j\|_{4,n}
\le \max_{j\le p}\|\widetilde Y_j\|_{4,n}+\max_{j\le p}\|e_{Y_j}\|_{4,n}
=O_P(1).
\]

Lastly, consider $\widehat\theta_j-\theta_j$:
\[
\widehat\theta_j-\theta_j
=\widehat\Omega^{-1}\big(E_n[\widehat{\widetilde T}\widehat{\widetilde Y}_j]-E_P[\widetilde T\widetilde Y_j]\big)
+(\widehat\Omega^{-1}-\Omega^{-1})E_P[\widetilde T\widetilde Y_j].
\]
We already established $|\widehat\Omega^{-1}-\Omega^{-1}|=O_P(n^{-1/2})$ and $\sup_j|E_P[\widetilde T\widetilde Y_j]|<\infty$.
For the remaining term,
\[
E_n[\widehat{\widetilde T}\widehat{\widetilde Y}_j]-E_P[\widetilde T\widetilde Y_j]
=
\big(E_n[\widetilde T\widetilde Y_j]-E_P[\widetilde T\widetilde Y_j]\big)
+E_n[e_T\widehat{\widetilde Y}_j]+E_n[\widetilde T e_{Y_j}],
\]
and (as previously done) Bernstein/union bounds for $\max_j|E_n[\widetilde T\widetilde Y_j]-E_P[\widetilde T\widetilde Y_j]|$,
together with H\"older and the established bounds $\|e_T\|_{4,n}=O_P(n^{-1/2})$,
$\max_j\|\widehat{\widetilde Y}_j\|_{4,n}=O_P(1)$, and $\max_j\|e_{Y_j}\|_{4,n}=O_P(\sqrt{\log p/n})$, yield
\[
\max_{j\le p}|\widehat\theta_j-\theta_j|
=O_P\!\Big(\sqrt{\frac{\log p}{n}}\Big).
\]
Since $\sup_j|\theta_j|<\infty$, this implies $\max_{j\le p}|\widehat\theta_j|=O_P(1)$.
Therefore,
\[
\max_{j\le p}\|\widehat{\widetilde U}_j\|_{4,n}=O_P(1).
\]

Combining all of these results, we conclude
\[
(I)=O_P(n^{-1/2}).
\]

\noindent
\textbf{(II).}
By H\"older,
\[
(II)\le \Omega^{-1}\|e_T\|_{4,n}\max_{j\le p}\|\widehat{\widetilde U}_j\|_{4,n}.
\]
Recalling $\|e_T\|_{4,n}=O_P(n^{-1/2})$ and $\max_{j\le p}\|\widehat{\widetilde U}_j\|_{4,n}=O_P(1)$, we immediately get
\[
(II)=O_P(n^{-1/2}).
\]

\noindent
\textbf{(III).}
By H\"older,
\[
(III)\le \Omega^{-1}\|\widetilde T\|_{4,n}\max_{j\le p}\|\widehat{\widetilde U}_j-\widetilde U_j\|_{4,n}.
\]
We have already concluded $\|\widetilde T\|_{4,n}=O_P(1)$. Also,
\[
\widehat{\widetilde U}_{ij}-\widetilde U_{ij}
= e_{Y_j,i}-(\widehat\theta_j-\theta_j)\widehat{\widetilde T}_i-\theta_j e_{T,i}.
\]
Thus by Minkowski,
\[
\max_{j\le p}\|\widehat{\widetilde U}_j-\widetilde U_j\|_{4,n}
\le
\max_{j\le p}\|e_{Y_j}\|_{4,n}
+\max_{j\le p}|\widehat\theta_j-\theta_j|\,\|\widehat{\widetilde T}\|_{4,n}
+\|e_T\|_{4,n}\max_{j\le p}|\theta_j|.
\]
Using the previously established bounds
\[
\max_{j\le p}\|e_{Y_j}\|_{4,n}=O_P\!\Big(\sqrt{\frac{\log p}{n}}\Big), \quad
\max_{j\le p}|\widehat\theta_j-\theta_j|=O_P\!\Big(\sqrt{\frac{\log p}{n}}\Big),
\]
\[
\|\widehat{\widetilde T}\|_{4,n}=O_P(1),\quad
\|e_T\|_{4,n}=O_P(n^{-1/2}),\quad
\max_{j\le p}|\theta_j|=O(1),
\]
we conclude
\[
(III)=O_P\!\Big(\sqrt{\frac{\log p}{n}}\Big).
\]

\noindent
\textbf{Rates.}
The preceding analyses yield
\[
\max_{j\le p}\|\widehat\psi_{\cdot j}-\psi_{\cdot j}\|_{2,n}
=O_P\!\Big(\sqrt{\frac{\log p}{n}}\Big).
\]

Under Assumption \ref{ass_rate}, we have
\[
\sqrt{\frac{\log p}{n}} = o\!\left(\frac{1}{\log(pn)}\right).
\]
Therefore,
\[
\max_{j\le p}\sqrt{E_n[(\widehat\psi_{ij}-\psi_{ij})^2]}
=
\max_{j\le p}\|\widehat\psi_{\cdot j}-\psi_{\cdot j}\|_{2,n}
=
o_P\!\left(\frac1{\log(pn)}\right).
\]
\end{proof}

\Hdlin*
\begin{proof}
By the triangle inequality,
\[
\|R_n\|_\infty
\le
\big|\widehat\Omega^{-1}-\Omega^{-1}\big|
\max_{j\le p}\big|\sqrt{n}E_n[\widetilde T\,\widetilde U_j]\big|
+
\big|\widehat\Omega^{-1}\big|
\max_{j\le p}\big|\sqrt{n}E_n[(\widehat{\widetilde T}-\widetilde T)\widetilde U_j]\big|.
\]
From the proof of Proposition \ref{thm_hdreg}, we know that
\[
\big|\widehat\Omega^{-1}-\Omega^{-1}\big|=O_P(n^{-1/2}),
\quad
\big|\widehat\Omega^{-1}\big|=O_P(1), \quad \max_{j\le p}\big|E_n[\widetilde T\,\widetilde U_j]\big|
=
O_P\!\left(\sqrt{\frac{\log p}{n}}\right)
\]
(specifically recalling that $\widetilde T\widetilde U_j$ is subexponential from the work done in Proposition \ref{thm_hdreg} and that $E[\widetilde{T} \widetilde{U}_j]=0$).
Also, since
\[
\widehat{\widetilde T}_i-\widetilde T_i
=
-D_i^{\mathtt T}(\widehat\gamma_T-\gamma_T),
\]
we have for each \(j\),
\[
E_n[(\widehat{\widetilde T}-\widetilde T)\widetilde U_j]
=
-(\widehat\gamma_T-\gamma_T)^{\mathtt T}E_n[D\,\widetilde U_j].
\]
Hence using Cauchy-Schwarz,
\[
\max_{j\le p}\big|E_n[(\widehat{\widetilde T}-\widetilde T)\widetilde U_j]\big|
\le
\|\widehat\gamma_T-\gamma_T\|_2
\max_{j\le p}\|E_n[D\,\widetilde U_j]\|_2.
\]
Again by the proof of Proposition \ref{thm_hdreg},
\[
\|\widehat\gamma_T-\gamma_T\|_2 = O_P(n^{-1/2}),
\]
while entrywise Bernstein and a union bound over \(j\le p\) and the fixed coordinates of \(D\) give
\[
\max_{j\le p}\|E_n[D\,\widetilde U_j]\|_2
=
O_P\!\left(\sqrt{\frac{\log p}{n}}\right),
\]
since each coordinate of \(D\widetilde U_j\) is subexponential uniformly in \(j\) and its expectation is zero (and then using the same logic as used in the proof of Proposition \ref{thm_hdreg}). Therefore,
\[
\max_{j\le p}\big|E_n[(\widehat{\widetilde T}-\widetilde T)\widetilde U_j]\big|
=
O_P\!\left(\frac{\sqrt{\log p}}{n}\right).
\]
Combining the preceding results then,
\[
\|R_n\|_\infty
\le
O_P(n^{-1/2})\sqrt{n}O_P\!\left(\sqrt{\frac{\log p}{n}}\right)
+
O_P(1)\sqrt{n}O_P\!\left(\frac{\sqrt{\log p}}{n}\right)
=
O_P\!\left(\sqrt{\frac{\log p}{n}}\right).
\]
Under Assumption \ref{ass_rate},
$\sqrt{\log p / n}\cdot \sqrt{\log(pn)} \le \log(pn)/\sqrt{n} = o(1)$, so
$\|R_n\|_\infty = O_P(\sqrt{\log p/n}) = o_P\big(1/\sqrt{\log(pn)}\big)$,
proving the stated claim.
\end{proof}

\end{document}